\setlist{topsep=3pt,itemsep=0pt}
\newtheorem{theorem}{Theorem}
\newtheorem{definition}[theorem]{Definition}
\newtheorem{lemma}[theorem]{Lemma}
\newtheorem{obs}[theorem]{Observation}
\newcommand{\ceil}[1]{{\left\lceil#1\right\rceil}}
\newcommand{\set}[1]{{\left\{#1\right\}}}
\newcommand{\swap}{\mathsf{swap}}
\newcommand{\open}{\mathsf{open}}
\newcommand{\close}{\mathsf{close}}
\newcommand{\R}{{\mathbb{R}}}
\newcommand{\cc}{\mathsf{cc}}
\mathchardef\mhyphen="2D
\newcommand{\cost}{\mathsf{cost}}
\newcommand{\poly}{\mathrm{poly}}
\newcommand{\last}{{\mathsf{last}}}
\newcommand{\FL}{{\mathsf{FL}}}
\newcommand{\calS}{\mathcal{S}}
\newcommand{\calP}{{\mathcal{P}}}
\newcommand{\calQ}{{\mathcal{Q}}}
\newcommand{\opt}{\mathsf{opt}}
\newcommand{\LB}{{\mathsf{LB}}}
\newcommand{\level}{{\mathsf{level}}}
\newcommand{\rmF}{{\mathrm{F}}}
\newcommand{\rmC}{{\mathrm{C}}}
\newcommand{\best}{{\mathrm{best}}}
\DeclareMathOperator*\E{\mathbb{E}}
	\newcommand{\sh}[1]{{\color{red} #1}}
	\def\rem#1{{\marginpar{\raggedright\scriptsize #1}}}
	\newcommand{\shr}[1]{\rem{\small\textcolor{red}{$\bullet${\tiny Shi: #1\\}}}}
	\newcommand{\janr}[1]{\rem{\small\textcolor{brown}{$\bullet${\tiny Janard: #1\\}}}}
	\newcommand{\xguor}[1]{\rem{\small\textcolor{blue}{$\bullet${\tiny Xiangyu: #1\\}}}}
	\newcommand{\remove}[1]{{\color{lightgray} #1}}
	\newcommand{\sh}[1]{#1}
	\newcommand{\shr}[1]{}
	\newcommand{\janr}[1]{}
	\newcommand{\xguor}[1]{}
	\newcommand{\jaiyir}[1]{}
	\newcommand{\remove}[1]{}
\begin{document}

\title{The Power of Recourse: Better Algorithms for Facility Location in Online and Dynamic Models}

\author{ 
		Xiangyu Guo \thanks{Department of Computer Science and Engineering, University at Buffalo, {\tt xiangyug@buffalo.edu }} \and
		Janardhan Kulkarni \thanks{The Algorithms Group, Microsoft Research, Redmond, {\tt jakul@microsoft.com}} \and
		Shi Li \thanks{Department of Computer Science and Engineering, University at Buffalo, {\tt shil@buffalo.edu}} \and
		Jiayi Xian \thanks{Department of Computer Science and Engineering, University at Buffalo, {\tt jxian@buffalo.edu}}
	   }

\date{}

\maketitle
\begin{abstract}
	In this paper we study the facility location problem in the online with recourse and dynamic algorithm models. In the online with recourse model, clients arrive one by one and our algorithm needs to maintain good solutions at all time steps with only a few changes to the previously made decisions (called recourse).  
	We show that the classic local search technique can lead to a $(1+\sqrt{2}+\epsilon)$-competitive online algorithm for facility location with only $O\left(\frac{\log n}{\epsilon}\log\frac1\epsilon\right)$ amortized facility and client recourse.
	
	We then turn to the dynamic algorithm model for the problem, where the main goal is to design fast algorithms that maintain good solutions at all time steps.  We show that the result for online facility location, combined with the randomized local search technique of Charikar and Guha \cite{CharikarGhua2005}, leads to an $O(1+\sqrt{2}+\epsilon)$ approximation dynamic algorithm with amortized update time of $\tilde O(n)$ in the incremental setting against adaptive adversaries. Notice that the running time is almost optimal, since in general metric space, it takes $\Omega(n)$ time to specify a new client's position.
	The approximation factor of our algorithm also matches the best offline analysis of the classic local search algorithm.
	
	Finally, we study the fully dynamic model for facility location, where clients can both arrive and depart. Let $F$ denote the set of available facility locations. Our main result is an $O(1)$-approximation algorithm in this model with $O(|F|)$ preprocessing time and $O(\log^3 D)$ amortized update time for the HST metric spaces. Using the seminal results of Bartal \cite{Bartal96} and Fakcharoenphol, Rao and Talwar \cite{Fakcharoenphol2003}, which show that any arbitrary $N$-point metric space can be embedded into a distribution over HSTs such that the expected distortion  is  at most $O(\log N)$, we obtain a $O(\log |F|)$ approximation with preprocessing time of $O(|F|^2\log |F|)$ and $O(\log^3 D)$ amortized update time. The approximation guarantee holds in expectation for every time step of the algorithm, and the result holds in the oblivious adversary model. 
\end{abstract} 
\section{Introduction}
\label{sec:intro}
In the {\em (uncapacitated) facility location problem}, we are given a metric space $(F \cup C,d)$, where $F$ is the set of facility locations, $C$ is the set of clients, and $d: (F \cup C) \times (F \cup C) \rightarrow \R_{\geq 0}$ is a distance function, which is non-negative, symmetric and satisfies triangle inequalities.  For each location $i \in F$, there is a facility opening cost $f_i \geq 0$. 
The goal is open a subset $S \subseteq F$ of facilities so as to minimize cost of opening the facilities and the connection cost. 
The cost of connecting a client $j$ to an open facility $i$ is equal to $d(j,i)$. 
Hence, the objective function can be expressed concisely as $\min_{S\subseteq F} \left(f(S)  + \sum_{j \in C}d(j, S)\right)$, where for a set $S \subseteq F$, $f(S) := \sum_{i \in S}f_i$ is the total facility cost of $S$ and $d(j, S):=\min_{i \in S}d(j, i)$ denotes the distance of $j$ to the nearest location in $S$. 
The facility location problem arises in countless applications: in the placement of servers in data centers, network design, wireless networking, data clustering, location analysis for placement of fire stations, medical centers, and so on.
Hence, the problem has been studied extensively in many different communities: approximation algorithms, operations research, and computational geometry.
In the approximation algorithms literature  in particular, the problem occupies a prominent position as the  development of every major technique in the field is tied to its application on the facility location problem. See the text book by Williamson and Shmoys \cite{Williamson} for more details.
The problem is hard to approximate to a factor better than 1.463 \cite{Guha1998}. 
The current best-known polynomial-time algorithm is given by the third author, and achieves 1.488-approximation \cite{Li13}.

In many real-world applications the set of clients arrive online, the metric space can change over time, and there can be memory constraints:
This has motivated the problem to be studied in various models: online \cite{Meyerson2001,Fotakis08algorithmica,Anagnostopoulos2004,Fotakis2007}, dynamic \cite{Cohen-Addad19, Goranci19,CyganCMS18,Wesolowsky:1973,Farahani2009, Eisenstat,AnNS15}, incremental \cite{Fotakis06,Charikar1997,Fotakis2011}, streaming \cite{Indyk2004, Fotakis11, Lammersen, Czumaj2013, Charikar1997}, game theoretic \cite{Vetta2002,FotakisT13,FotakisT13a}, to name a few. 
This paper is concerned with {\em online and dynamic models}. 
Thus to keep the flow of presentation linear, we restrict ourselves to the results in these two models here.
\remove{In the Section , we put our results in the broader context. }

\medskip
Motivated by its applications in network design and data clustering, Meyerson \cite{Meyerson2001} initiated the study of facility location problem in the online setting.
Here, clients arrive online one-by-one, the algorithm has to assign the newly arriving client to an already opened facility or needs to open a new facility to serve the request.
The decisions made by the algorithm are {\em irrevocable}, in the sense that a facility that is opened cannot be closed and the clients cannot be reassigned.
In the online setting, Meyerson \cite{Meyerson2001} designed a very elegant randomized algorithm that achieves an $O(\log n)$ competitive ratio, and  also showed that no online algorithm can obtain $O(1)$ competitive ratio.
This result was later extended by Fotakis \cite{Fotakis08algorithmica} to obtain an {\em asymptotically optimal} $O(\log n/\log \log n)$-competitive algorithm.
Both the algorithms and analysis techniques in \cite{Fotakis08algorithmica, Meyerson2001} were influential, 
and found many applications in other models such as streaming \cite{Fotakis2011}.
\shr{I think the lower bound was shown in a different paper; and the bound of Meyerson was $O(\log n)$.} The lowerbound in Fotakis \cite{Fotakis08algorithmica}  holds even in very special metric spaces such as HSTs or the real line. 
Since then, several online algorithms have been designed achieving the same competitive ratio with more desirable properties such as deterministic \cite{Anagnostopoulos2004}, primal-dual \cite{Fotakis2007}, or having a small memory footprint \cite{Fotakis11}.
We refer to a beautifully written survey by Fotakis \cite{Fotakis2011} for more details.

The main reason to assume that decisions made by an algorithm are irrevocable is because the cost of changing the solution is expensive in some applications.
However, if one examines these above applications closely, say for example connecting clients to servers in data centers, it is more natural to assume that decisions need not be irrevocable but the algorithm {\em should not change the solution too much}.
This is even more true in modern data centers where topologies can be reconfigured; see \cite{GhobadiMPDKRBRG16} for more details.
A standard way of quantifying the restriction that an online algorithm does not make too many changes is using the notion of {\em recourse}. 
The recourse per step of an online algorithm is the {\em number of changes} it makes to the solution.
Recourse captures the {\em minimal} amount of changes an online algorithm {\em has to make} to maintain a  desired competitive ratio due to the {\em information theoretic} limits.
For the facility location problem, depending on the application, the recourse can correspond to: 1) the number of changes made to the opened facilities (called \emph{facility recourse})  2) the number of reconnections made to the clients (called \emph{client recourse}).
\sh{Notice that we can assume for every facility we open/close, we have to connect/disconnect at least one client. Thus the client recourse is at least the facility recourse.}
In the clustering applications arising in massive data sets, the opened facilities represent cluster centers, which represent summaries of data. Here one is interested in making sure that summaries do not change too frequently as more documents are added online. 
Therefore, facility recourse is a good approximation to the actual cost of changing the solution \cite{Charikar1997,Fotakis06}. 
On the other hand, in network design problems, client recourse is the true indicator of the cost to implement the changes in the solution.
As a concrete example, consider the problem of connecting clients to servers in datacenters, which was one of the main motivation for Meyerson \cite{Meyerson2001} to initiate the study of online facility location problem.
Here, it is important that one does not reconnect clients to servers too many times, as such changes can incur significant costs both in terms of disruption of service and the labor cost. \sh{Consider another scenario where a retailing company tries to maintain stores to serve the dynamically changing set of clients.  As the clients are changing so frequently, it would be infeasible to build/shutdown even one store for every new client. In this application, small client recourse per step is desirable, as that will automatically forbid frequent changes of store locations.}\xguor{Discuss the NeurIPS'19 paper on fully dynamic facility location by Cohen-Addad et~al.\cite{Cohen-Addad19}}

In this light, a natural question that arises is:   \shr{Also the facility recourse is always smaller than the client recourse since every time we open or close a facility, at least one client will be reconnected. Do we need to mention facility recourse in the results? Can we say this here? }

\vspace{2mm}
{\em Is it possible to maintain a constant approximation for the facility location problem if we require that  the facility and client recourse is small?}   
\vspace{2mm}

Our first main result shows that indeed this is possible. In the following theorems, we use $n$ to denote the total number of facility locations and all clients that ever arrived, and $D$ to denote the diameter of the metric $d$ (assuming all distances are integers).
\begin{theorem}
\label{UFL-recourse}
There is a deterministic online algorithm for the facility location problem that achieves a competitive ratio of $(1+\sqrt{2} + \epsilon)$ with $O\left(\frac{\log n}{\epsilon}\log\frac1\epsilon\right)$ amortized facility and client recourse against an adaptive adversary.
\end{theorem}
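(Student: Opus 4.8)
The plan is to run, online, the classical local-search algorithm for facility location with scaled facility costs — whose locality gap is $1+\sqrt2$, following Charikar--Guha and Arya et al.\ --- and to keep the recourse small by re-optimizing \emph{lazily}: after each arrival we restart the search from the previous solution and perform only moves, and client reassignments, that make real multiplicative progress, bounding the total amount of change by an amortized (potential-function) argument.

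\textbf{Normalization and the offline ingredient.} I would first normalize the instance so that the minimum positive distance is $1$ and, by rounding distances to powers of $1+\epsilon$ and truncating facility costs and far-away contributions against a running lower bound on the optimum, make the aspect ratio $\poly(n/\epsilon)$, so that $\log(\text{aspect ratio})=O(\log n+\log\tfrac1\epsilon)$; this is standard and costs only a $(1+O(\epsilon))$ factor in $\OPT_t$ at every step $t$. On this instance I scale facility costs by $\lambda=\sqrt2$ and invoke the (approximate) locality-gap statement: if $S$ routes every client to its nearest open facility and no $\mathsf{add}$, $\close$, or $\swap$ operation improves the scaled objective $\cost'(S):=\lambda f(S)+\sum_j d(j,S)$ by more than the prescribed factor, then $\cost(S)\le(1+\sqrt2+\epsilon)\,\OPT$.

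\textbf{The online algorithm and its competitive ratio.} Maintain a solution $S$ with each client attached to its nearest open facility. When a client $j$ arrives, connect it to the nearest open facility, or open the cheapest-net-cost new facility, whichever is cheaper; one checks this costs at most $2\OPT_t$, by opening the facility of the offline optimum nearest to $j$ (both its opening cost and its distance to $j$ are at most $\OPT_t$). Then, starting from the previous solution, repeatedly perform any local move that improves $\cost'(S)$ by the required factor --- reassigning, after each move, exactly those clients whose nearest open facility changed by more than a $(1+\epsilon)$ factor --- until none remains; output $S_t$. The offline ingredient gives $\cost(S_t)\le(1+\sqrt2+\epsilon)\OPT_t$ at every $t$, and since the algorithm is deterministic this holds against an adaptive adversary.

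\textbf{Recourse analysis --- the main obstacle.} One must bound both the facility recourse (opens and closes) and the client recourse (reassignments, which dominates, since a single move can reassign many clients). For the facility recourse I would track a logarithmic potential proportional to $\tfrac1\epsilon\ln\cost'(S)$: each qualifying move decreases it by a definite amount, while the only increases come from arrivals, and the increment at step $t$ is $\ln(1+2\OPT_t/\OPT_{t-1})$ (using $c_t\le2\OPT_t$ and $\cost'(S_{t-1})\ge\OPT_{t-1}$), which --- splitting steps into those where $\OPT$ at most doubles and the $O(\log(\text{aspect ratio}))$ steps where it more than doubles, and telescoping the ratios in the latter --- sums to $O(n)+O(\log(\text{aspect ratio}))$ over the whole run. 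For client recourse I would split reassignments into two types. A reassignment triggered by a newly opened facility that is $(1+\epsilon)$-closer shrinks that client's connection cost by a $(1+\epsilon)$ factor, so a client undergoes only $O(\tfrac1\epsilon\log(\text{aspect ratio}))$ of these, except for one more after each time its connection cost is \emph{increased}. The dangerous type is a reassignment forced by closing the facility a client is attached to, since one cheap closure can reconnect many clients; I would handle this by performing a $\close$ or $\swap$ only when the freed (scaled) facility cost exceeds the induced increase in connection cost by a $(1+\epsilon)$ factor, so the move's profit is $\Omega(\epsilon)$ times that increase, hence $\Omega(\epsilon^2)$ per reconnected client (every positive distance change is $\ge\epsilon$, as distances are $\ge1$ and powers of $1+\epsilon$); after using the bounded aspect ratio to pass from additive profit to a share of the logarithmic potential, this pays for those reassignments. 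The technical heart --- and the step I expect to be the genuine obstacle --- is twofold: first, reconciling these charges, i.e.\ showing closure-forced reassignments cannot trigger an uncontrolled cascade of further moves, so every change is charged to the $O(n)$-size budget of the logarithmic potential or the $O(n\log(\text{aspect ratio}))$-size budget of the per-client distance potentials; and second, and more fundamentally, ensuring the local search can certify the $(1+\sqrt2+\epsilon)$ ratio while the "improving'' threshold is only $\epsilon/\polylog(n)$ rather than the $\epsilon/\poly(n)$ that a naive reading of the locality-gap proof (which sums $\Theta(|S|+|S^*|)$ test operations) would demand --- equivalently, either a refined locality-gap bound with error independent of $|S|+|S^*|$, or an amortization showing that "brittle'' arrivals forcing many moves are rare --- since only then does the per-arrival move count become $\polylog$ amortized and the final bound $O\!\big(\tfrac{\log n}{\epsilon}\log\tfrac1\epsilon\big)$ emerge.
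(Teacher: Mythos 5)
There is a genuine gap, and you have in fact pointed at it yourself in your last paragraph: your notion of an ``improving'' move is multiplicative in the whole objective, and you correctly observe that the locality-gap proof then accumulates an additive error over $\Theta(|S|+|S^*|)$ test operations that a $\varepsilon/\polylog(n)$ threshold cannot absorb. The paper's resolution is to change the currency of the threshold: a move is performed only if it decreases the scaled cost by more than $\phi$ \emph{per client it reconnects}, with $\phi=\frac{\epsilon'\cost(S,\sigma)}{\alpha_\FL|C|}$ (Definition~\ref{def:phieff}). Since each client is reconnected in only $O(1)$ of the test operations used in the locality-gap proof, the total additive slack is $O(|C|\phi)=O(\epsilon')\cost(S,\sigma)$, which gives $\cost(S,\sigma)\le\frac{\alpha_\FL}{1-\epsilon'}\opt$ (Theorem~\ref{thm:FL-offline-apx-ratio}); simultaneously, every reconnection is paid for by a cost decrease of at least $\phi\ge\epsilon'\opt_t/(\alpha_\FL|C_T|)$. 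This single definition dissolves both halves of your ``twofold obstacle'' at once, and without it neither your approximation certificate nor your client-recourse charging closes.

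The second missing ingredient is the bound on the total budget available to pay for reconnections, namely Lemma~\ref{lmm:ofl-delta-cost-bound}: $\sum_{t\le T'}\Delta_t\le O(\log T')\,\opt_{T'}$. Your accounting only bounds the per-arrival cost increase by $O(\OPT_t)$, which summed over $n$ arrivals gives a budget of order $n\cdot\OPT$; dividing by the per-reconnection payment $\Theta(\epsilon\,\OPT/|C|)$ yields $\Theta(n|C|/\epsilon)$ reconnections, i.e.\ linear amortized recourse. The paper gets the $O(\log n)\cdot\opt$ budget by a star-by-star argument: for each facility $i$ of the optimum with clients $j_{t_1},\dots,j_{t_s}$, the absence of a $\phi$-efficient open of $i$ before the $k$-th arrival forces $d(i,S)\le\frac{f_i+2\sum_{k'<k}d(i,j_{t_{k'}})}{k-1}+\phi$ (Lemma~\ref{lemma:helper-star}), and summing the resulting harmonic series charges $O(\log s)$ times the star's cost. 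Combined with Lemma~\ref{lemma:helper-sum-b/a} and the stage/freezing mechanism (which replaces your distance-rounding normalization and caps $\opt_T/\opt_1$ at $O(1/\epsilon')$ within a stage, producing the $\log\frac1\epsilon$ factor), this is what yields $O\bigl(\frac{\log n}{\epsilon}\log\frac1\epsilon\bigr)$. Your per-client $(1+\epsilon)$-shrinkage potential and your closure-profit condition are plausible partial substitutes, but as you concede, they do not rule out cascades, and they do not supply the $O(\log n)\cdot\opt$ budget that the argument ultimately rests on.
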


Our algorithm to show the above theorem differs from the previous approaches used in the context of online variants of facility location problem, and is based on {\em local search}. The local search algorithm is one of the most widely used algorithms for the facility location problem in practice and is known to achieve an approximation factor of $(1+\sqrt 2)$ in the offline setting. See the influential paper by Arya {\em et al} \cite{AryaGKMP01} and a survey by Munagala \cite{Munagala16}. Thus our result matches the best known approximation ratio  for offline facility location using local search. 
Further, our result shows that the local search algorithm augmented with some small modifications is inherently {\em stable}  as it does not make too many changes to the solutions even if clients are added in an online fashion. This gives further justification for its popularity among practitioners.

Prior to Theorem \ref{UFL-recourse}, the known results \cite{Fotakis06, Diveki2011,Fotakis11} needed one or more of these assumptions: 1)  the facility costs are {\em the same} 2) we are interested in knowing only the cost of solution 3) we are interested only in bounding the {\em facility recourse}. 
In particular, there was no known algorithm that bounds the client recourse, which is an important consideration in many applications  mentioned above.
Moreover, our algorithm also achieves a better approximation factor;
previously best known algorithm for the facility location problem achieved a competitive ratio of 48 \cite{Fotakis2011}.

Our result in the recourse setting for the facility location problem should be contrasted with the similar results shown recently for online Steiner tree \cite{Gupta015}, set cover \cite{GuptaK0P17}, scheduling \cite{GuptaKS14}, and matchings and flows \cite{BernsteinHR19,GuptaKS14}. 
Moreover, these results also raise an intriguing questions: {\em is polylog amount of recourse enough to beat information theoretic lowerbounds in the online algorithms? Is recourse as or more powerful than randomization?}

\medskip

While having a small client recourse is enough in data center applications, it is not enough in some others.
Take wireless networks as a concrete example. 
Here, the set of clients (mobile devices) keeps changing over time, and it is necessary to {\em update} the assignment of clients to facilities as {\em  quickly} as possible so to minimize the service disruption.
These applications motivated Cygan {\em et~al}~\cite{CyganCMS18}, Goranci {\em et~al}~\cite{Goranci19} and Cohen-Addad {\em et~al}~\cite{Cohen-Addad19} to study the facility location problem in the framework of {\em dynamic algorithms}. 
The dynamic model of \cite{CyganCMS18} and \cite{Cohen-Addad19} is different from what we study here, so we discuss it at end of this section.

The dynamic facility location problem is similar to the one in online setting except that at each time step either {\em a new client arrives or an existing client departs}.
The goal is to always maintain a solution that is a constant factor approximation to the optimal solution, while minimizing {\em the total time spent in updating the solution.}
We emphasize that we require our dynamic algorithms to maintain {\em an actual assignment of clients to facilities}, not just the set of open facilities and an estimate of connection cost.
This is important for applications mentioned above.
This setting was considered in \cite{Goranci19}, who showed that for metric spaces with {\em doubling dimension $\kappa$}, there is a deterministic fully dynamic algorithm with $\tilde O(2^{\kappa^2})$ update time, which maintains a constant approximation. 
However, for more general metric spaces no results were known in the dynamic setting, and we give the first results.	
First we consider the incremental setting, where clients only arrive and never depart.

\begin{theorem}
\label{UFL-dynamicIncremental}
In the incremental setting against an adaptive adversary, there is a randomized dynamic algorithm for the facility location problem that, with probability at least $1-1/n^2$, maintains an approximation factor of $(1+\sqrt{2} + \epsilon)$ and has \emph{total} update time of $O(\frac{n^2}{\epsilon^2}\log^3n\log\frac1\epsilon)$. 
\end{theorem}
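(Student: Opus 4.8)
The plan is to implement, in the incremental dynamic model, the very local‑search‑based online algorithm behind Theorem~\ref{UFL-recourse}, but to replace its (na\"ively expensive) subroutine that searches for an improving local move by the randomized sampling subroutine of Charikar and Guha~\cite{CharikarGhua2005}. Concretely: maintain the current open set $S$ together with an explicit assignment, stored so that every client knows its closest and second‑closest facility in $S$ and every facility knows its assigned clients; then a single $\open$, $\close$, or $\swap$ operation can be \emph{evaluated} (its change in $\cost$ computed) and \emph{executed} in $O(n)$ time. To locate an improving operation, sample a uniformly random client $j$, evaluate the $O(1)$ operations that involve a facility close to $j$, and perform the first one whose gain exceeds the threshold used by the analysis of Theorem~\ref{UFL-recourse}; if $\Theta(\frac{\log n}{\epsilon})$ consecutive samples fail, declare that an approximate local optimum has been reached. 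Each arriving client is processed by inserting it (connect to the nearest facility in $S$, or open a singleton) and then running this randomized search under the same lazy/phased schedule dictated by Theorem~\ref{UFL-recourse}.

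For the approximation guarantee, the factor $(1+\sqrt 2+\epsilon)$ is inherited from Theorem~\ref{UFL-recourse} once we check that the randomized subroutine faithfully mimics the deterministic one: whenever $\cost(S)>(1+\sqrt2+\epsilon)\OPT$ it returns an improving operation, and whenever it reports ``local optimum'' the bound $\cost(S)\le(1+\sqrt2+\epsilon)\OPT$ genuinely holds. This is the Charikar--Guha sampling lemma combined with the analysis of Arya et~al.~\cite{AryaGKMP01}: a solution with cost exceeding $(1+\sqrt2+\epsilon)\OPT$ admits a family of $O(n)$ test operations, each chargeable to a distinct client, whose total gain is $\Omega(\epsilon)\cdot\cost(S)$; hence a random client's test operation is improving with large enough probability (expected gain $\Omega(\epsilon\,\cost(S)/n)$) that $\Theta(\frac{\log n}{\epsilon})$ independent samples find an improving operation with probability $1-1/n^{O(1)}$. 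A union bound over the polynomially many probes gives the overall $1-1/n^2$ success probability. Since the correctness of each re‑optimization depends only on the current client set, not on what the adversary infers from our solution, the guarantee survives against an adaptive adversary.

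For the running time, by (the recourse part of) Theorem~\ref{UFL-recourse} the total number of local‑search operations over the whole run is $O(\frac{n\log n}{\epsilon}\log\frac1\epsilon)$, since each operation alters $O(1)$ facilities and amortized facility recourse is $O(\frac{\log n}{\epsilon}\log\frac1\epsilon)$ per client; likewise the total number of reconnections is $O(\frac{n\log n}{\epsilon}\log\frac1\epsilon)$ by the client‑recourse bound. We charge to each operation the $O(n)$‑time evaluations of the $\Theta(\frac{\log n}{\epsilon})$ samples taken to discover it, the analogous cost of the $\Theta(\frac{\log n}{\epsilon})$ samples of the failed probe preceding it, and the $O(n)$ cost of executing it; the reconnections it causes are paid for separately out of the recourse bound. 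This yields roughly $O(\frac{n\log n}{\epsilon}\log\frac1\epsilon)\cdot O(\frac{n\log n}{\epsilon})=O(\frac{n^2\log^2 n}{\epsilon^2}\log\frac1\epsilon)$, and an extra $\log n$ factor for boosting the per‑probe success probability to $1-1/n^{O(1)}$ gives the stated $O(\frac{n^2}{\epsilon^2}\log^3 n\log\frac1\epsilon)$.

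The main obstacle is the interface between the two ingredients, i.e.\ confirming that the recourse analysis of Theorem~\ref{UFL-recourse} is robust to being driven by Charikar--Guha moves rather than the specific moves it picks. Two points need care: (i)~replacing ``the best improving operation'' by ``some operation whose gain meets the analysis's threshold'' must not inflate the number of operations beyond $O(\frac{n\log n}{\epsilon}\log\frac1\epsilon)$, so one has to re‑run its potential/amortization argument with this weaker move oracle; and (ii)~one must bound the number of \emph{probes} for an improving operation, not merely the number of operations found, so that the $O(\frac{n\log n}{\epsilon})$ per‑probe sampling cost does not pick up an extra factor of $n$ --- this forces the lazy schedule to probe only $O(\frac{n\log n}{\epsilon}\log\frac1\epsilon)$ times in total. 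Engineering the assignment data structure so that a $\close$ operation --- which must re‑route its clients to their next‑nearest open facilities --- still runs in $O(n)$ time is a further, more routine, piece of work.
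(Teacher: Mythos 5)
There is a genuine gap, and it sits exactly where you flagged your ``main obstacle'': the sampling lemma you invoke is not true for the sampling distribution you chose. You sample a uniformly random \emph{client} and probe $O(1)$ operations near it, declaring local optimality after $\Theta(\frac{\log n}{\epsilon})$ failed probes. But the Arya et~al.\ test operations are indexed by \emph{facilities} of the optimal solution, and their gain can be concentrated on an operation that is ``close'' to very few clients: take an instance where $n-1$ clients are already served optimally and one client $j_0$ pays a huge connection cost that would be eliminated by opening a cheap facility $i^*$ collocated with $j_0$. Then $\cost(S)\gg(1+\sqrt2+\epsilon)\OPT$, yet the unique improving operation is found by your probe only when $j_0$ is sampled, which happens with probability $1/n$; after $\Theta(\frac{\log n}{\epsilon})$ probes you wrongly certify local optimality with probability $1-o(1)$. (Note also that ``expected gain $\Omega(\epsilon\,\cost(S)/n)$ per probe'' does not imply ``an improving operation is found in $O(\frac{\log n}{\epsilon})$ probes whp''; the success probability per probe can be as small as $\Theta(1/n)$.) This is precisely why the paper samples a uniformly random \emph{facility} $i\in F\setminus S$ (or, with probability $1/3$, the best close operation) and then takes the best open/swap involving $i$: each of the $O(|F|)$ test operations is hit with probability $\Omega(1/|F|)$, giving expected multiplicative progress $1-\Omega(1/|F|)$ per iteration, at the price of needing $M=O(\frac{|F|}{\epsilon}\log\Gamma)$ iterations per re-optimization rather than $O(\frac{\log n}{\epsilon})$.

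Once the probe count per re-optimization is $\tilde\Theta(|F|/\epsilon)$ rather than $\tilde\Theta(1/\epsilon)$, your accounting no longer closes, and the missing ingredient is the mechanism that limits how often the expensive search is run. The paper does \emph{not} run the search after every arrival under ``the schedule dictated by Theorem~\ref{UFL-recourse}''; it maintains a value $\last$ and calls the $M$-iteration search only when the cost has grown by a factor $(1+\epsilon')$ since the last call, bounding the number of calls per stage by $O(\frac{\log n}{\epsilon}\log\frac1\epsilon)$ via the key lemma $\sum_{t\le T'}\Delta_t\le O(\log T')\,\opt_{T'}$ together with the logarithmic-sum lemma. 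Making that key lemma survive lazy re-optimization (the solution is no longer an approximate local optimum at every step) requires the extra per-arrival steps that try to open, at each cost scale $2^q$, the nearest facility of cost at most $2^q$ to the new client. Your proposal omits both devices. Finally, bounding the number of executed operations by the client-recourse bound of Theorem~\ref{UFL-recourse} requires every executed move to be $\phi$-efficient, which conflicts with the sampling lemma's progress guarantee; the paper explicitly separates the two features and bounds operations by (number of calls)$\times M$ instead.
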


Note that it takes $\Theta(n|F|)$ space to specify the input in our model (see Section~\ref{subsec:specify-input}). Hence the  running time of our algorithms is almost optimal up to polylog factors when $|F| = \Omega(n)$.   The proof of above theorem uses randomized local search and builds on our result in the recourse setting. 
We use randomization to convert the recourse bound into an update time bound. 
Further, our analysis of above theorem also implies one can obtain $O(\frac{n|F|}{\epsilon^2}\log^3n\log\frac1{\epsilon})$ running time by losing $O(1)$ factors in the approximation ratio; 
see the remark at the end of Section \ref{sec:dfl}.

\medskip
Next we study the fully dynamic setting. 
Here, we first consider an important class of metric spaces called hierarchically well separated tree (HST) metrics {\cite{Bartal96}; see Definition~\ref{def:HST} for the formal definition, and Section~\ref{subsec:specify-input} for more details about how the input sequence is given.
\remove{A $k$-hierarchically well-separated tree is defined as a rooted
weighted tree with following properties: 1) The edge weight from any node to each of its children is same. 2)
The edge weights along any path from the root to a leaf are decreasing by a factor
of at least $k$.
In most applications, $k$ is assumed to be a small constant.} \shr{This is different from what I defined in the preliminary section.}
For HST metric spaces, we show the following result. \remove{In the following result, $n$ denotes the total number of clients that arrive in the entire course of the algorithm. }\shr{This was already defined.}


\begin{theorem}
	\label{UFL-HST}
	In the fully dynamic setting against adaptive adversaries, there is a deterministic algorithm for the facility location problem that achieves an $O(1)$ approximation factor with $O(|F|)$ preprocessing time and $O(n\log^3 D)$ total update time for the HST metric spaces.
\end{theorem}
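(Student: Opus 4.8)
The plan is to exploit the special structure of an HST to maintain, for each node $v$ of the tree, a compact summary of the clients in its subtree that suffices to decide whether a facility should be ``open'' at level $\level(v)$, and to update this summary in time proportional to the depth $O(\log D)$ whenever a client arrives or departs. Concretely, root the HST at $\rt$, and for an internal node $v$ let $\Delta_v$ denote the weight of the edges from $v$ to its children. The key combinatorial fact about HSTs is that the optimal (or a near-optimal) facility location solution has a clean recursive description: a client $j$ in the subtree of $v$ is served ``locally'' inside $v$ if the number of clients in that subtree is large enough to amortize the cost of an open facility at the scale $\Delta_v$, and is served by a facility higher up otherwise. Thus I would define, at each node $v$, a threshold test comparing (a suitably discretized version of) the number of clients in the subtree of $v$ against $f_{\min}/\Delta_v$, where $f_{\min}$ is the cheapest facility cost in the relevant region; the open/closed status of $v$ is a monotone function of this count.

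The second step is the data structure. During the $O(|F|)$ preprocessing I would build the HST over $F$, compute for every node the cheapest facility cost reachable below it (and pointers to realize a connection), and initialize all subtree counts to zero. To handle an update at client $j$ (arrival or departure), walk the root-to-leaf path to the leaf containing $j$; this path has length $O(\log D)$ since edge weights shrink geometrically and distances are integers bounded by $D$. Along this path, increment or decrement the subtree counts, and re-evaluate the threshold test at each of the $O(\log D)$ nodes. Each node whose status flips triggers opening or closing one facility and reconnecting the affected clients; the crucial point is that a node's status, being monotone in its count, flips only $O(\log D)$ times per geometric ``doubling'' of its count, so the \emph{amortized} number of status flips — and hence of facility changes and client reconnections — per update is $O(\log D)$. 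Charging the reconnection of a client to the update that caused a status flip on its root-to-leaf path gives the $O(\log^2 D)$-per-update bookkeeping, and a more careful potential argument (bounding total work over all $n$ updates) yields the claimed $O(n\log^3 D)$ total update time, the extra $\log D$ coming from the cost of recomputing and propagating the cheapest-facility witnesses and re-threading client assignments within a subtree.

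For correctness I would argue that the solution maintained this way is always within $O(1)$ of optimum on an HST. This is the standard ``bottom-up charging'' analysis for facility location on trees: the connection cost of a client served inside node $v$ is $O(\Delta_v)$ by the HST's geometric decay, and the threshold test guarantees that the facilities we pay for at scale $\Delta_v$ are outnumbered by $\Omega(1)$ times the clients they serve at that scale, so the facility cost telescopes against the connection cost; conversely, any solution must pay $\Omega(\Delta_v)$ to connect a client whose nearest open facility lies outside $v$, matching our cost up to constants. Care is needed to make the test robust to both insertions and deletions — one wants hysteresis (two thresholds, a factor of $2$ apart) so that a count oscillating around a single threshold does not cause $\Omega(1)$ flips per step; this hysteresis is exactly what buys the amortized $O(\log D)$ recourse while degrading the approximation only by a constant.

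The main obstacle I anticipate is the simultaneous control of the three quantities — approximation ratio, recourse (facility and client), and running time — under \emph{deletions}. Insertions alone admit a clean monotone analysis, but a deletion can in principle cascade: closing a facility at a low level may push its clients up to a node that now also wants to close, and so on up the path. The fix is to show this cascade is confined to the single root-to-leaf path of the departing client (because only subtree counts along that path change) and that, with the two-threshold hysteresis, the total number of flips at each node over the whole execution is $O(\log(\text{max count})) = O(\log n)$, which is absorbed into the amortized bound. Getting the potential function that makes this charging rigorous — and simultaneously pays for re-threading the actual client-to-facility assignment, not merely the facility set — is the technical heart of the proof.
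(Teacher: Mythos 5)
Your overall architecture coincides with the paper's: maintain subtree counts $N_v$, decide each node's status by comparing $N_v\cdot 2^{\level(v)}$ against the cheapest facility cost $f_v$ below $v$, add factor-$2$ hysteresis so a status cannot flip back until the count has moved by a constant fraction of the threshold, and charge reconnections to the $O(\log D)$ ancestors touched by each arrival/departure. The offline correctness argument you sketch (an antichain lower bound of $\min\{N_v2^{\level(v)}, f_v\}$ per subtree versus a telescoping upper bound on facility plus connection cost) is also exactly the paper's.

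The gap is in the recourse/update-time analysis, and it sits precisely where you say the "technical heart" is. Counting status flips is not enough: you must bound how many clients each flip forces you to reconnect. A facility at $v$ serves not only clients inside $T_v$ (of which there are $\Theta(f_v/2^{\level(v)})$ near the flip point, which your hysteresis handles) but also clients from \emph{outside} $T_v$, namely those sitting in unmarked subtrees hanging off ancestors of $v$ that have no open facility of their own. A single per-node threshold on $N_v$ does not control how many such external clients exist, so one closing event at $v$ could trigger an unbounded number of reconnections and your charging does not close. The paper's fix is a \emph{second}, separate open/closed decision for each marked vertex $v$, governed by $\sum_{u \in \Lambda_v,\, u \text{ unmarked}} N_u$ compared against $f_v/(\alpha_v\beta_v)$ with its own hysteresis constant $\beta_v\in\{1,2\}$: if the unmarked children of an ancestor $u$ of $v$ held more than $f_u/2^{\level(u)}$ clients, then $u$ itself would be open and absorb them, so at most $O(\log D)\,f_v/2^{\level(v)}$ clients are ever connected to $v$ from outside $T_v$ (Lemma~\ref{lemma:departure-ub-outside}). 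This is what lets each token worth $\Omega(f_v/2^{\level(v)})$ pay for all reconnections caused by a flip at $v$ up to a factor $O(\log D)$, yielding $O(\log^2 D)$ amortized reconnections and $O(\log^3 D)$ amortized time. A secondary inaccuracy: hysteresis does not give "$O(\log(\text{max count}))$ flips per node"; it gives that each flip at $v$ is preceded by $\Omega(f_v/2^{\level(v)})$ units of count change at $v$, and that is the bound the amortization actually uses.
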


A seminal result by Bartal \cite{Bartal96}, which was later tightened by Fakcharoenphol, Rao and Talwar \cite{Fakcharoenphol2003}, shows that any arbitrary $N$-point metric space can be embedded into a distribution over HSTs such that the expected distortion  is  at most $O(\log N)$, which is also tight.
Moreover, such a probabilistic embedding can also be computed in $O(N^2\log N)$ time; see recent results by  Blelloch, Gu and Sun for details \cite{Blelloch0S17}.
These results immediately imply the following theorem, provided the input is specified as in Section~\ref{subsec:specify-input}.

\begin{theorem}
\label{UFL-fullydynamic}
In the fully dynamic setting against oblivious adversary, there is a randomized algorithm for the facility location problem that maintains an approximation factor of $O(\log |F|)$ with \sh{preprocessing time of $O(|F|^2\log |F|)$} and $O(n\log^3 D)$ total update time. The approximation guarantee holds only in expectation for every time step of the algorithm.
\end{theorem}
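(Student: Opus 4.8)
The plan is to reduce the general-metric fully dynamic problem to the HST case of Theorem~\ref{UFL-HST} via a single random tree embedding, so that the only randomness used by the algorithm is the choice of the embedding. In the preprocessing phase I would apply the embedding of Fakcharoenphol, Rao and Talwar~\cite{Fakcharoenphol2003} (refining Bartal~\cite{Bartal96}) to the metric $d$ restricted to the facility set $F$, sampling a single HST $T$ with leaf set $F$; by Blelloch, Gu and Sun~\cite{Blelloch0S17} this costs $O(|F|^2\log|F|)$ time, and the tree produced is a $2$-HST in the sense of Definition~\ref{def:HST}. The two properties I will use are $d(i,i')\le d_T(i,i')$ for all $i,i'\in F$ and $\E_T[d_T(i,i')]\le O(\log|F|)\,d(i,i')$; since the aspect ratio of the $F$-metric is at most $D$, the tree $T$ has depth $O(\log D)$ and diameter $O(D)$. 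We then simply run the deterministic algorithm of Theorem~\ref{UFL-HST} on $T$. Because that algorithm is deterministic and works against an adaptive adversary, the input sequence may depend arbitrarily on the algorithm's past outputs, but crucially \emph{not} on $T$ — which is exactly why the guarantee can only be promised against an oblivious adversary.

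A small technical point is that an arriving client $j$ lives in the original metric and need not sit at a leaf of $T$. I would handle this by snapping $j$ to its nearest facility $i_j=\arg\min_{i\in F}d(j,i)$ and feeding the leaf $i_j$ to the HST algorithm; whatever facility the HST algorithm assigns to $i_j$ is then output as the facility serving $j$. Writing $I_t$ for the true instance at time $t$ and $\hat I_t$ for the snapped one, the key facts (both from the triangle inequality together with $\sum_j d(j,i_j)\le\OPT(I_t)$, which holds because every feasible solution connects $j$ at cost at least $d(j,i_j)$) are $\OPT(\hat I_t)\le 2\,\OPT(I_t)$ and $\cost_{I_t}(S)\le\cost_{\hat I_t,d}(S)+\OPT(I_t)$ for every $S$. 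If the input encoding of Section~\ref{subsec:specify-input} already presents clients as facility locations this step is vacuous; otherwise the $O(|F|)$ lookup per client is charged to reading the input.

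The approximation bound then follows from the standard tree-embedding argument, carried out at each time step separately. Fix $t$, let $S_t$ be the solution the HST algorithm maintains, and let $S^*$ be optimal for $\hat I_t$ under the original metric $d$. Domination of $d$ by $d_T$ gives $\cost_{\hat I_t,d}(S_t)\le\cost_{\hat I_t,d_T}(S_t)\le O(1)\cdot\OPT(\hat I_t,d_T)$, while $\E_T[\OPT(\hat I_t,d_T)]\le\E_T[\cost_{\hat I_t,d_T}(S^*)]\le O(\log|F|)\cdot\cost_{\hat I_t,d}(S^*)=O(\log|F|)\cdot\OPT(\hat I_t,d)$, using linearity of expectation, the pointwise bound $\E[\min]\le\min\E$, and the expected-distortion guarantee. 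Combining these with $\OPT(\hat I_t)\le 2\,\OPT(I_t)$ and $\cost_{I_t}(S_t)\le\cost_{\hat I_t,d}(S_t)+\OPT(I_t)$ yields $\E_T[\cost_{I_t}(S_t)]=O(\log|F|)\cdot\OPT(I_t)$, which is the claimed per-step expected guarantee. The running time is inherited verbatim from Theorem~\ref{UFL-HST} applied to $T$ (depth parameter $O(\log D)$), giving $O(n\log^3 D)$ total update time on top of the $O(|F|^2\log|F|)$ preprocessing.

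The main obstacle is conceptual rather than computational. First, a single sampled HST has distortion controlled only in expectation, not with high probability over all pairs, so the guarantee genuinely weakens to ``in expectation at every time step,'' and it is precisely the dependence of $S_1,S_2,\dots$ on the hidden tree $T$ that forces the oblivious-adversary assumption: an adaptive adversary could infer $T$ from the outputs and then place clients on a badly distorted pair. Second, one must verify that the snapping reduction is implementable online and is symmetric under client departures, and that it meshes with the compact input model so that the $O(n\log^3 D)$ update-time bound is not swamped by the cost of locating clients — this is the one place where the precise conventions of Section~\ref{subsec:specify-input} matter.
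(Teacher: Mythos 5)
Your proposal is correct and follows essentially the same route as the paper: the paper likewise snaps clients to their nearest facility (Appendix on moving clients, losing a $2\alpha+1$ factor), samples a single HST over $F$ via \cite{Fakcharoenphol2003,Blelloch0S17} in $O(|F|^2\log|F|)$ preprocessing time, runs the deterministic algorithm of Theorem~\ref{UFL-HST} on that tree, and invokes the standard domination-plus-expected-distortion argument, which is exactly why the guarantee is per-step in expectation and only against an oblivious adversary. Your write-up simply spells out the "standard arguments" the paper leaves implicit.
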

Observe that unlike the incremental setting, the above theorem holds only in the oblivious adversary model, as probabilistic embedding techniques preserve distances only in expectation as can be seen by taking a cycle on $n$ points.
Our result also shows that probabilistic tree embeddings using HSTs can be a very useful  technique in the design of dynamic algorithms,
similar to its role in online algorithms \cite{Bartal96, BartalBBT97,  Umboh15, BubeckCLLM18}.

\medskip 

Our algorithms in Theorems \ref{UFL-HST} and \ref{UFL-fullydynamic} in the fully dynamic setting also have the nice property that amortized client and facility {\em recourse} is $O(\log^3D)$ (in fact, we can achieve a slight better bound of $O(\log^2 D)$ as can be seen from the analysis). This holds as our dynamic algorithms maintain the entire assignment of clients to facilities {\em explicitly} in memory at every time step. Thus, the amortized client reconnections is at most the amortized update time.  This is useful when one considers an online setting where clients arrive and depart, and is interested in small client recourse. 
%
%
A fully dynamic online model of facility location problem, where clients arrive and \emph{depart} was recently studied by Cygan {\em et~al}~\cite{CyganCMS18} and Cohen-Addad {\em et~al}~\cite{Cohen-Addad19}, but with different assumption on recourse. 
In this model, when a client arrives, the algorithm has to assign it to an open facility immediately; While upon departure of a client, if a facility was opened at the same location, then the clients that were assigned to that location should be reassigned immediately and irrevocably.
Cygan {\em et~al}~\cite{CyganCMS18} studied the case when recourse is \emph{not} allowed: they showed that a delicate extension of Meyerson's \cite{Meyerson2001} algorithm obtains asymptotically tight competitive ratio of $O(\log n /\log \log n)$. Cohen-Addad {\em et~al}~\cite{Cohen-Addad19} later showed that this can be improved to $O(1)$ if recourse is allowed.
However, both results holds only for the {\em uniform facility costs} and Cygan {\em et~al}\cite{CyganCMS18} even showed an {\em unbounded} lower bound for the non-uniform facility cost case in their model. Moreover, in their model reconnections of clients are assumed to be ``automatic'' and do not count towards the client recourse; it is not clear how many client reconnections their algorithm will make.

\subsection{Our Techniques}
Our main algorithmic technique for proving Theorems~\ref{UFL-recourse} and \ref{UFL-dynamicIncremental} is local search, which is one of the powerful algorithm design paradigms.
%
\sh{Indeed, for both results, the competitive (approximation) ratio we achieve is $1+\sqrt{2}+\epsilon$, which matches the best approximation ratio for offline facility location obtained using local search \cite{AryaGKMP01}.
Both of our results are based on the following key lemma. Suppose we maintain local optimum solutions at every time step in our algorithm. When a new client $j_t$ comes at time $t$, we add it to our solution using a simple operation, and let $\Delta_t$ be the increase of our cost due to the arrival of $j_t$.  The key lemma states that the sum of $\Delta_t$ values in the first $T'$ time steps can be bounded in terms the optimum cost at time $T'$. With a simple modification to the local search algorithm, in which we require each local operation decreases enough cost for every client it reconnects, one can bound the total client recourse. }

\sh{ The straightforward way to implement the local search algorithm takes time $\Omega(n^3)$. To derive a better running time, we leverage the randomized local search idea of Charikar and Guha \cite{CharikarGhua2005}. At every iteration, \sh{we randomly choose a facility $i$ or a closing operation, and then perform the best operation that opens or swaps in $i$, or closes a facility if that is what we choose}.   By restricting the facility $i$ and with the help of the heap data structure, an iteration of the algorithm can be implemented in time $O(|C|\log |F|)$.  As in \cite{CharikarGhua2005} we can also show that each iteration can make a reasonable progress in expectation, leading to a bound of $\tilde O(|F|)$ on the number of iterations for the success of the algorithm with high probability. We remark that the algorithm in \cite{CharikarGhua2005} used a different local search framework. Therefore, our result shows that the classic algorithm of \cite{AryaGKMP01} can also be made fast.  }

\sh{However, directly replacing the randomized local search procedure with a deterministic one does not work: The solution at the end of each time might not be a local optimum as we did not enumerate all possible local operations. Thus the key lemma does not hold any more. Nevertheless we show that applying a few local operations around $j_t$ upon its arrival can address the issue.  With the key lemma, one can bound the number of times we perform the iterative randomized local search procedure, and thus the overall running time. }

\sh{Our proof for Theorem~\ref{UFL-HST} is based on a generalization of the greedy algorithm for facility location on HST metrics, which was developed in \cite{EsencayiGLW19} in the context of {\em differential privacy} but only for the case of {\em uniform} facility cost. The intuition of the algorithm is as follows: If for some vertex $v$ of the HST $T$, the number of clients in the tree $T_v$ (the sub-tree of $T$ rooted at $v$) times the length of parent edge of $v$ is big compared to the cost of the cheapest facility in $T_v$, then we should open that facility. Otherwise, we should not open it and let the clients in $T_v$ be connected to outside $T_v$ through the parent edge.  This intuition can be made formal: We mark $v$ in the former case; then simply opening the cheapest facility in $T_v$ for all \emph{lowest marked} vertices $v$ leads to a constant approximation for facility location. }

\sh{The above offline algorithm  leads to a \emph{dynamic data structure} that maintains $O(1)$-approximate solutions, supports insertion and deletion of clients, and reports the connecting facility of a client in $O(\log D)$ time. This is the case since each time a client arrives or departs, only its ancestors will be affected.  However, in a dynamic algorithm setting, we need to maintain the assignment vector in memory, so that when the connecting facility of a client changes, it needs to be notified.  This requires that the number of reconnections made by our algorithm to be small.  To achieve the goal, we impose two constants for each $v$ when deciding whether $v$ should be marked and the cheapest facility in $T_v$ should be open. When a vertex $v$ changes its marking/opening status, we update the constants in such a way that it becomes hard for the status to be changed back. 
}

\section{Preliminaries}
\label{sec:prelim}

		
	Throughout the paper, we use $F$ to denote the set of potential facilities for all the problems and models; we assume $F$ is given upfront.  
	$C$ is the dynamic set of clients we need to connect by our algorithm. This is not necessarily the set of clients that are present: In the algorithms for online facility location with recourse and dynamic facility location in the incremental setting, we fix the connections of some clients as the algorithms proceed. These clients are said to be ``frozen'' and excluded from $C$. We shall always use $d$ to denote the hosting metric containing $F$ and all potential clients. For any point $j$ and subset $V$ of points in the metric, we define $d(j, V) = \min_{v \in V}d(j, v)$ to be the minimum distance from $j$ to a point in $V$. We assume all distances are integers, the minimum non-zero distance between two points is 1. We define $D$, the diameter or the aspect ratio of a metric space, as the largest distance between two points in it.  Let $n$ be $|F|$ plus the total number of clients arrived during the whole process.  The algorithms do not need to know the exact value of $n$ in advance, except that in the dynamic algorithm for facility location in the incremental setting (the problem in Theorem~\ref{UFL-dynamicIncremental}), to achieve the $1- 1/n^2$ success probability, a sufficiently large $\Gamma  = \poly(n, \log D, \frac1\epsilon)$ needs to be given.\footnote{For an algorithm that might fail, we need to have some information about $n$ to obtain a failure probability that depends on $n$.} \shr{TODO: remove $n_c$ and $n_f$. $n_c$ is never used. We can just use $|F|$.Done. }
	
	In all the algorithms, we maintain a set $S$ of open facilities, and a connection $\sigma \in S^C$ of clients in $C$ to facilities in $S$. We do not require that $\sigma$ connects clients to their respective nearest open facilities.   For any solution $(S' \subseteq F, \sigma' \in S'^C)$, we use $\cc(\sigma') = \sum_{j \in C}d(j, \sigma_j)$ to denote the connection cost of the solution. For facility location, we use $\cost(S', \sigma') = f(S') + \cc(\sigma')$ to denote the total cost of the solution $(S', \sigma')$, where $f(S') := \sum_{i \in S'} f_i$.  Notice that $\sigma$ and the definitions of $\cc$ and $\cost$ functions depend on the dynamic set $C$.

	Throughout the paper, we distinguish between a ``moment'', a ``time'' and a ``step''. A moment refers to a specific time point during the execution of our algorithm.  A time corresponds to an arrival or a departure event:  At each time, exactly one client arrives or departs, and time $t$ refers to the period from the moment the $t$-th event happens until the moment the $(t+1)$-th event happens (or the end of the algorithm). One step refers to one statement in our pseudo-codes indexed by a number. 

\subsection{Hierarchically Well Separated Trees}
\begin{definition} \label{def:HST}
	A hierarchically-well-separated tree (or HST for short) is an edge-weighted rooted tree with the following properties:
	\begin{itemize}[topsep=3pt,itemsep=0pt]
		\item all the root-to-leaf paths have the same number of edges,
		\item if we define the level of  vertex $v$, $\level(v)$, to be the number of edges in a path from $v$ to any of its leaf descendant, then for an non-root vertex $v$, the weight of the edge between $v$ and its parent is exactly $2^{\level(v)}$. 
	\end{itemize}
	Given a HST $T$ with the set of leaves being $X$, we use $d_T$ to denote the shortest path metric of the tree $T$ (with respect to the edge weights) restricted to $X$.  
\end{definition}

The classic results by  Bartal \cite{Bartal96} and Fakcharoenphol, Rao and Talwar \cite{Fakcharoenphol2003} state that we can embed any $N$-point metric $(X, d)$ (with minimum non-zero distance being $1$) to a distribution $\pi$ of \emph{expanding}\footnote{A metric $(X, d_T)$ is expanding w.r.t $(X, d)$ if for  every $u, v \in X$, we have $d_T(u, v) \geq d(u, v)$.} HST metrics $(X, d_T)$  with distortion $O(\log N)$: For every $u, v \in X$, we have $d_T(u, v) \geq d(u, v)$ and $\E_{u, v}[d_T(u, v)] \leq O(\log N) d(u, v)$. Moreover, there is an efficient randomized algorithm \cite{Blelloch0S17} that outputs a sample of the tree $T$ from $\pi$.  Thus applying standard arguments, Theorem~\ref{UFL-HST} implies Theorem~\ref{UFL-fullydynamic}.

\subsection{Specifying Input Sequence} \label{subsec:specify-input}
In this section we specify how the input sequence is given.  For the online and dynamic facility location problem, we assume the facility locations $F$, their costs $(f_i)_{i \in F}$, and the metric $d$ restricted to $F$ are given upfront, and they take $O(|F|^2)$ space.  Whenever a client $j \in C$ arrives, it specifies its distance to every facility $i \in F$ (notice that the connection cost of an assignment $\sigma \in S^C$ does not depend on distances between two clients and thus they do not need to be given). Thus the whole input contains $O(n|F|)$ words. 

For Theorems~\ref{UFL-HST} and \ref{UFL-fullydynamic}, as we do not try to optimize the constants, we {\em do not} need that a client specifies its distance to every facility.
By losing a multiplicative factor of $2$ and an additive factor of $1$ in the approximation ratio, we can assume that every client $j$ is collocated with its nearest facility in $F$ (See Appendix~\ref{appendix:moving-clients}).  Thus, we only require that when a client $j$ comes, it reports the position of its nearest facility. For Theorem~\ref{UFL-HST}, the HST $T$ over $F$ is given at the beginning using $O(|F|)$ words. For Theorem~\ref{UFL-fullydynamic}, the metric $d$ over $F$ is given at the beginning using $O(|F|^2)$ words. Then, we use an efficient algorithm \cite{Blelloch0S17} to sample a HST $T$.

\subsection{Local Search for facility location}
The local-search technique has been used to obtain the classic $(1+\sqrt 2)$-approximation offline algorithm for facility location \cite{AryaGKMP01}. We now give an overview of the algorithm, which will be the baseline of our online and dynamic algorithms for facility location.   One can obtain a (tight) $3$-approximation for facility location without scaling facility costs. Scaling the facility costs by a factor of $\lambda := \sqrt{2}$ when deciding whether an operation can decrease the cost, we can achieve a better approximation ratio of $\alpha_\FL:= 1+\sqrt{2}$.  Throughout, we fix the constants $\lambda = \sqrt{2}$ and $\alpha_\FL = 1+\sqrt{2}$.  For a solution $(S', \sigma')$ to a facility location instance, we use $\cost_\lambda(S', \sigma') =  \lambda f(S') + \cc(\sigma')$ to denote the cost of the solution $(S', \sigma')$ with facility costs scaled by $\lambda = \sqrt{2}$. We call $\cost_\lambda(S', \sigma')$ the \emph{scaled cost} of $(S', \sigma')$.

Given the current solution $(S, \sigma)$ for a facility location instance defined by $F, C, d$ and $(f_i)_{i \in F}$, we can apply a \emph{local operation} that changes the solution $(S, \sigma)$.  A valid local operation is one of the following.
\begin{itemize}
	\item An $\open$ operation, in which we open some facility $i \in F$ and reconnect a subset $C' \subseteq C$ of clients to $i$.  We allow $i$ to be already in $S$, in which case we simply reconnect $C'$ to $i$. This needs to be allowed since our $\sigma$ does not connect clients to their nearest open facilities.
	\item A $\close$ operation, we close some facility $i' \in S$ and reconnect the clients in $\sigma^{-1}(i')$ to facilities in $S \setminus \{i'\}$. 
	\item In a $\swap$ operation, we open some facility $i \notin S$ and close some facility $i' \in S$, reconnect the clients in $\sigma^{-1}(i')$ to facilities in $S \setminus \{i'\} \cup \{i\}$, and possibly some other clients to $i$.   We say $i$ is \emph{swapped in} and $i'$ is \emph{swapped out} by the operation. 
\end{itemize}
Thus, in any valid operation, we can open and/or close at most one facility. A client can be reconnected if it is currently connected to the facility that will be closed, or it will be connected to the new open facility.  After we apply a local operation, $S$ and $\sigma$ will be updated accordingly so that $(S, \sigma)$ is always the current solution.

For the online algorithm with recourse model, since we need to bound the number of reconnections, we apply a local operation only if the \emph{scaled} cost it decreases is large compared to the number of reconnections it makes. This motivates the following definition:
\begin{definition}[Efficient operations for facility location]
\label{def:phieff}	
	Given a $\phi \geq 0$, we say a local operation on a solution $(S, \sigma)$ for a facility location instance is $\phi$-efficient, if it decreases $\cost_\lambda(S, \sigma)$ by more than $\phi$ times the number of clients it reconnects.
\end{definition}

The following two theorems can be derived from the analysis for the local search algorithms for facility location. We include their proofs in Appendix~\ref{appendix:local-search} for completeness.

\begin{restatable}{theorem}{uflapprox}\label{thm:FL-offline-apx-ratio}
	Consider a facility location instance with cost of the optimum solution being $\opt$ (using the original cost function).  Let $(S, \sigma)$ be the current solution in our algorithm and $\phi \geq 0$ be a real number. If there are no $\phi$-efficient local operations on $(S, \sigma)$, then we have 
	\begin{align*}
		\cost(S, \sigma) \leq \alpha_\FL\big(\opt + |C|\phi\big). 
	\end{align*} 
\end{restatable}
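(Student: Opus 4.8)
The plan is to run the standard Arya et~al.\ local-search analysis, but carrying the extra additive term coming from the fact that we only forbid $\phi$-\emph{efficient} operations rather than all cost-decreasing operations. Fix an optimal solution $(S^*, \sigma^*)$. I would bound two quantities separately: the connection cost $\cc(\sigma)$ of the current solution, and the scaled facility cost $\lambda f(S)$, each in terms of $\opt$ and the optimal connection cost $\cc(\sigma^*)$ and facility cost $f(S^*)$, with a slack of $O(|C|\phi)$.

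First, for the connection cost: for each optimal facility $i^* \in S^*$, consider the $\open$ operation that opens $i^*$ and reconnects to it all clients in $\sigma^{*-1}(i^*)$. Since this reconnects $|\sigma^{*-1}(i^*)|$ clients and is not $\phi$-efficient, its scaled-cost decrease is at most $\phi \cdot |\sigma^{*-1}(i^*)|$; the cost change is at most $\lambda f_{i^*} + \sum_{j \in \sigma^{*-1}(i^*)} \big(d(j, i^*) - d(j, \sigma_j)\big)$, so summing over all $i^* \in S^*$ gives $\cc(\sigma) - \cc(\sigma^*) \le \lambda f(S^*) + |C|\phi$, i.e.\ $\cc(\sigma) \le \cc(\sigma^*) + \lambda f(S^*) + |C|\phi$.

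Second, for the facility cost, I would use the more delicate part of the Arya et~al.\ argument: build the bipartite "reassignment" structure between $S$ and $S^*$, classify each facility in $S$ as caught by at most one facility of $S^*$ (or by none), and for each $i^* \in S^*$ use either a $\swap$ (swap $i^*$ in, swap out the facility it "captures") or an $\open$ of $i^*$, together with $\close$ operations for the facilities of $S$ captured by no optimal facility. Each such operation, being non-$\phi$-efficient, contributes a slack proportional to $\phi$ times the number of clients it touches; the construction is arranged (by the usual splitting/rerouting trick) so that each client is charged $O(1)$ times across all operations, giving a total slack of $O(|C|\phi)$. Combining the inequalities in the standard way — the swaps pay for $\lambda f(S)$ against $\lambda f(S^*) + \cc(\sigma) + \cc(\sigma^*)$ with the right constants, and then plugging in the connection-cost bound — yields $\cost(S,\sigma) = f(S) + \cc(\sigma) \le \alpha_\FL(\opt + |C|\phi)$ after optimizing with $\lambda = \sqrt 2$, exactly as in the $\phi = 0$ case.

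The main obstacle is bookkeeping the slack terms: I must make sure that the set of test operations used in the lower-bound argument collectively reconnects each client only a constant number of times, so that the accumulated $\phi$-slack is $O(|C|\phi)$ rather than something scaling with $|F|$ or worse. This is precisely where the careful choice of which clients to reroute in each $\swap$/$\close$ (following Arya et~al.) matters; once that is set up, the inequalities combine mechanically and the $\lambda$-scaling optimization is identical to the classical proof. Since the excerpt defers the full details to Appendix~\ref{appendix:local-search}, in the main text I would only state this reduction and the constant-charging property and refer to the appendix for the routine calculation.
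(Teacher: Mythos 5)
Your plan follows essentially the same route as the paper's proof in Appendix~\ref{appendix:local-search}: the connection-cost bound comes from the non-$\phi$-efficiency of opening each $i^*\in S^*$ and reconnecting $\sigma^{*-1}(i^*)$ (giving $\cc(\sigma)\le \lambda f(S^*)+\cc(\sigma^*)+|C|\phi$), and the facility-cost bound comes from the standard Arya et~al.\ pairing of $S$ with $S^*$ via swaps and closes, with each client reconnected $O(1)$ times across the test operations so the slack is $2|C|\phi$; combining the two with weight $1/\lambda$ on the second and $\lambda=\sqrt2$ gives the claim. This matches the paper's argument, so the proposal is correct.
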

In particular, if we apply the theorem with $\phi = 0$, then we obtain that $(S, \sigma)$ is a $(\alpha_\FL = 1+\sqrt{2})$-approximation for the instance.

\sh{The following theorem will be used to analyze our randomized local search procedure.}
\begin{restatable}{theorem}{ufloperations}\label{thm:FL-offline-operations}
	Let $(S, \sigma)$  be a solution to a facility location instance and $\opt$ be the optimum cost. Then there are two sets $\calP_\rmC$ and $\calP_\rmF$ of valid local operations on $(S, \sigma)$, where each operation $\mathrm{op}$ decreases the scaled cost $\cost_\lambda(S, \sigma)$ by $\nabla_{\mathrm{op}} > 0$, such that the following holds: 
	\begin{itemize}
		\item $\sum_{\mathrm{op} \in \calP_\rmC} \nabla_{\mathrm{op}} \geq \cc(\sigma)- (\lambda f(S^*) + \cc(\sigma^*))  $.
		\item $\sum_{\mathrm{op} \in \calP_\rmF} \nabla_{\mathrm{op}} \geq  \lambda f(S) - (\lambda f(S^*) + 2\cc(\sigma^*)) $.
		\item There are at most $|F|$ $\close$ operations in $\calP_\rmC \biguplus \calP_\rmF$.
		\item For every $i \in F$,  there is at most 1 operation in each of $\calP_\rmC$ and $\calP_\rmF$ that opens or swaps in $i$.
	\end{itemize}
\end{restatable}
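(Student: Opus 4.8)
The plan is to follow the standard factor-revealing analysis of the Arya et al.\ local search, but to \emph{bookkeep} the hypothetical operations used in that analysis rather than merely summing inequalities, so that we can read off the structural properties in the last two bullets. Fix an optimum solution $(S^*, \sigma^*)$. The two families $\calP_\rmC$ and $\calP_\rmF$ arise from the two halves of the classical argument: $\calP_\rmC$ comes from the ``reassign each client to $\sigma^*$'' thought experiment that bounds the connection cost of the current solution, and $\calP_\rmF$ comes from the shifting/swap argument that bounds the scaled facility cost of the current solution. Concretely, I would first set up the bipartite structure between $S$ and $S^*$: for each $i^* \in S^*$ let $\phi(i^*)$ be the facility in $S$ closest (under $\sigma$-distances, or metric distances) to the clients served by $i^*$, and classify facilities of $S$ as ``good'' (captured by exactly one $i^*$) or not, exactly as in \cite{AryaGKMP01}. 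Then I would define:

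\emph{For $\calP_\rmC$:} for each $i^* \in S^*$, an $\open$ operation that opens $i^*$ (it may already be in $S$) and reconnects the set of clients $\{j : \sigma^*_j = i^*\}$ to $i^*$. The decrease $\nabla_{\mathrm{op}}$ of such an operation is at least $\sum_{j:\sigma^*_j=i^*}(d(j,\sigma_j) - d(j,i^*))$ (we pay nothing extra in scaled facility cost because we only open, and we may take the best such operation so the realized decrease is at least this amount). Summing over $i^* \in S^*$ gives $\sum \nabla_{\mathrm{op}} \ge \cc(\sigma) - \cc(\sigma^*) \ge \cc(\sigma) - (\lambda f(S^*) + \cc(\sigma^*))$, which is the first bullet. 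Each such operation opens a facility $i^* \in S^*$ and opens/swaps-in no other facility, and distinct $i^*$ give distinct opened facilities, so at most one operation in $\calP_\rmC$ opens or swaps in a given $i \in F$; and $\calP_\rmC$ contains no $\close$ operations at all.

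\emph{For $\calP_\rmF$:} this is the delicate half. Here I would use the standard case analysis that, for each $i^* \in S^*$, either performs a $\swap$ (close a ``good'' facility in $S$, open $i^*$) or, when $i^*$ has no associated good facility, an $\open$ of $i^*$; and additionally, for each non-good facility $i \in S$ that must be shut, a $\close$ operation reconnecting $\sigma^{-1}(i)$ along the $S^* \to \text{good facility}$ detour. The rerouting cost is charged exactly as in Arya et al., and the scaled facility-cost change contributes the $\lambda f(S)$ term on the left and $-\lambda f(S^*)$ on the right; the extra factor of $2$ in front of $\cc(\sigma^*)$ is precisely the slack that appears in the rerouting-via-$\phi(i^*)$ step of that proof. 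Summing yields the second bullet. The structural bullets then follow by construction: every closed facility is a distinct element of $S$, so there are at most $|F|$ $\close$ operations in $\calP_\rmC \uplus \calP_\rmF$ (in fact only in $\calP_\rmF$); and each $i^* \in S^*$ is opened/swapped-in by at most one operation of $\calP_\rmF$, while facilities not in $S^*$ are never opened, giving the last bullet.

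The main obstacle is making the case analysis for $\calP_\rmF$ produce \emph{valid} local operations in the sense of the paper's definition — in particular that each swap closes at most one facility and opens at most one, and that the clients it reconnects are only $\sigma^{-1}(i')$ plus clients moved onto the newly opened facility — while still having the decrease $\nabla_{\mathrm{op}}$ be \emph{positive} for each individual operation (the theorem demands $\nabla_{\mathrm{op}} > 0$, not just a nonnegative sum). I expect to handle this the same way \cite{AryaGKMP01} does: discard any hypothetical operation whose individual gain is $\le 0$ (it only helps the inequalities to drop it, since we are lower-bounding a sum of positive terms by a quantity that could be negative), and verify that the remaining operations still telescope to the claimed bounds. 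Routine triangle-inequality estimates for the rerouting costs, and the counting of good vs.\ bad facilities, are as in the cited analysis and I would defer them to the appendix.
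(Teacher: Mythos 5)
Your proposal is correct and takes essentially the same route as the paper: the paper likewise obtains $\calP_\rmC$ from the per-facility operations that open each $i^*\in S^*$ and reconnect $\sigma^{*-1}(i^*)$, obtains $\calP_\rmF$ from the swap/close/open operations in the facility-cost half of the Arya et al.\ analysis (via the $\psi$/$\psi^*$ mappings), adds the realized gain $\nabla_{\mathrm{op}}$ to the right-hand side of each corresponding inequality, and discards the operations with nonpositive gain. One small correction: for an operation in $\calP_\rmC$ with $i^*\notin S$, the scaled cost decrease is $\sum_{j\in\sigma^{*-1}(i^*)}\big(d(j,\sigma_j)-d(j,i^*)\big)-\lambda f_{i^*}$, so your parenthetical ``we pay nothing extra in scaled facility cost because we only open'' is false for newly opened facilities; summing the correct expression over $i^*\in S^*$ gives exactly the first bullet, i.e.\ the $\lambda f(S^*)$ term you treat as discardable slack is precisely the facility cost omitted from your per-operation bound.
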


\subsection{Useful Lemmas}
The following lemmas will be used repeatedly in our analysis and thus we prove them separately in Appendix~\ref{appendix:helper-proofs}.
\begin{restatable}{lemma}{helpersumba}
	\label{lemma:helper-sum-b/a}
	Let $b \in \R_{\geq 0}^T$ for some integer $T \geq 1$. Let $B_{T'} = \sum_{t=1}^{T'} b_t$ for every $T' = 0, 1, \cdots, T$. Let $0 < a_1 \leq a_2 \leq \cdots \leq a_T$ be a sequence of real numbers and $\alpha > 0$ such that $B_t \leq \alpha a_t$ for every $t \in [T]$. Then we have 
	\begin{align*}
		\sum_{t = 1}^T \frac{b_t}{a_t} \leq \alpha \left(\ln \frac{a_T}{a_1} + 1\right).
	\end{align*}
\end{restatable}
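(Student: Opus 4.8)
The plan is to prove Lemma~\ref{lemma:helper-sum-b/a} by a direct summation-by-parts / telescoping argument that exploits the monotonicity of the $a_t$'s. First I would rewrite $b_t = B_t - B_{t-1}$ and split
\[
\sum_{t=1}^T \frac{b_t}{a_t} = \sum_{t=1}^T \frac{B_t - B_{t-1}}{a_t}.
\]
Abel summation turns this into $\frac{B_T}{a_T} + \sum_{t=1}^{T-1} B_t\left(\frac1{a_t} - \frac1{a_{t+1}}\right)$, with $B_0 = 0$. Now I would use the hypothesis $B_t \le \alpha a_t$ on every term: the first term contributes at most $\alpha$, and since $\frac1{a_t} - \frac1{a_{t+1}} \ge 0$ by monotonicity, the $t$-th term in the sum is at most $\alpha a_t\left(\frac1{a_t} - \frac1{a_{t+1}}\right) = \alpha\left(1 - \frac{a_t}{a_{t+1}}\right)$. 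So the whole sum is bounded by $\alpha\left(1 + \sum_{t=1}^{T-1}\left(1 - \frac{a_t}{a_{t+1}}\right)\right)$.

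The remaining task is to show $\sum_{t=1}^{T-1}\left(1 - \frac{a_t}{a_{t+1}}\right) \le \ln\frac{a_T}{a_1}$. Here I would use the elementary inequality $1 - x \le -\ln x = \ln\frac1x$ for $x > 0$, applied with $x = a_t/a_{t+1} > 0$, which gives $1 - \frac{a_t}{a_{t+1}} \le \ln\frac{a_{t+1}}{a_t}$. Summing this telescopes: $\sum_{t=1}^{T-1}\ln\frac{a_{t+1}}{a_t} = \ln\frac{a_T}{a_1}$. Combining, $\sum_{t=1}^T \frac{b_t}{a_t} \le \alpha\left(1 + \ln\frac{a_T}{a_1}\right)$, which is exactly the claimed bound.

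I don't anticipate a serious obstacle here; this is a routine argument. The only points needing mild care are: handling the boundary terms correctly (the $B_0 = 0$ convention and the isolated $B_T/a_T$ term from Abel summation), making sure the sign of $\frac1{a_t} - \frac1{a_{t+1}}$ is nonnegative so that replacing $B_t$ by its upper bound $\alpha a_t$ goes in the right direction (this is where $a_1 \le a_2 \le \cdots \le a_T$ is essential), and invoking $\ln x \le x - 1$ (equivalently $1 - x \le \ln(1/x)$) in the correct direction. An alternative, essentially equivalent, route would be to compare the sum directly against the integral $\int_{a_1}^{a_T}\frac{\alpha}{a}\,da$ by noting $\frac{b_t}{a_t} \le \alpha\frac{a_t - a_{t-1}}{a_t} + (\text{correction})$; but the Abel-summation version above is cleaner and avoids fuss about the first index, so that is the one I would write up.
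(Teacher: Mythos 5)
Your proof is correct and follows essentially the same route as the paper's: Abel summation, then the hypothesis $B_t\le\alpha a_t$ combined with monotonicity of the $a_t$'s to reduce to bounding $\sum_{t}\bigl(1-\tfrac{a_t}{a_{t+1}}\bigr)$. The only difference is the last step, where you apply $1-x\le\ln\tfrac1x$ termwise and telescope, whereas the paper first aggregates via AM--GM and then applies $1-e^{-y}\le y$ once; your version is the slightly cleaner of the two.
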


\begin{restatable}{lemma}{helperstar}
	\label{lemma:helper-star}
	Assume at  some moment of an algorithm for facility location,  $C$ is the set of clients, $(S, \sigma)$ is the solution for $C$. Let $i \in F$ and $\tilde C \subseteq C$ be any non-empty set of clients.  Also at the moment there are no $\phi$-efficient operation that opens $i$ for some $\phi \geq 0$. Then we have 
	\begin{align*}
		d(i, S) \leq \frac{f_i + 2\sum_{\tilde j \in \tilde C} d(i, \tilde j)}{|\tilde C|} + \phi.
	\end{align*}
\end{restatable}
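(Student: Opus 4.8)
The plan is to probe the non-existence hypothesis with a single, specific local operation, namely the $\open$ operation that opens $i$ and reroutes all of $\tilde C$ to it; its failure to be $\phi$-efficient, combined with one application of the triangle inequality, delivers the bound.

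First I would dispose of the trivial case $i\in S$: then $d(i,S)=0$ while the right-hand side is a sum of nonnegative terms, so there is nothing to prove. Hence assume $i\notin S$. Let $\mathrm{op}$ be the operation that opens $i$ and reconnects every $\tilde j\in\tilde C$ to $i$, leaving all other connections untouched, and let $\sigma'$ be the resulting assignment. This is a valid $\open$ operation, and since $i\notin S$ it reconnects exactly $|\tilde C|$ clients; its effect on the scaled cost is
\begin{align*}
\cost_\lambda(S,\sigma)-\cost_\lambda(S\cup\{i\},\sigma')=\sum_{\tilde j\in\tilde C}\bigl(d(\tilde j,\sigma_{\tilde j})-d(\tilde j,i)\bigr)-\lambda f_i .
\end{align*}
Since by hypothesis no $\phi$-efficient operation opens $i$, in particular $\mathrm{op}$ is not $\phi$-efficient, so this quantity is at most $\phi|\tilde C|$, i.e.\ $\sum_{\tilde j\in\tilde C}\bigl(d(\tilde j,\sigma_{\tilde j})-d(\tilde j,i)\bigr)\le \lambda f_i+\phi|\tilde C|$.

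To finish I would lower-bound the left-hand side: for each $\tilde j\in\tilde C$, since $\sigma_{\tilde j}\in S$ the triangle inequality gives $d(i,S)\le d(i,\sigma_{\tilde j})\le d(i,\tilde j)+d(\tilde j,\sigma_{\tilde j})$, so $d(\tilde j,\sigma_{\tilde j})\ge d(i,S)-d(i,\tilde j)$; summing over $\tilde j\in\tilde C$ and substituting into the previous inequality yields $|\tilde C|\,d(i,S)\le 2\sum_{\tilde j\in\tilde C}d(i,\tilde j)+\lambda f_i+\phi|\tilde C|$, which after dividing by $|\tilde C|$ is exactly the asserted bound (with the facility term carrying the fixed scaling constant $\lambda=\sqrt2$ coming from $\cost_\lambda$). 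There is no genuine obstacle here; the only points that need care are the trivial $i\in S$ case, the bookkeeping that $\mathrm{op}$ reconnects precisely $|\tilde C|$ clients so that non-efficiency gives exactly the slack $\phi|\tilde C|$, and carrying the scaling factor $\lambda$ on $f_i$ through correctly.
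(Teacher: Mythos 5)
Your proof is correct and follows essentially the same route as the paper's: probe the hypothesis with the single $\open$ operation that opens $i$ and reconnects all of $\tilde C$ to it, then apply the triangle inequality $d(\tilde j,\sigma_{\tilde j})\ge d(i,S)-d(i,\tilde j)$ and average over $\tilde C$. The only point worth flagging is the one you already note: non-efficiency measured in the scaled cost literally yields $\lambda f_i$ rather than $f_i$ in the numerator (the paper's own proof silently drops the factor $\lambda=\sqrt2$), a discrepancy that is immaterial in every application of the lemma since the $f_i$ term is always absorbed into an $O(\cdot)$.
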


\paragraph{Organization} The rest of the paper is organized as follows. In Section~\ref{sec:ofl}, we prove Theorem~\ref{UFL-recourse} by giving our online algorithm for facility location with recourse. Section~\ref{sec:fast-UFL} gives the randomized local search procedure, that will be used in the proof of Theorem~\ref{UFL-dynamicIncremental} in Section~\ref{sec:dfl}.  Section~\ref{sec:dfl-fully} is dedicated to the proof of Theorem~\ref{UFL-fullydynamic}, by giving the fully dynamic algorithm for facility location in HST metrics. 
We give some open problems and future directions in Section~\ref{sec:discussions}. Some proofs are deferred to the appendix for a better flow of the paper.  
\section{$(1+\sqrt{2}+\epsilon)$-Competitive Online Algorithm with Recourse} \label{sec:ofl}
In this section, we prove Theorem~\ref{UFL-recourse} by giving the algorithm for online facility location with recourse. 
\subsection{The Algorithm}
For any $\epsilon >0$, let $\epsilon' = \Theta(\epsilon)$ be a parameter that is sufficiently small so that the approximation ratio $\alpha_\FL + O(\epsilon')= 1+\sqrt{2} + O(\epsilon')$ achieved by our algorithm is at most $\alpha_\FL + \epsilon$.  Our algorithm for online facility location is easy to describe. Whenever the client $j_t$ comes at time $t$, we use a simple rule to connect $j_t$, as defined in the procedure  $\mathsf{initial\mhyphen connect}$ in Algorithm~\ref{alg:initial-connect}: either connecting $j_t$ to the nearest facility in $S$, or opening and connecting $j_t$ to its nearest facility in $F \setminus S$, whichever incurs the smaller cost. Then we repeatedly perform $\phi$-efficient operations (Definition \ref{def:phieff}), until no such operations can be found, for $\phi=\frac{\epsilon'\cdot \cost(S, \sigma)}{\alpha_\FL|C|}$. \footnote{There are exponential number of possible operations, but we can check if there is a $\phi$-efficient one efficiently. $\close$ operations can be handled easily. To check if we can open a facility $i$, it suffices to check if $\sum_{j \in C: d(j, i) + \phi <  d(j,\sigma_j)} (d(j, \sigma_j) - d(j, i)- \phi ) > \lambda f_i \cdot 1_{i \notin S}$. $\swap$ operations are more complicated but can be handled similarly.}  

	\begin{algorithm}[htb]
		\caption{$\mathsf{initial\mhyphen connect}(j)$} \label{alg:initial-connect}
		\begin{algorithmic}[1]
				\If{$\min_{i \in F\setminus S}(f_i + d(i, j)) < d(j, S)$}
					\State let $i^* = \arg\min_{i \in F\setminus S}(f_i + d(i, j))$, $S \gets S \cup \{i^*\}, \sigma_j \gets i^*$
				\Else \ $\sigma_j \gets \arg\min_{i \in S} d(j, i)$
				\EndIf
		\end{algorithmic}
	\end{algorithm}

We can show that the algorithm gives an $(\alpha_\FL + \epsilon)$-approximation with amortized recourse $O(\log D\log n)$; recall that $D$ is the aspect ratio of the metric. To remove the dependence on $D$, we divide the algorithm into stages, and \emph{freeze} the connections of clients that arrived in early stages. The final algorithm is described in Algorithm~\ref{alg:ofl}, and Algorithm~\ref{alg:ofl-one-stage} gives  one stage of the algorithm. 

	\begin{algorithm}
			\caption{One Stage of Online Algorithm for Facility Location}
			\label{alg:ofl-one-stage}
			\begin{algorithmic}[1]
				\Require{
					\begin{itemize}
						\item $C$: initial set of clients
						\item $(S, \sigma)$: a solution for $C$ which is $O(1)$-approximate
						\item Clients $j_1, j_2, \cdots $ arrive from time to time
					\end{itemize}
				}
				\Ensure{
					Guaranteeing that $(S, \sigma)$ at the end of each time $t$ is $\frac{\alpha_\FL}{1 - \epsilon'}$-approximate 
				}
				\State $\mathsf{init} \gets \cost(S, \sigma)$
				\For{$t \gets 1, 2, \cdots$, terminating if no more clients will arrive}
					\State $C\gets C\cup\{j_t\}$, and call $\mathsf{initial\mhyphen connect}(j_t)$
					\label{step:ofl-settle-down}
					\While{there exists an $\frac{\epsilon'\cdot \cost(S, \sigma)}{\alpha_\FL|C|}$-efficient local operation} \label{step:ofl-while}
						perform the operation
					\EndWhile
					\If {$\cost(S, \sigma) > \mathsf{init}/\epsilon'$} terminate the stage \EndIf
				\EndFor
			\end{algorithmic}		
	\end{algorithm} 
	
	\begin{algorithm}[htb]
		\caption{Online Algorithm for Facility Location} \label{alg:ofl}
		\begin{algorithmic}[1]
			\State $C \gets \emptyset, S \gets \emptyset, \sigma = ()$
			\Repeat
				\State $C^\circ \gets C, (S^\circ, \sigma^\circ) \gets (S, \sigma)$
				\State redefine the next time to be time 1 and run one stage as defined in Algorithm \ref{alg:ofl-one-stage}
				\State permanently open one copy of each facility in $S^\circ$, and permanently connect clients in $C^\circ$ according to $\sigma^\circ$ (we call the operation \emph{freezing} $S^\circ$ and $C^\circ$)
				\State $C \gets C \setminus C^\circ$, restrict the domain of $\sigma$ to be the new $C$
			\Until no clients come
		\end{algorithmic}
	\end{algorithm}

In Algorithm~\ref{alg:ofl-one-stage}, we do as described above,\janr{what does before refers to here?}\shr{I changed to ``above''.} with two modifications. First, we are given an initial set $C$ of clients and a solution $(S, \sigma)$ for $C$ which is $O(1)$-approximate.  Second, the stage will terminate if the cost of our solution increases by a factor of more than $1/\epsilon'$. The main algorithm (Algorithm~\ref{alg:ofl}) is broken into many stages.  Since we shall focus on one stage of the algorithm for most part of our analysis, we simply redefine the time so that every stage starts with time 1.  The improved recourse comes from the \emph{freezing} operation: at the end of each stage, we permanently open one copy of each facility in $S^\circ$, and permanently connect clients in $C^\circ$ to copies of $S^\circ$ according to $\sigma^\circ$, where $C^\circ$ and $(S^\circ, \sigma^\circ)$ are the client set and solution at the beginning of the stage.   Notice that we assume the original facilities in $S^\circ$ will still participate in the algorithm in the future; that is, they are subject to opening and closing.  Thus each facility may be opened multiple times during the algorithm and we take the facility costs of all copies into consideration. This assumption is only for the sake of analysis; the actual algorithm only needs to open one copy and the costs can only be smaller compared to the described algorithm.  

From now on, we focus on one stage of the algorithm and assume that the solution given at the beginning of each stage is $O(1)$-approximate. In the end we shall account for the loss due to the freezing of clients and facilities.  Within a stage, the approximation ratio follows directly from Theorem~\ref{thm:FL-offline-apx-ratio}:  Focus on the moment after the while loop at time step $t$ in Algorithm~\ref{alg:ofl-one-stage}. Since there are no $\frac{\epsilon'\cdot \cost(S,\sigma)}{\alpha_\FL|C|}$-efficient local operations on $(S, \sigma)$, we have by the theorem that $\cost(S, \sigma) \leq \alpha_\FL\left(\opt + |C|\cdot \frac{\epsilon'\cdot \cost(S, \sigma)}{\alpha_\FL|C|}\right) = \alpha_\FL\opt + \epsilon'\cdot\cost(S, \sigma)$, where $\opt$ is the cost of the optimum solution for $C$.  Thus, at the end of each time, we have $\cost(S, \sigma) \leq \frac{\alpha_\FL}{1-\epsilon'}\cdot\opt$.

\subsection{Bounding Amortized Recourse in One Stage}
We then bound the amortized recourse in a stage; we assume that $\cost(S, \sigma) > 0$ at the beginning of the stage since otherwise there will be no recourse involved in the stage (since we terminate the stage when the cost becomes non-zero).  We use $T$ to denote the last time of the stage. For every time $t$, let $C_t$ be the set $C$ at the end of time $t$, and $\opt_t$ to be the cost of the optimum solution for the set $C_t$. For every $t \in [T]$, we define $\Delta_t$ to be the value of $\cost(S, \sigma)$ after Step~\ref{step:ofl-settle-down} at time step $t$ in Algorithm~\ref{alg:ofl-one-stage}, minus that before Step~\ref{step:ofl-settle-down}.  We can think of this as the cost increase due to the arrival of $j_t$.


The key lemma we can prove is the following:
\begin{lemma}\label{lmm:ofl-delta-cost-bound}
	For every $T' \in [T]$, we have
	$$\sum_{t = 1}^{T'} \Delta_t \leq O(\log T')\opt_{T'}.$$
\end{lemma}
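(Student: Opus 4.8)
The plan is to bound each $\Delta_t$ individually and then sum using Lemma~\ref{lemma:helper-sum-b/a}. First I would understand $\Delta_t$: it is the cost increase caused by running $\mathsf{initial\mhyphen connect}(j_t)$ on the solution $(S,\sigma)$ that was present at the end of time $t-1$. By the definition of that procedure, $\Delta_t \le \min_{i \in F\setminus S}(f_i + d(i,j_t))$ and also $\Delta_t \le d(j_t, S)$, hence $\Delta_t \le \min\{d(j_t,S),\ \min_{i\in F}(f_i + d(i,j_t))\}$ where the min over $i \in F$ rather than $F\setminus S$ only helps. The goal is to charge this to $\opt_t$. The natural way: let $(S^*,\sigma^*)$ be an optimum solution for $C_t$, and let $i^* = \sigma^*_{j_t}$ be the facility serving $j_t$ in that solution. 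Then $f_{i^*} + d(i^*, j_t) \le f(S^*) + \opt_t\cdot(\text{something})$... but that is too lossy. The key point is that the solution $(S,\sigma)$ at the end of time $t-1$ is a $\phi$-locally-optimal solution (no $\frac{\epsilon'\cost(S,\sigma)}{\alpha_\FL|C|}$-efficient operations), so I can invoke Lemma~\ref{lemma:helper-star} with $i = i^* = \sigma^*_{j_t}$ and $\tilde C = \sigma^{*-1}(i^*)\cap C_{t-1}$ to get $d(i^*, S) \le \frac{f_{i^*} + 2\sum_{\tilde j}d(i^*,\tilde j)}{|\tilde C|} + \phi_{t-1}$, where $\phi_{t-1}$ is the efficiency threshold at the end of time $t-1$. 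Then $\Delta_t \le d(j_t, S) \le d(j_t, i^*) + d(i^*, S)$, and combining these bounds gives $\Delta_t$ in terms of $\opt_t$-type quantities plus an error term involving $\phi_{t-1}\le \frac{\epsilon'\cost(S,\sigma)}{\alpha_\FL|C_{t-1}|} = O(\epsilon')\cdot\frac{\opt_{t-1}}{|C_{t-1}|}$ (using the approximation guarantee within the stage).

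Summing over $t \le T'$, the main terms telescope: $\sum_t \frac{f_{i^*_t} + 2\sum_{\tilde j}d(i^*_t,\tilde j)}{|\sigma^{*-1}(i^*_t)\cap C_{t-1}|}$. Here the subtlety is that the optimum solution changes with $t$; to make the charging argument clean I would fix $(S^*,\sigma^*)$ to be an optimum solution for the \emph{final} client set $C_{T'}$, and note $f(S^*) + \cc(\sigma^*) = \opt_{T'}$ while also $\cost$ of this solution restricted to $C_t$ is $\le \opt_{T'}$ for all $t\le T'$. Within each cluster $\sigma^{*-1}(i^*)$, order the clients by arrival time; when the $k$-th client of that cluster arrives, the denominator $|\sigma^{*-1}(i^*)\cap C_{t-1}|$ is $k-1$ (or handle $k=1$ separately, where $\Delta_t\le f_{i^*}+d(i^*,j_t)$ directly). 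So the sum over one cluster telescopes to something like $O(\log(\text{cluster size}))\cdot(f_{i^*} + \sum_{j \in \text{cluster}}d(i^*, j))$, and summing over all clusters gives $O(\log T')\cdot\opt_{T'}$. This is exactly the shape of Lemma~\ref{lemma:helper-sum-b/a} applied per cluster (or globally), which converts a sum of $b_t/a_t$ with $B_t\le\alpha a_t$ into $\alpha(\ln(a_T/a_1)+1) = O(\log T')\opt_{T'}$.

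For the error term $\sum_{t\le T'}\phi_{t-1}$, I would use that $\phi_{t-1} = O(\epsilon')\frac{\cost(S,\sigma)_{t-1}}{|C_{t-1}|} = O(\epsilon')\frac{\opt_{t-1}}{|C_{t-1}|}$ and that $\opt_t$ is nondecreasing in $t$ while $|C_t| = |C_0| + t$ grows, so $\sum_{t\le T'}\frac{\opt_{t-1}}{|C_{t-1}|} \le \opt_{T'}\sum_{t}\frac{1}{|C_0|+t-1} = O(\log T')\opt_{T'}$ — again a harmonic sum, and the $\epsilon'$ factor makes it at worst $O(\log T')\opt_{T'}$ (in fact $O(\epsilon'\log T')\opt_{T'}$).

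The main obstacle I anticipate is the bookkeeping around the first client in each optimum cluster (where the denominator would be $0$ and Lemma~\ref{lemma:helper-star} does not apply), and more importantly making the per-cluster telescoping rigorous given that $\sigma$ and $C$ and the locally-optimal structure all change between the arrival of consecutive clients of the same cluster — I need that at each time $t$ the solution $(S,\sigma)$ that $\Delta_t$ is measured against is genuinely $\phi$-locally-optimal, which it is because it is the solution from the end of the previous while-loop. A secondary subtlety is that $\mathsf{initial\mhyphen connect}$ may itself open a facility, so $\Delta_t$ could include a facility cost; but since $\Delta_t \le d(j_t, S)$ always (the "else" branch is available), the $d(j_t,S)\le d(j_t,i^*) + d(i^*,S)$ bound handles everything uniformly and no separate treatment of the facility-opening case is needed except for the very first client of a cluster.
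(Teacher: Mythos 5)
Your plan is correct and follows essentially the same route as the paper's proof: fix an optimum solution for $C_{T'}$, charge each cluster separately by arrival order, bound the first client's $\Delta$ by $f_{i^*}+d(i^*,j)$ and each subsequent one via Lemma~\ref{lemma:helper-star} applied to the previously arrived cluster members (valid because $(S,\sigma)$ has no $\phi$-efficient operations at the end of the previous while-loop), and sum the resulting harmonic coefficients to get $O(\log T')\opt_{T'}$, with the $\phi$-error terms also summing harmonically. The only cosmetic difference is that the final summation is a direct harmonic-coefficient count rather than an application of Lemma~\ref{lemma:helper-sum-b/a}, which the paper reserves for the recourse bound.
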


\begin{proof}  Consider the optimum solution for $C_{T'}$ and focus on any star $(i, C')$ in the solution; that is, $i$ is an open facility and $C'$ is the set of clients connected to $i$.  Assume $C' \setminus C_0 = \{j_{t_1}, j_{t_2}, \cdots, j_{t_s}\}$, where $1 \leq t_1 < t_2 < \cdots < t_s \leq T'$; recall that $C_0$ is the initial set of clients given at the beginning of the stage. We shall bound $\sum_{s' = 1}^s\Delta_{t_{s'}}$ in terms of the cost of the star $(i, C' \setminus C_0)$.

By the rule specified in $\mathsf{initial\mhyphen connect}$, we have $ \Delta_{t_1 }\le f_i + d(i, j_{t_1})$. Now focus on any integer $k \in [2, s]$. Before Step~\ref{step:ofl-settle-down} at time $t_k$, no $\Big(\phi:= \frac{ \epsilon'\cdot  \cost(S, \sigma) }{\alpha_\FL|C_{t_k-1}|} \leq \frac{O(\epsilon')\cdot\opt_{t_k-1}}{t_k-1} \leq \frac{O(\epsilon')\cdot\opt_{T'}}{t_k-1} \Big)$-efficient operation that opens $i$ is available. Thus, we can apply Lemma~\ref{lemma:helper-star} on $i$, $\tilde C = \{j_{t_1}, j_{t_2}, \cdots, j_{t_{k-1}}\}$ and $\phi$ to conclude that before Step~\ref{step:ofl-settle-down}, we have
\begin{align*}
	d(i,S)\leq \frac{f_i + 2\cdot \sum_{k'=1}^{k-1 } d(i, j_{t_{k'} }) }{k-1}+ \frac{ O(\epsilon') \cdot  \opt_{T'}}{t_k-1}.
\end{align*}
%
%
In $\mathsf{initial\mhyphen connect}(j_{t_k})$, we have the option of connecting $j_{t_k}$ to its nearest open facility.  Thus, we have 
\begin{align*}
  \Delta_{t_{k} } \le d(i, S) + d(i, j_{t_{k}} ) 
  &\le  \frac{f_i + 2\cdot \sum_{k'=1}^{k-1 } d(i, j_{t_{k'} }) }{k-1}+ \frac{ O(\epsilon') \cdot  \opt_{T'}}{t_k-1 } + d(i, j_{t_{k}} ).
  \end{align*}
  
 We now sum up the above inequality for all $k \in [2, s]$ and that $\Delta_{t_1}\leq f_i + d(j, j_{t_1})$.  We get
 \begin{align}
 	\sum_{k=1}^s \Delta_{t_{k}} \leq O(\log s)\left(f_i + \sum_{k'=1}^sd(i, j_{t_{k'}})\right) + O(\epsilon')\sum_{k=2}^s\frac{\opt_{T'}}{t_k-1}. \label{inequ:each-star}
 \end{align}
 To see the above inequality, it suffices to consider the coefficients for $f_i$ and $d(i, j_{t_{k'}})$'s on the right-hand side.  The coefficient for $f_i$ is at most $1 + \frac11 + \frac12 + \cdots + \frac1{s-1} = O(\log s)$; the coefficient for each $d(i, j_{t_{k'}})$ is $ 1 + \frac{2}{k'} + \frac{2}{k'+1} + \cdots +\frac{2}{s-1} = O(\log s)$. 

We now take the sum of \eqref{inequ:each-star} over all stars $(i, C')$ in the optimum solution for $C_{T'}$.  The sum for the first term on the right side of \eqref{inequ:each-star} will be $O(\log T')\opt_{T'}$ since $f_i + \sum_{k'=1}^sd(i, j_{t_{k'}})$ is exactly the cost of the star $(i, C' \setminus C_0 \subseteq C')$. The sum for the second term will be $O(\epsilon'\log T')\cdot \opt_{T'}$ since the set of integers $t_k-1$ overall stars $(i, C')$ and all $k \geq 2$ are all positive and distinct. Thus overall, we have $\sum_{t = 1}^{T'} \Delta_t \leq O(\log T')\opt_{T'}$. 
\end{proof}

With Lemma~\ref{lmm:ofl-delta-cost-bound}, we can now bound the amortized recourse of one stage.  In time $t$, $\cost(S, \delta)$ first increases by $\Delta_t$ in Step~\ref{step:ofl-settle-down}. Then after that, it decreases by at least $\frac{\epsilon'\cost(S, \sigma)}{\alpha_\FL|C|} \geq \frac{\epsilon'\opt_t}{\alpha_\FL|C|} \geq \frac{\epsilon'\opt_t}{\alpha_\FL|C_T|}$ for every reconnection we made. 
 Let $\Phi_{T'} = \sum_{t = 1}^{T'}\Delta_t$; Lemma~\ref{lmm:ofl-delta-cost-bound} says $\Phi_t \leq \alpha \opt_{t}$ for some $\alpha = O(\log T)$ and every $t \in [T]$.  Noticing that $(\opt_t)_{t \in T}$ is a non-decreasing sequence, the total number of reconnections is at most
\begin{align*}	
	&\frac{\textsf{init}}{\epsilon'\cdot\opt_1/(\alpha_\FL|C_T|)} + \sum_{t=1}^T\frac{\Delta_t}{\epsilon' \cdot \opt_{t}/(\alpha_\FL|C_T|)} = \frac{\alpha_\FL|C_T|}{\epsilon'} \left( \frac{\textsf{init}}{\opt_1} + \sum_{t = 1}^{T-1} \frac{\Delta_t}{\opt_{t}} + \frac{\Delta_T}{\opt_T}\right).
\end{align*}
Notice that $\mathsf{init} \leq O(1)\opt_0 \leq O(1)\opt_1$.
Applying Lemma~\ref{lemma:helper-sum-b/a} with $T$ replaced by $T-1$, $b_t = \Delta_t, B_t = \Phi_t$ and $a_t = \opt_{t}$ for every $t$, we have that $\sum_{t=1}^{T-1}\frac{\Delta_t}{\opt_{t}} \leq \alpha \left(\ln\frac{\opt_{T-1}}{\opt_1} + 1\right) = O\left(\log T\log\frac1{\epsilon'}\right)$, since we have $\opt_{T-1} \leq O(1/\epsilon')\cdot\opt_1$\xguor{I guess this should be $\opt_{T-1} \leq \opt_1*O(1/\epsilon')$}\shr{Yes. Done}.  Notice that $\Delta_T \leq \opt_T$ since $\opt_T \geq \min_{i \in F}(f_i + d(i, j_T)) \geq \Delta_T$.  So, the total number of reconnections is at most $O\left(\frac{\log T}{\epsilon'}\log\frac1{\epsilon'}\right)\cdot|C_T|$. The amortized recourse per client is $O\left(\frac{\log T}{\epsilon'}\log\frac1{\epsilon'}\right) \leq O\left(\frac{\log n}{\epsilon'}\log\frac1{\epsilon'}\right)$, where in the amortization, we only considered clients  involved in the stage.   Recall that $n$ is the total number of clients arrived.

As each client appears in at most 2 stages, the overall amortized recourse is $O\left(\frac{\log n}{\epsilon'}\log\frac1{\epsilon'}\right)$. Finally we consider the loss in the approximation ratio due to freezing of clients.  Suppose we are in the $p$-th stage. Then the clients arrived at and before $(p-2)$-th stage has been frozen and removed.  Let $\overline{\opt}$ be the cost of the optimum solution for all clients arrived at or before $(p-1)$-th stage.  Then the frozen facilities and clients have cost at most $\overline\opt \cdot O\left(\epsilon' + \epsilon'^2 + \epsilon'^2 + \cdots \right) = O(\epsilon')\overline{\opt}$.  In any time in the $p$-th stage, the optimum solution taking all arrived clients into consideration has cost $\overline\opt' \geq \overline\opt$, and our solution has cost at most $(\alpha_\FL + O(\epsilon'))\overline\opt'$ without considering the frozen clients and facilities. Thus, our solution still has approximation ratio $\frac{(\alpha_\FL + O(\epsilon'))\overline\opt' + O(\epsilon')\overline\opt}{\overline\opt'}  = \alpha_\FL +  O(\epsilon')$ when taking the frozen clients into consideration. 
\section{Fast Local Search via Randomized Sampling} \label{sec:fast-UFL} 
%
\janr{This whole paragraph reads weird, and needs to rewritten. We have not defined category, the first line is incorrect.} \shr{Rewrote the paragraph.}
From now on, we will be concerned with dynamic algorithms. 
Towards proving Theorem \ref{UFL-dynamicIncremental} for the incremental setting, we first develop a randomized procedure that allows us to perform local search operations fast. 
In the next section, we use this procedure and ideas from the previous section to develop the dynamic algorithm with the fast update time.

The high level idea is as follows: We partition the set of local operations into many ``categories'' depending on which facility it tries to open or swap in. In each iteration of the procedure, we sample the category according to some distribution and find the best local operation in this category.  By only focusing on one category,  one iteration of the procedure can run in time $O(|C|\log |F|)$. On the other hand, the categories and the distribution over them are designed in such a way that in each iteration, the cost of our solution will be decreased by a multiplicative factor of $1 - \Omega\big(\frac1{ |F|}\big)$. This idea has been used in \cite{CharikarGhua2005} to obtain their $\tilde O(n^2)$ algorithm for approximating facility location. However, their algorithm was based on a different local search algorithm and analysis; for consistency and convenience of description, we stick to original local search algorithm of \cite{AryaGKMP01} that leads to $(1+\sqrt{2})$-approximation for the problem. Our algorithm needs to use the heap data structure.  



\subsection{Maintaining Heaps for Clients}
Unlike the online algorithm for facility location in Section~\ref{sec:ofl}, in the dynamic algorithm, we guarantee that the clients are connected to their nearest open facilities. That is, we always have $\sigma_j = \arg\min_{i\in S} d(j, i)$; we still keep $\sigma$ for convenience of description. We maintain $|C|$ min-heaps, one for each client $j \in C$: The min-heap for $j$ will contain the facilities in $S \setminus \{\sigma_j\}$, with priority value of $i$ being $d(j, i)$.   This allows us to efficiently retrieve the second nearest open facility to each $j$: This is the facility at the top of the heap for $j$ and we use the procedure $\mathsf{heap\mhyphen top}(j)$ to return it. 



\begin{figure*}
	\begin{algorithm}[H]
		\caption{$\mathsf{\Delta\mhyphen open}(i)$: \Return $\lambda f_i - \sum_{j \in C}  \max\{0, d(j, \sigma_{j}) - d(j, i)\}$}
		\label{alg:Delta-open}
	\end{algorithm}\vspace*{-25pt}
	\begin{algorithm}[H]
		\caption{$\mathsf{try\mhyphen open}(i)$} \label{alg:try-open}
		\begin{algorithmic}[1]
			\State \textbf{if} $\mathsf{\Delta\mhyphen open}(i) < 0$ \textbf{then} open $i$ by updating $S, \sigma$ and heaps accordingly
		\end{algorithmic}
	\end{algorithm}\vspace*{-25pt}
	\begin{algorithm}[H]
		\caption{$\mathsf{\Delta\mhyphen swap\mhyphen in}(i)$} \label{alg:Delta-swap-in}
		\begin{algorithmic}[1]
			\State $C' \gets \{j \in C: d(j, i) < d(j, \sigma_j)\}$ and $\Psi \gets \lambda f_i - \sum_{j \in C'} \big(d(j, \sigma_j) - d(j, i)\big)$ \label{step:Delta-swap-in-C'-Psi}
			\State $\Delta \gets \min_{i' \in S}\left\{\sum_{j \in \sigma^{-1}(i') \setminus C'}\big[\min\{d(j, i), d(j, \mathsf{heap\mhyphen top}(j))\} - d(j, i')\big] - \lambda f_{i'}\right\} + \Psi$ \label{step:Delta-swap-in-Delta}
			\State \Return $(\Delta, \text{the $i'$ above achieving the value of $\Delta$})$ \label{step:Delta-swap-in-i'}
		\end{algorithmic}
	\end{algorithm}\vspace*{-25pt}
	\begin{algorithm}[H]
		\caption{$\mathsf{\Delta\mhyphen close}$} \label{alg:Delta-close}
		\begin{algorithmic}[1]		
			\State $\Delta \gets \min_{i' \in S}\left\{\sum_{j \in \sigma^{-1}(i')}\big[d(j, \mathsf{heap\mhyphen top}(j)) - d(j, i')\big] - \lambda f_{i'}\right\}$
			\State \Return $(\Delta, \text{the $i'$ above achieving the value of $\Delta$})$
		\end{algorithmic}
	\end{algorithm}
\end{figure*}

We define four simple procedures $\mathsf{\Delta\mhyphen open}, \mathsf{try\mhyphen open},  \mathsf{\Delta\mhyphen swap\mhyphen in}$ and $\mathsf{\Delta\mhyphen close}$ that are described in Algorithms \ref{alg:Delta-open}, \ref{alg:try-open}, \ref{alg:Delta-swap-in} and \ref{alg:Delta-close} respectively.  Recall that we use the \emph{scaled cost}  for the local search algorithm; so we are working on the scaled cost function in all these procedures. $\mathsf{\Delta\mhyphen open}(i)$ for any $i \notin S$ returns $\Delta$, the increment of the scaled cost that will be incurred by opening $i$. (For it to be useful, $\Delta$ should be negative, in which case $|\Delta|$ indicates the cost decrement of opening $i$).  This is just one line procedure as in Algorithm~\ref{alg:Delta-open}; \janr{Which of these 1 line procedure?}\shr{in Algorithm~\ref{alg:Delta-open}} $\mathsf{try\mhyphen open}$ will open $i$ if it can reduce the scaled cost.  $\mathsf{\Delta\mhyphen swap\mhyphen in}(i)$ for some $i \notin S$ returns a pair $(\Delta, i')$, where $\Delta$ is the smallest scaled cost increment we can achieve by opening $i$ and closing some facility $i' \in S$, and $i'$ gives the facility achieving the smallest value. (Again, for $\Delta$ to be useful, it should be negative, in which case $i'$ is the facility that gives the maximum scaled cost decrement $|\Delta|$.)  Similarly, $\mathsf{\Delta\mhyphen close}$ returns a pair $(\Delta, i')$, which tells us the maximum scaled cost decrement we can achieve by closing one facility and which facility can achieve the decrement. Notice that in all the procedures, the facility we shall open or swap in is given as a parameter, while the facility we shall close is chosen and returned by the procedures.

With the heaps, the procedures $\mathsf{\Delta\mhyphen open}, \mathsf{\Delta\mhyphen swap\mhyphen in}$ and $\mathsf{\Delta\mhyphen close}$ can run in $O(|C|)$ time. We only analyze $\mathsf{\Delta\mhyphen swap\mhyphen in}(i)$ as the other two are easier.   First, we define $C'$ to be the set of clients $j$ with $d(j, i) < d(j, \sigma_j)$; these are the clients that will surely be reconnected to $i$ once $i$ is swapped in.  Let $\Psi = \lambda f_i - \sum_{j \in C'} (d(j, \sigma_j) - d(j, i))$ be the net scaled cost increase by opening $i$ and connecting $C'$ to $i$.  The computation of $C'$ and $\Psi$ in Step~\ref{step:Delta-swap-in-C'-Psi} takes $O(|C|)$ time.  If additionally we close some $i' \in S$, we need to reconnect each client in $\sigma^{-1}(i') \setminus C'$ to either $i$, or the top element in the heap for $j$, whichever is closer to $j$.  Steps \ref{step:Delta-swap-in-Delta} and \ref{step:Delta-swap-in-i'}  compute and return the best scaled cost increment and the best $i'$.  Since $\sum_{i' \in S}|\sigma^{-1}(i')| = |C|$, the running time of the step can be bounded by $O(|C|)$. 

The running time for $\mathsf{try\mhyphen open}$, swapping two facilities and closing a facility (which are not defined explicitly as procedures, but used in Algorithms~\ref{alg:sample}) can be bounded by $O(|C|\log |F|)$.  The running times come from updating the heap structures: For each of the $|C|$ heaps,  we need to delete and/or add at most $2$ elements; each operation takes time $O(\log  |F|)$.  


\subsection{Random Sampling of Local Operations}

\begin{figure*}
	\begin{algorithm}[H]
		\caption{$\mathsf{sampled\mhyphen local\mhyphen search}$} \label{alg:sample}
		\begin{algorithmic}[1]
				\If{$\mathsf{rand}(0, 1) < 1/3$} \Comment{$\mathsf{rand}(0, 1)$ returns a uniformly random number in $[0, 1]$}
					\State$(\Delta, i') \gets \mathsf{\Delta\mhyphen close}$
					\State\textbf{if} $\Delta < 0$ \textbf{then} close $i'$ by updating $S, \delta$ and heaps accordingly
				\Else
					\State $i \gets $ random facility in $F \setminus S$
					\State $\Delta \gets \mathsf{\Delta\mhyphen open}(i), (\Delta', i') \gets \mathsf{\Delta\mhyphen swap\mhyphen in}(i)$
					\State \textbf{if} $\Delta \leq \Delta'$ and $\Delta < 0$ \textbf{then} open $i$ by updating $S, \delta$ and heaps accordingly
					\State \textbf{else if} $\Delta' < 0$ \textbf{then} open $i$ and close $i'$ by updating $S, \delta$ and heaps accordingly
				\EndIf
		\end{algorithmic}
	\end{algorithm} \vspace*{-15pt}
	\begin{algorithm}[H]
		\caption{$\mathsf{FL\mhyphen iterate}(M)$} \label{alg:FL-iterate}
		\begin{algorithmic}[1]
				\State $(S^{\mathrm{best}}, \sigma^{\mathrm{best}}) \gets (S, \sigma)$
				\For{$\ell \gets 1$ to $M$}
					\State call $\mathsf{sampled\mhyphen local\mhyphen search}$
					\If{$\cost(S, \sigma) < \cost(S^{\mathrm{best}}, \sigma^{\mathrm{best}})$}  $(S^{\mathrm{best}}, \sigma^{\mathrm{best}}) \gets (S, \sigma)$ \EndIf  
				\EndFor
				\State \Return $(S^\best, \sigma^\best)$
		\end{algorithmic}
	\end{algorithm}
\end{figure*}

With the support of the heaps, we can design a fast algorithm to implement randomized local search.  $\mathsf{sampled\mhyphen local\mhyphen search}$ in Algorithm~\ref{alg:sample} gives one iteration of the local search. We first decide which operation we shall perform randomly. With probability $1/3$, we perform the $\close$ operation that will reduce the scaled cost the most (if it exists). With the remaining probability $2/3$,  we perform either an $\open$ or a $\swap$ operation. To reduce the running time, we randomly choose a facility $i \in F \setminus S$ and find the best operation that opens or swaps in $i$, and perform the operation if it reduces the cost. One iteration of $\mathsf{sampled\mhyphen local\mhyphen search}$ calls the procedures in Algorithms~\ref{alg:Delta-open} to \ref{alg:Delta-close} at most once and performs at most one operation, and thus has running time $O(|C|\log  |F|)$.

In the procedure $\mathsf{FL\mhyphen iterate}(M)$ described in Algorithm~\ref{alg:FL-iterate}, we run the $\mathsf{sampled\mhyphen local\mhyphen search}$ $M$ times.  It returns the best solution obtained in these iterations, according to the \emph{original (non-scaled) cost}, which is not necessarily the solution given in the last iteration. So we have
\begin{obs}
	\label{obs:time-iterate}
	The running time of $\mathsf{FL\mhyphen iterate}(M)$ is $O(M|C|\log |F|)$, where $C$ is the set of clients when we run the procedure.
\end{obs}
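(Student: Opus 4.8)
Observation \ref{obs:time-iterate} — the running time of $\mathsf{FL\mhyphen iterate}(M)$ is $O(M|C|\log|F|)$.

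The plan is to bound the running time of $\mathsf{FL\mhyphen iterate}(M)$ by decomposing it into the $M$ iterations of the main loop, each of which makes one call to $\mathsf{sampled\mhyphen local\mhyphen search}$ plus a constant amount of extra bookkeeping (comparing $\cost(S,\sigma)$ against $\cost(S^\best,\sigma^\best)$ and possibly copying the solution). So it suffices to argue that a single call to $\mathsf{sampled\mhyphen local\mhyphen search}$ runs in time $O(|C|\log|F|)$, that each comparison/update of the incumbent best solution costs $O(|C|)$ (it only needs to read off or copy $\sigma$, together with the maintained value of $\cost$), and that the initialization in the first line costs $O(|C|)$ as well; multiplying by $M$ and absorbing lower-order terms gives the claimed $O(M|C|\log|F|)$.

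First I would recall, from the preceding subsection, that with the client heaps in place, each of $\mathsf{\Delta\mhyphen open}$, $\mathsf{\Delta\mhyphen swap\mhyphen in}$, and $\mathsf{\Delta\mhyphen close}$ runs in $O(|C|)$ time — this was argued explicitly for $\mathsf{\Delta\mhyphen swap\mhyphen in}$ (the only nontrivial case) using $\sum_{i'\in S}|\sigma^{-1}(i')| = |C|$. Next I would invoke the already-established fact that actually performing an operation ($\mathsf{try\mhyphen open}$, a swap, or a close) costs $O(|C|\log|F|)$, dominated by the need to delete and/or insert at most two elements into each of the $|C|$ heaps, each heap operation costing $O(\log|F|)$. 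Then, inspecting the pseudocode of $\mathsf{sampled\mhyphen local\mhyphen search}$ (Algorithm \ref{alg:sample}), I would observe that along either branch of the initial coin flip, the procedure makes at most one call among $\mathsf{\Delta\mhyphen close}$, $\mathsf{\Delta\mhyphen open}$, $\mathsf{\Delta\mhyphen swap\mhyphen in}$ (actually the ``else'' branch makes two $\Delta$-calls, one $\mathsf{\Delta\mhyphen open}$ and one $\mathsf{\Delta\mhyphen swap\mhyphen in}$, but that is still $O(|C|)$), plus at most one actual operation; drawing the uniform random number and selecting a random facility in $F\setminus S$ are $O(1)$ and $O(|F|)$ respectively, both dominated. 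Hence one iteration is $O(|C|\log|F|)$.

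There is no genuine obstacle here — the observation is a routine accounting argument — but the one point that warrants care is making explicit that $|C|$ and $|F|$ refer to their values \emph{at the time the procedure is called}, and that these do not change within a single invocation of $\mathsf{FL\mhyphen iterate}(M)$ (the set $F$ of potential facilities is static, and $C$ is fixed during the procedure since clients only arrive/depart between calls); this justifies pulling the $|C|\log|F|$ factor out of the sum over the $M$ iterations. I would also note that the number of open facilities $|S|$ is at most $|F|$, so any term of the form $O(|S|)$ appearing inside the $\Delta$-procedures is subsumed. Putting it together: the first line costs $O(|C|)$, each of the $M$ loop iterations costs $O(|C|\log|F|)$ for the call plus $O(|C|)$ for the incumbent comparison and possible copy, and the final return is $O(|C|)$; summing yields $O(M|C|\log|F|)$, as desired.
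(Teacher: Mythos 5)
Your proposal is correct and follows exactly the paper's (essentially one-line) argument: the paper simply notes that each call to $\mathsf{sampled\mhyphen local\mhyphen search}$ invokes the $\Delta$-procedures at most once and performs at most one operation, hence costs $O(|C|\log|F|)$, and $\mathsf{FL\mhyphen iterate}(M)$ repeats this $M$ times. Your additional bookkeeping details (initialization, incumbent comparison, the static nature of $C$ and $F$ within one invocation) are consistent with, and slightly more explicit than, what the paper records.
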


Throughout this section, we fix a facility location instance. Let $(S^*, \sigma^*)$ be the optimum solution (w.r.t the original cost) and $\opt = \cost(S^*, \sigma^*)$ be the optimum cost.  Fixing one execution of $\mathsf{sampled\mhyphen local\mhyphen search}$,  we use $(S^0, \sigma^0)$ and $(S^1, \sigma^1)$ to denote the solutions before and after the execution respectively. Then, we have 
\begin{restatable}{lemma}{samplelocalsearch} \label{lemma:sample-local-search}
	Consider an execution of $\mathsf{sampled\mhyphen local\mhyphen search}$ and fix $(S^0, \sigma^0)$. We have 
		\begin{align*}
			\cost_\lambda(S^0, \sigma^0) - \E[\cost_\lambda(S^1, \sigma^1)] \geq \frac1{3 |F|}\max\left\{
				\begin{array}{c}
					\cc(\sigma^0) - (\lambda f(S^*) + \cc(\sigma^*))\\
					\lambda f(S) - (\lambda f(S^*) + 2\cc(\sigma^*))\\
					\cost_\lambda(S^0, \sigma^0) - (2\lambda f(S^*) + 3\cc(\sigma^*))
				\end{array}
			\right\}.
		\end{align*}	
\end{restatable}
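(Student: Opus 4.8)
The plan is to leverage Theorem~\ref{thm:FL-offline-operations}, which provides the two sets $\calP_\rmC$ and $\calP_\rmF$ of valid local operations whose cost decrements sum to at least $\cc(\sigma^0) - (\lambda f(S^*)+\cc(\sigma^*))$ and $\lambda f(S^0) - (\lambda f(S^*)+2\cc(\sigma^*))$ respectively, with the additional structural guarantees that there are at most $|F|$ $\close$ operations total and at most one operation in each of $\calP_\rmC$, $\calP_\rmF$ opens or swaps in any given facility $i$. The key observation is that each execution of $\mathsf{sampled\mhyphen local\mhyphen search}$ picks a random ``slot'' that, with the appropriate probability, dominates any single operation in these sets. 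Concretely: with probability $1/3$ we run $\mathsf{\Delta\mhyphen close}$ and apply the best $\close$ operation; since the best $\close$ found is at least as good as any particular $\close$ operation in $\calP_\rmC \uplus \calP_\rmF$, and there are at most $|F|$ of those, the decrement from this branch stochastically dominates $\frac1{|F|}$ times the total decrement contributed by $\close$ operations. With probability $2/3$ we pick a uniformly random $i \in F\setminus S^0$ and perform the best of $\mathsf{\Delta\mhyphen open}(i)$ and $\mathsf{\Delta\mhyphen swap\mhyphen in}(i)$; for any fixed operation in $\calP_\rmC \cup \calP_\rmF$ that opens or swaps in some facility $i$ (not already in $S^0$), this facility is chosen with probability $\frac1{|F\setminus S^0|} \geq \frac1{|F|}$, and conditioned on choosing it, the operation we perform decreases the scaled cost by at least as much as that fixed operation. (For operations in the two sets that open a facility already in $S^0$, their decrement is captured by reconnection-only moves which $\mathsf{\Delta\mhyphen swap\mhyphen in}$ or even the $\close$ branch can simulate; this edge case needs to be checked but does not change the bound.)

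Putting this together, I would argue that for each of the two multisets $\calP_\rmC$ and $\calP_\rmF$ separately,
\begin{align*}
\cost_\lambda(S^0,\sigma^0) - \E[\cost_\lambda(S^1,\sigma^1)] \;\geq\; \frac1{3|F|} \sum_{\mathrm{op}\in\calP_\rmC}\nabla_{\mathrm{op}} \;\geq\; \frac1{3|F|}\big(\cc(\sigma^0) - (\lambda f(S^*)+\cc(\sigma^*))\big),
\end{align*}
and likewise with $\calP_\rmF$ giving $\frac1{3|F|}(\lambda f(S^0) - (\lambda f(S^*)+2\cc(\sigma^*)))$. The factor $3$ comes from the $1/3$ vs.\ $2/3$ split: the $\close$ operations get their $\frac13$ from the first branch, the $\open$/$\swap$ operations get $\frac23 \cdot \frac1{|F\setminus S^0|}$ which is at least $\frac1{|F|}$ after absorbing constants; one has to be a little careful that within a single multiset $\calP_\rmC$ the $\close$ and $\open$/$\swap$ operations are handled by disjoint random branches, so their expected decrements add rather than needing a union bound. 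The third term in the $\max$, namely $\frac1{3|F|}(\cost_\lambda(S^0,\sigma^0) - (2\lambda f(S^*)+3\cc(\sigma^*)))$, follows by adding the two inequalities: $\cc(\sigma^0) + \lambda f(S^0) = \cost_\lambda(S^0,\sigma^0)$ and the right-hand sides sum to $2\lambda f(S^*) + 3\cc(\sigma^*)$, and the expected decrement is at least the max of the two individual bounds, hence at least half their sum; a cleaner route is to note the decrement is at least $\frac1{3|F|}$ times each, so also at least $\frac1{3|F|}$ times their sum minus a redundant copy — I would instead directly observe that the same two random branches already witness the summed bound since the operations achieving the $\calP_\rmC$ bound and those achieving the $\calP_\rmF$ bound can be charged simultaneously only if disjoint, which they need not be, so the honest statement is that we get the max, and the max of $a$ and $b$ is at least $\frac{a+b}{2}$ — but the lemma claims the full sum. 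I will need to resolve this by a more careful accounting: the correct reading is that $\calP_\rmC$ already contains operations covering \emph{both} a large connection-cost decrease \emph{and} (implicitly) facility-cost structure, so actually one applies Theorem~\ref{thm:FL-offline-operations} in a way where a single family simultaneously certifies all three quantities.

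\textbf{Main obstacle.} The crux is the bookkeeping that turns ``the best operation in a random category beats any one operation in $\calP_\rmC \uplus \calP_\rmF$'' into a clean \emph{expectation} bound of the form $\frac1{3|F|}\sum_{\mathrm{op}}\nabla_{\mathrm{op}}$. The subtlety is that in one execution we perform only \emph{one} operation, so we are summing, over the random choice, $\Pr[\text{category of op chosen}] \cdot (\text{decrement} \geq \nabla_{\mathrm{op}})$; this works cleanly precisely because (i) each facility indexes at most one $\open$/$\swap$ operation in each set, so the events are disjoint across facilities, and (ii) the $\close$ branch is a single event that dominates the best of all $\le |F|$ $\close$ operations at once, so we get $\frac13 \cdot \max \geq \frac13 \cdot \frac1{|F|}\sum$. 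Verifying that the ``best'' operation actually computed by $\mathsf{\Delta\mhyphen swap\mhyphen in}(i)$ — which reconnects $\sigma^{-1}(i')\setminus C'$ to the nearest of $i$ or the heap-top — decreases the scaled cost by at least $\nabla_{\mathrm{op}}$ for the corresponding $\swap$ operation in $\calP_\rmF$, given that the latter may reconnect clients to arbitrary facilities in $S^0\setminus\{i'\}\cup\{i\}$, is the one genuinely technical point: it relies on the fact that since $\sigma^0$ connects every client to its nearest open facility, reconnecting a displaced client to the nearest of $\{i\} \cup (S^0\setminus\{i'\})$ is exactly what the heap-top-vs-$i$ comparison computes, hence no reconnection target in $\calP_\rmF$'s operation can do better. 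I would spell this out as the heart of the proof and treat the probability arithmetic as routine.
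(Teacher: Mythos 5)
Your argument for the first two terms of the $\max$ is essentially the paper's: charge each operation in $\calP_\rmC$ (resp.\ $\calP_\rmF$) to the random branch that dominates it, using that the $\close$ branch (probability $1/3$) beats the best of the at most $|F|$ close operations and that each facility $i\in F\setminus S^0$ is sampled with probability $\frac{2}{3|F\setminus S^0|}\ge\frac{2}{3|F|}$ and indexes at most one open/swap operation per set. Your observation that the domination of a fixed swap by $\mathsf{\Delta\mhyphen swap\mhyphen in}(i)$ rests on $\sigma^0$ connecting every client to its nearest open facility is correct (and the same invariant disposes of your parenthetical edge case more cleanly: an operation opening $i\in S^0$ cannot have $\nabla_{\mathrm{op}}>0$, so no such operation appears in the sets at all).

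However, there is a genuine gap at the third term of the $\max$, and you correctly flag it without resolving it: taking the max of the two individual bounds only certifies half of $\cost_\lambda(S^0,\sigma^0)-(2\lambda f(S^*)+3\cc(\sigma^*))$, and your closing speculation (that a single family certifies all three quantities) is not the fix. The fix is to run the \emph{identical} charging argument a third time on the multiset $\calQ=\calP_\rmC\biguplus\calP_\rmF$. This is exactly why Theorem~\ref{thm:FL-offline-operations} states the bound of $|F|$ close operations for the \emph{union} and the bound of one open/swap operation per facility for \emph{each set separately}: in the union one has $|\calQ_\emptyset|\le|F|$ and $|\calQ_i|\le 2$, so
\begin{align*}
\sum_{\mathrm{op}\in\calQ}\nabla_{\mathrm{op}} \;\le\; |F|\,\Phi_\emptyset + 2\sum_{i\in F\setminus S^0}\Phi_i \;\le\; 3|F|\left(\frac{\Phi_\emptyset}{3}+\sum_{i\in F\setminus S^0}\frac{2\Phi_i}{3|F\setminus S^0|}\right) \;\le\; 3|F|\bigl(\cost_\lambda(S^0,\sigma^0)-\E[\cost_\lambda(S^1,\sigma^1)]\bigr),
\end{align*}
where $\Phi_i$ (resp.\ $\Phi_\emptyset$) is the largest decrement among the at most two operations in $\calQ$ opening or swapping in $i$ (resp.\ among the close operations). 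The factor $2$ from $|\calQ_i|\le 2$ is exactly absorbed by the $2/3$ sampling probability of the open/swap branch, so no constant is lost and the full sum $\ge \cost_\lambda(S^0,\sigma^0)-(2\lambda f(S^*)+3\cc(\sigma^*))$ is certified with the same $\frac{1}{3|F|}$ factor. Without this third application your proof establishes only the first two entries of the $\max$, and the third entry is the one actually used downstream in Lemma~\ref{lemma:ufl-iterate} to drive the geometric decrease in the first stage.
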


\begin{restatable}{lemma}{fliterate}
	\label{lemma:ufl-iterate}
	Let $(S^\circ, \sigma^\circ)$ be the $(S, \sigma)$ at the beginning of an execution of $\mathsf{FL\mhyphen iterate}(M)$, and assume it is an $O(1)$-approximation to the instance. Let $\Gamma \geq 2$ and $M = O\left(\frac{ |F|}{\epsilon'}\log\Gamma\right)$ is big enough. Then with probability at least $1-\frac1\Gamma$, the solution returned by the procedure is $(\alpha_\FL + \epsilon')$-approximate.
\end{restatable}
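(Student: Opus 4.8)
The plan is to show that the procedure $\mathsf{FL\mhyphen iterate}(M)$ drives the scaled cost geometrically close to a constant multiple of $\opt$, and then translate this back to the original (non-scaled) objective. I would set up a potential-function / drift argument using Lemma~\ref{lemma:sample-local-search}. Fix the execution and let $X_\ell = \cost_\lambda(S, \sigma)$ be the scaled cost after $\ell$ iterations of $\mathsf{sampled\mhyphen local\mhyphen search}$, with $X_0 = \cost_\lambda(S^\circ, \sigma^\circ)$. The third term in the $\max$ of Lemma~\ref{lemma:sample-local-search} gives, conditioned on the current solution,
\begin{align*}
	\E[X_{\ell+1} \mid X_\ell] \leq X_\ell - \frac1{3|F|}\bigl(X_\ell - (2\lambda f(S^*) + 3\cc(\sigma^*))\bigr),
\end{align*}
so writing $B := 2\lambda f(S^*) + 3\cc(\sigma^*) = O(1)\cdot\opt$ and $Y_\ell := X_\ell - B$, we get $\E[Y_{\ell+1}\mid Y_\ell] \leq (1 - \tfrac1{3|F|})\,Y_\ell$ whenever $Y_\ell \geq 0$ (and the inequality is harmless when $Y_\ell<0$ since then $X_\ell$ is already within the target). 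Iterating expectations, $\E[Y_M] \leq (1-\tfrac1{3|F|})^M Y_0 \leq e^{-M/(3|F|)} Y_0$. Since $(S^\circ,\sigma^\circ)$ is an $O(1)$-approximation and $B = O(1)\opt$, we have $Y_0 = O(1)\opt$; choosing $M = 3|F|\ln(2\Gamma/\epsilon')= O\bigl(\frac{|F|}{\epsilon'}\log\Gamma\bigr)$ (adjusting constants, and noting $\log\frac1{\epsilon'}$ is absorbed) makes $\E[Y_M] \leq \frac{\epsilon'}{2}\opt$ or so. This is not quite the statement though, because $Y_M$ can be negative, so I should instead bound $\E[\max\{Y_M,0\}]$; but $\max\{\cdot,0\}$ applied after each step only helps, so the same recursion with $Y_\ell^+ := \max\{X_\ell - B, 0\}$ goes through: $\E[Y_{\ell+1}^+ \mid \text{current}] \leq (1-\tfrac1{3|F|})Y_\ell^+$, hence $\E[Y_M^+] \leq e^{-M/(3|F|)}Y_0^+ \leq \tfrac{\epsilon'}{2}\opt$.

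By Markov's inequality, with probability $\geq 1 - \tfrac1\Gamma$ we have $Y_M^+ \leq \tfrac{\Gamma\epsilon'}{2}\opt$ — wait, that is too weak; Markov on $Y_M^+$ with threshold $\epsilon'\opt$ gives failure probability $\leq \frac{\E[Y_M^+]}{\epsilon'\opt} \leq \tfrac12$, not $\tfrac1\Gamma$. The fix is to push $M$ up by a $\log\Gamma$ factor: choose $M = O\bigl(\frac{|F|}{\epsilon'}\log\Gamma\bigr)$ large enough that $e^{-M/(3|F|)}Y_0^+ \leq \frac{\epsilon'\opt}{\Gamma}$, i.e. $M \geq 3|F|\bigl(\ln\Gamma + \ln\frac{Y_0^+}{\epsilon'\opt}\bigr) = 3|F|\bigl(\ln\Gamma + O(\log\frac1{\epsilon'})\bigr)$, which is exactly the form $O\bigl(\frac{|F|}{\epsilon'}\log\Gamma\bigr)$ claimed (absorbing the $\log\frac1{\epsilon'}$ since $\Gamma \geq 2$ and the ``big enough'' clause). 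Then Markov gives $\Pr[Y_M^+ > \epsilon'\opt] \leq \tfrac1\Gamma$. On the complement event, the solution $(S,\sigma)$ after $M$ iterations satisfies $\cost_\lambda(S,\sigma) \leq B + \epsilon'\opt \leq (2\lambda f(S^*) + 3\cc(\sigma^*)) + \epsilon'\opt$.

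The remaining step is to convert a good \emph{scaled} cost into a good \emph{original} cost and invoke the ``best-so-far'' bookkeeping of $\mathsf{FL\mhyphen iterate}$. Here I need to be careful: $\cost_\lambda = \lambda f(S) + \cc(\sigma)$ with $\lambda = \sqrt2$, and $\cost = f(S) + \cc(\sigma)$, so $\cost(S,\sigma) \leq \cost_\lambda(S,\sigma)$. Combined with the bound above, $\cost(S,\sigma) \leq 2\lambda f(S^*) + 3\cc(\sigma^*) + \epsilon'\opt$. But $2\lambda f(S^*)+3\cc(\sigma^*) = 2\sqrt2 f(S^*) + 3\cc(\sigma^*) \leq 3\opt$-ish, which does \emph{not} give the $\alpha_\FL + \epsilon' = 1+\sqrt2+\epsilon'$ ratio directly — so a cruder drift on $\cost_\lambda$ alone is insufficient. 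This is the main obstacle, and the resolution is that one additional $\mathsf{sampled\mhyphen local\mhyphen search}$-free application is not enough; instead I would argue that once $\cost_\lambda(S,\sigma)$ is within $\epsilon'\opt$ of $B$, with constant probability per iteration a further round reaches a state with \emph{no $O(\epsilon'\opt/|C|)$-efficient operations of any single category}, and then appeal to Theorem~\ref{thm:FL-offline-apx-ratio} (with $\phi$ proportional to $\epsilon'\opt/|C|$) to get $\cost(S,\sigma) \leq \alpha_\FL(\opt + |C|\phi) = \alpha_\FL\opt + \alpha_\FL\epsilon'\opt$. More precisely: use the first two terms of the $\max$ in Lemma~\ref{lemma:sample-local-search}, which say that if $\cc(\sigma^0) > \lambda f(S^*)+\cc(\sigma^*) + \delta$ or $\lambda f(S) > \lambda f(S^*)+2\cc(\sigma^*)+\delta$ then the expected scaled-cost drop is $\geq \delta/(3|F|)$; so after $M = O(\frac{|F|}{\epsilon'}\log\Gamma)$ iterations, with probability $\geq 1-\tfrac1\Gamma$, at \emph{some} iteration $\ell^* \leq M$ the solution $(S,\sigma)$ has both $\cc(\sigma) \leq \lambda f(S^*)+\cc(\sigma^*) + \epsilon'\opt$ and $\lambda f(S) \leq \lambda f(S^*) + 2\cc(\sigma^*) + \epsilon'\opt$ — these are exactly the two inequalities that drive the offline analysis of Theorem~\ref{thm:FL-offline-apx-ratio} (they are the aggregated forms of the non-existence of efficient open/swap and close operations). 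Adding them as in the standard $(1+\sqrt2)$ local-search proof yields $\cost(S,\sigma) \leq \alpha_\FL\opt + O(\epsilon')\opt$ at iteration $\ell^*$, and since $\mathsf{FL\mhyphen iterate}$ returns the best original-cost solution over all iterations, the returned solution is $(\alpha_\FL + O(\epsilon'))$-approximate; rescaling $\epsilon'$ gives $(\alpha_\FL+\epsilon')$. The delicate points to get right are (i) that ``at some iteration both inequalities hold simultaneously'' follows because the scaled cost is monotone-ish under the drift / one can track a single potential that must cross the threshold, and (ii) that the failure probabilities over the (at most three) bad events still sum to $\leq \tfrac1\Gamma$ after adjusting $M$ by a constant factor.
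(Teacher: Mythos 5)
Your proposal follows essentially the same two-stage route as the paper's proof: first use the third term of the $\max$ in Lemma~\ref{lemma:sample-local-search} to contract the truncated potential $\bigl(\cost_\lambda(S,\sigma)-(2\lambda f(S^*)+3\cc(\sigma^*))\bigr)_+$ geometrically and apply Markov, then observe (correctly) that this alone cannot give $\alpha_\FL+\epsilon'$ and switch to the first two terms of the $\max$, arguing that at some iteration both aggregated inequalities hold simultaneously, at which point adding them as in the offline analysis and invoking the best-so-far bookkeeping of $\mathsf{FL\mhyphen iterate}$ finishes the job. The one place where your argument is not yet a proof is exactly your flagged point (i)/(ii): the drift bound ``expected scaled-cost drop $\geq \epsilon'\opt/O(|F|)$ whenever one of the two inequalities fails,'' combined with the facts that the scaled cost is non-increasing and starts at $O(1)\cdot\opt$ after the first stage, gives via optional stopping only that the good state is reached within $O(|F|/\epsilon')$ iterations \emph{in expectation}, hence with probability $\geq 1/2$ by Markov --- not with probability $1-1/\Gamma$. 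Your remark that the failure probabilities can be controlled ``after adjusting $M$ by a constant factor'' understates what is needed: one must amplify by partitioning the remaining iterations into $\Theta(\log\Gamma)$ consecutive phases, each of length $O(|F|/\epsilon')$, each failing independently (given the past) with probability at most $1/2$ because the scaled cost never increases and so every phase starts from a solution of scaled cost $O(1)\cdot\opt$. This is precisely how the paper's second stage is organized, and your $M=O\bigl(\frac{|F|}{\epsilon'}\log\Gamma\bigr)$ already has the budget for it, so the gap is one of execution rather than of approach.
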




	\section{$(1+\sqrt{2}+\epsilon)$-Approximate Dynamic Algorithm for Facility Location in Incremental Setting}
\label{sec:dfl}

In this section, we prove Theorem~\ref{UFL-dynamicIncremental} by combining the ideas from Sections  \ref{sec:ofl} and \ref{sec:fast-UFL} to derive a dynamic algorithm for facility location in the incremental setting. As for the online algorithm in Section~\ref{sec:ofl}, we divide our algorithm into stages. Whenever a client comes, we use a simple rule to accommodate it.  Now we can not afford to consider all possible local operations as in Section~\ref{sec:ofl}. Instead we use the randomized local search idea from the algorithm in Section~\ref{sec:fast-UFL} by calling the procedure $\mathsf{FL\mhyphen iterate}$. We call the procedure only if the cost of our solution has increased by a factor of $1+\epsilon'$ (where $\epsilon' = \Theta(\epsilon)$ is small enough).  In our analysis, we show a lemma similar to Lemma \ref{lmm:ofl-delta-cost-bound}: The total increase of costs due to arrival of clients is small, compared to the optimum cost for these clients. Then, we can bound the number of times we call $\mathsf{FL\mhyphen iterate}$. Recall that we are given an integer $\Gamma = \poly\big(n, \log D, \frac1\epsilon\big)$ that is big enough: We are aiming at a success probability of $1-1/\Gamma$ for each call of $\mathsf{FL\mhyphen iterate}$. Our final running time will only depend on $O(\log \Gamma)$.   

The main algorithm will be the same as Algorithm~\ref{alg:ofl}, except that we use  Algorithm~\ref{alg:fast-UFL-one-stage} as the algorithm for one stage.  As before, we only need to design one stage of the algorithm. Recall that in a stage we are given an initial set $C$ of clients, an $O(1)$-approximate solution $(S, \sigma)$ for $C$.  Clients come one by one and our goal is to maintain an $(\alpha_\FL +  O(\epsilon'))$-approximate solution at any time. The stage terminates if no client comes or our solution has cost more than $1/\epsilon'$ times the cost of the initial solution. 

\begin{algorithm}
	\caption{One Stage of Dynamic Algorithm for Facility Location} \label{alg:fast-UFL-one-stage}
	\begin{algorithmic}[1]
		\Require{ 
			\begin{itemize}
				\item $C$: the initial set of clients
				\item $(S, \sigma)$: initial solution for $C$, which is $O(1)$-approximate
			\end{itemize}
		}
		\State let $M = O\left(\frac{ |F|}{\epsilon'}\log\Gamma\right)$ be large enough \label{step:fast-UFL-M}
		\State $(S, \sigma) \gets \mathsf{FL\mhyphen iterate}\left(M\right)$, $\textsf{init}\gets \cost(S, \sigma),  \last \gets \textsf{init}$ \label{step:fast-UFL-init}
		\For{$t \gets 1, 2, 3, \cdots$, terminating if no more clients arrive}
			\For{$q = \ceil{\log\frac{\last}{|F|}}$ to $\ceil{\log\frac\last{\epsilon'}}$} \label{step:fast-UFL-enumerate-q}
				\State \textbf{if} $i \gets \arg\min_{i \in F\setminus S, f_i \leq 2^q}d(j_t, i)$ exists, \textbf{then} call $\mathsf{try\mhyphen open'}(i)$  \label{step:fast-UFL-try-open} \Comment{$\mathsf{try\mhyphen open'}$ is the same as $\mathsf{try\mhyphen open}$ except we consider the cost instead of scaled cost.}
			\EndFor
			\State $C \gets C \cup \{j_t\}$ and call $\mathsf{try\mhyphen open'}\big(\arg\min_{i \in F \setminus S}(d(j_t, i) + f_i)\big)$
			\label{step:fast-UFL-handle-j}
			\If{$\cost(S, \sigma) > (1+\epsilon')\cdot \last$}  
					\State $(S, \sigma) \gets \mathsf{FL\mhyphen iterate}\left(M\right)$ \label{step:fast-UFL-call-iterate}
					\If {$\cost(S, \sigma) > \last$} $\last \gets \cost(S, \sigma)$  \EndIf  \label{step:fast-UFL-update-last}
					\If{$\last > \textsf{init}/\epsilon'$}  terminate the stage \EndIf \label{step:fast-UFL-terminate}
			\EndIf
		\EndFor
	\end{algorithmic}
\end{algorithm}

%
Notice that in a stage, we are considering the original costs of solutions (instead of scaled costs as inside $\mathsf{FL\mhyphen iterate}$).  During a stage we maintain a value $\last$ which gives an estimation on the cost of the current solution $(S, \sigma)$.  Whenever a client $j_t$ comes, we apply some rules to open some facilities and connect $j_t$ (Steps~\ref{step:fast-UFL-enumerate-q} to \ref{step:fast-UFL-handle-j}). These operations are needed to make the cost increase due to the arrival of $j_t$ (defined as $\Delta_t$ later) small.   In the algorithm $\mathsf{try\mhyphen open'}$ is the same as $\mathsf{try\mhyphen open}$, except that we use the original cost instead of the scaled cost (this is not important but only for the sake of convenience). If $\cost(S, \sigma)$ becomes too large, i.e, $\cost(S, \sigma) > (1+\epsilon')\last$,  then we call $(S, \sigma) \gets \mathsf{FL\mhyphen iterate}(M)$ for the $M$ defined in Step~\ref{step:fast-UFL-M} (Step~\ref{step:fast-UFL-call-iterate}), and update $\last$ to $\cost(S, \sigma)$ if we have $\cost(S, \sigma) > \last$ (Step~\ref{step:fast-UFL-update-last}). We terminate the algorithm when $\last \geq \mathsf{init}/\epsilon$, where $\mathsf{init}$ is $\cost(S, \sigma)$ at the beginning of the stage (Step~\ref{step:fast-UFL-terminate}).  

We say an execution of $\mathsf{FL\mhyphen iterate}(M)$ is successful if the event in Lemma~\ref{lemma:ufl-iterate} happens. Then we have
\begin{lemma}
	\label{lemma:ufl-dynamic-ratio}
	If all executions of $\mathsf{FL\mhyphen iterate}$ are successful, the solution $(S, \sigma)$ at the end of each time is  $(1+\epsilon')(\alpha_\FL+\epsilon')$-approximate. 
\end{lemma}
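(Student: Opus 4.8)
The plan is to prove the claimed approximation ratio \emph{within one stage} of Algorithm~\ref{alg:fast-UFL-one-stage} (the extra loss from freezing across stages is dealt with afterwards, exactly as in Section~\ref{sec:ofl}). Let $C_t$ be the client set at the end of time $t$ of the stage and let $\opt_t$ be the optimum cost for $C_t$; since clients only arrive in the incremental setting, $(\opt_t)_t$ is non-decreasing. Recall that an execution of $\mathsf{FL\mhyphen iterate}(M)$ being \emph{successful} means, by Lemma~\ref{lemma:ufl-iterate}, that it returns a solution that is $(\alpha_\FL+\epsilon')$-approximate for the client set present at the moment of the call. The key to the whole argument is the invariant
\[ \last \;\le\; (\alpha_\FL+\epsilon')\,\opt_t \qquad\text{at the end of every time }t, \]
together with a two-case split at the end of time $t$ according to whether $\mathsf{FL\mhyphen iterate}$ was invoked during that time step.

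First I would establish the invariant. The variable $\last$ is assigned only at initialization in Step~\ref{step:fast-UFL-init}, where $\last=\textsf{init}=\cost(S,\sigma)$ right after the initial successful $\mathsf{FL\mhyphen iterate}(M)$, and in Step~\ref{step:fast-UFL-update-last}, where it is possibly raised to $\cost(S,\sigma)$ right after a successful $\mathsf{FL\mhyphen iterate}(M)$ in Step~\ref{step:fast-UFL-call-iterate}; in every other step it is unchanged. In both assignment cases the solution in hand is $(\alpha_\FL+\epsilon')$-approximate for the client set $C_{t'}$ present at that moment, so immediately afterwards $\last\le(\alpha_\FL+\epsilon')\opt_{t'}$. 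Since $\opt$ is non-decreasing over the stage, this inequality only weakens with time, so it continues to hold with $\opt_{t'}$ replaced by $\opt_t$ for every later $t\ge t'$; and in any step where $\last$ is not re-assigned the bound is inherited from the previous step by the same monotonicity. (The base case uses the stage hypothesis that the incoming solution is $O(1)$-approximate, so Lemma~\ref{lemma:ufl-iterate} applies to the initial call.)

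Next, fix a time $t$. If $\mathsf{FL\mhyphen iterate}(M)$ was called at time $t$ — i.e. the test $\cost(S,\sigma)>(1+\epsilon')\last$ succeeded after Steps~\ref{step:fast-UFL-enumerate-q}--\ref{step:fast-UFL-handle-j} — then, the call being successful, the returned solution satisfies $\cost(S,\sigma)\le(\alpha_\FL+\epsilon')\opt_t\le(1+\epsilon')(\alpha_\FL+\epsilon')\opt_t$, and this is the solution held at the end of time $t$ whether or not the stage then terminates in Step~\ref{step:fast-UFL-terminate}. Otherwise the test failed, which says precisely that $\cost(S,\sigma)\le(1+\epsilon')\last$ at the end of time $t$; combined with the invariant this gives $\cost(S,\sigma)\le(1+\epsilon')(\alpha_\FL+\epsilon')\opt_t$. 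In either case $(S,\sigma)$ is $(1+\epsilon')(\alpha_\FL+\epsilon')$-approximate, which is the statement.

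Since Lemma~\ref{lemma:ufl-iterate} does the real work, I do not expect a serious obstacle here; the only points needing care are the bookkeeping of $\last$ across steps where it is not refreshed (settled by monotonicity of $\opt$ in the incremental model) and, when this lemma is later combined with the outer algorithm, the additive $O(\epsilon')$ loss from permanently freezing the clients and facilities of earlier stages, handled exactly as in the corresponding paragraph of Section~\ref{sec:ofl} (each client lies in at most two stages and the frozen cost is a geometric series in $\epsilon'$). The genuinely delicate parts of Theorem~\ref{UFL-dynamicIncremental} — bounding how often $\mathsf{FL\mhyphen iterate}$ is invoked via an analogue of Lemma~\ref{lmm:ofl-delta-cost-bound}, and guaranteeing that each invocation acts on an $O(1)$-approximate solution so that ``successful'' holds with the claimed probability — are orthogonal to this lemma and are not needed for it.
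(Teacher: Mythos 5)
Your proof is correct and follows essentially the same route as the paper's: the paper's (two-sentence) argument is exactly that $\cost(S,\sigma)\le(1+\epsilon')\last$ holds at the end of each time, that $\last$ is the cost of an $(\alpha_\FL+\epsilon')$-approximate solution from some earlier moment, and that monotonicity of the optimum in the incremental setting transfers the bound to the current time. Your write-up merely makes the invariant on $\last$ and the two-case split explicit; no substantive difference.
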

\begin{proof}
	 This holds since we always have $\cost(S, \sigma) \leq (1+\epsilon')\last$ at the end of each time, where $\last$ is the cost of some $(\alpha_\FL +  \epsilon')$-approximate solution at some moment before.  As we only add clients to $C$, the cost of the optimum solution can only increase and thus the claim holds. 
\end{proof}

Now we argue each execution of $\mathsf{FL\mhyphen iterate}(M)$ is successful with probability at least $1-1/\Gamma$.  This will happen if $(S, \sigma)$ is $O(1)$-approximate before the call.  By Lemma~\ref{lemma:ufl-iterate}, we only need to make sure that the $(S, \sigma)$ before the execution is $O(1)$-approximate. This is easy to see: Before Step~\ref{step:fast-UFL-handle-j} in time $t$, we have $\cost(S, \sigma) \leq O(1)\opt$; the increase of $\cost(S, \sigma)$ in the step is at most the value of $\opt$ after the step (i.e, we consider the client $j_t$ when defining $\opt$). Thus, we have $\cost(S, \sigma) \leq O(1)\opt$ after the step.


\subsection{Bounding Number of Times of Calling $\mathsf{FL\mhyphen iterate}$}
It remains to bound the number of times we call $\mathsf{FL\mhyphen iterate}$. Again, we use $T$ to denote the last time step of Algorithm~\ref{alg:fast-UFL-one-stage} (i.e, one stage of the dynamic algorithm) and $\Delta_t$ to denote the cost increase due to the arrival of $j_t$:  it is the value of $\cost(S, \sigma)$ before Step~\ref{step:fast-UFL-handle-j} minus that after Step~\ref{step:fast-UFL-handle-j} in time $t$. For every time $t \in [T]$, let $C_t$ be the set $C$ at the end of time $t$, and let $\opt_t$ be the cost of the optimum solution for $C_t$.  Let $\last_t$ be the value of $\last$ at the \emph{beginning} of time $t$.  

Due to Step~\ref{step:fast-UFL-handle-j}, we have the following observation:
\begin{obs}
	\label{obs:dfl-delta-t}
	For every $t \in [T]$, we have $\Delta_t \leq \min_{i \in F}(f_i + d(i, j_t))$.
\end{obs}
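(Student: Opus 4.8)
The plan is to localize $\Delta_t$ to Step~\ref{step:fast-UFL-handle-j}, the only place where $j_t$ is inserted into $C$, and to follow the cost of $(S,\sigma)$ through that step. Write $(S,\sigma)$ for the solution just before Step~\ref{step:fast-UFL-handle-j} and $X=\cost(S,\sigma)$ for its cost over the client set at that moment (which does not yet contain $j_t$). Inserting $j_t$ connects it to its nearest open facility, so the cost becomes $X+d(j_t,S)$; then $\mathsf{try\mhyphen open'}(i')$ with $i'=\arg\min_{i\in F\setminus S}(d(j_t,i)+f_i)$ is called, and by definition it never raises the cost. Consequently $\Delta_t\le d(j_t,S)$ always, and it remains to (i) bound $d(j_t,S)$ when $\mathsf{try\mhyphen open'}$ does nothing, and (ii) bound the post-operation cost when it does open $i'$.

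I would first clear two easy cases. If $S=\emptyset$ just before the step --- which can occur only before any client of the stage has arrived --- then the call opens the facility of $F$ minimizing $f_i+d(i,j_t)$ and attaches $j_t$ to it, giving $\Delta_t=\min_{i\in F}(f_i+d(i,j_t))$. If $i^\star:=\arg\min_{i\in F}(f_i+d(i,j_t))$ lies in $S$ (this subsumes $S=F$), then $d(j_t,S)\le d(j_t,i^\star)\le f_{i^\star}+d(i^\star,j_t)=\min_{i\in F}(f_i+d(i,j_t))$, and $\Delta_t\le d(j_t,S)$ finishes it. So I may assume $S\neq\emptyset$ and $i^\star\in F\setminus S$; since $i^\star$ is then a candidate in the minimization defining $i'$, we get $f_{i'}+d(j_t,i')\le f_{i^\star}+d(i^\star,j_t)=\min_{i\in F}(f_i+d(i,j_t))$.

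The heart of the proof is a split on what $\mathsf{try\mhyphen open'}(i')$ does, using that it tests the quantity $f_{i'}-\sum_{j\in C}\max\{0,d(j,\sigma_j)-d(j,i')\}$ (the unscaled analogue of $\mathsf{\Delta\mhyphen open}(i')$) against $0$, where the sum now runs over $C\ni j_t$. If it does nothing, that quantity is $\ge 0$, which forces $f_{i'}\ge\max\{0,d(j_t,S)-d(j_t,i')\}\ge d(j_t,S)-d(j_t,i')$, hence $d(j_t,S)\le f_{i'}+d(j_t,i')\le\min_{i\in F}(f_i+d(i,j_t))$, and $\Delta_t=d(j_t,S)$ gives the claim. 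If it opens $i'$, I would bound the resulting cost from above by that of the (generally suboptimal) reconnection that leaves every old client on its old facility --- all of which remain open --- and moves only $j_t$ onto $i'$, namely $X+f_{i'}+d(j_t,i')$; thus $\Delta_t\le f_{i'}+d(j_t,i')\le\min_{i\in F}(f_i+d(i,j_t))$, in fact with slack. Combining the cases establishes the observation.

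I do not expect a genuine obstacle. The one point to handle carefully is that opening $i'$ reconnects \emph{every} client that strictly prefers $i'$, not only $j_t$, so the ``opens $i'$'' case must be phrased as an inequality against the one-client reconnection; this phrasing also transparently covers the subcase $d(j_t,i')>d(j_t,S)$, in which $j_t$ is not itself moved. I would also fix the sign convention for $\Delta_t$ explicitly --- the increase of $\cost(S,\sigma)$ across Step~\ref{step:fast-UFL-handle-j} --- matching the convention for $\Delta_t$ used in Section~\ref{sec:ofl}.
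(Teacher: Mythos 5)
Your proof is correct and follows essentially the same route as the paper's: a short case analysis at Step~\ref{step:fast-UFL-handle-j} showing either that $j_t$'s nearest open facility already gives $d(j_t,S)\le \min_{i\in F}(f_i+d(i,j_t))$, or that $\mathsf{try\mhyphen open'}$ opens a facility $i'$ with $f_{i'}+d(j_t,i')\le\min_{i\in F}(f_i+d(i,j_t))$ and the cost increase is bounded by the one-client reconnection. The paper splits on whether $d(j_t,S)\le f_{i^\star}+d(i^\star,j_t)$ rather than on what $\mathsf{try\mhyphen open'}$ does, but the two bounds produced in the two cases are identical; your write-up is simply a more careful version of the same argument.
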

\begin{proof}
	Let $i = \arg\min_{i \in F}(f_i + d(i, j_t))$ and consider Step~\ref{step:fast-UFL-handle-j} at time $t$. If $d(j_t, S) \leq f_i + d(i, j_t)$ before the step, then we have $\Delta_t \leq d(i, j_t)$. Otherwise, $i \notin S$ and $d(j_t, S) > f_i + d(i, j_t)$. Then $\mathsf{try\mhyphen open'}(i)$ in the step will open $i$ and we have $\Delta_t \leq f_i + d(i, j_t)$.
\end{proof}

We can also prove the following lemma that bounds $\Delta_t$:
\begin{lemma}
	\label{lemma:dfl-delta-t}
	Let $t \in [T], i^* \in F$ such that $f_{i^*} \leq \last_t/\epsilon'$ and $C' \subseteq C_{t-1}$ be any non-empty subset.  Then we have 
	\begin{align*}
		\Delta_t \leq \frac{2}{|C'|}\left(\max\set{f_{i^*}, \last_t/|F|} + \sum_{j \in C'}d(i^*, j)\right) + 5d(i^*, j_t).
	\end{align*}
\end{lemma}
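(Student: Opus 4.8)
The plan is to mimic the proof of Lemma \ref{lmm:ofl-delta-cost-bound}: bound $\Delta_t$ by the cost of connecting $j_t$ to the nearest open facility, then control $d(j_t, S)$ via a triangle inequality through $i^*$, and finally estimate $d(i^*, S)$ using the fact that no cost-decreasing ``open $i^*$'' operation was available right before $j_t$ arrived. Concretely, by Observation \ref{obs:dfl-delta-t} and the fact that Step~\ref{step:fast-UFL-handle-j} always has the option of connecting $j_t$ to its current nearest open facility, we have $\Delta_t \le d(j_t, S)$, where $S$ is the open set right before Step~\ref{step:fast-UFL-handle-j} at time $t$. By triangle inequality, $d(j_t, S) \le d(i^*, S) + d(i^*, j_t)$, so it suffices to bound $d(i^*, S)$.

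To bound $d(i^*, S)$, I would argue that at the moment just before Step~\ref{step:fast-UFL-handle-j} at time $t$, there is no cost-decreasing $\mathsf{try\mhyphen open'}$-type operation for $i^*$, at least not one whose facility-cost threshold is in the range enumerated at Step~\ref{step:fast-UFL-enumerate-q}. The key point is that the loop at Step~\ref{step:fast-UFL-enumerate-q} ranges $q$ over $\ceil{\log(\last/|F|)}$ to $\ceil{\log(\last/\epsilon')}$, so if $f_{i^*} \le \last_t/\epsilon'$, then taking $q = \max\{\ceil{\log f_{i^*}}, \ceil{\log(\last_t/|F|)}\}$ we get $2^q$ at most a constant times $\max\{f_{i^*}, \last_t/|F|\}$, and $i^*$ (or the facility minimizing $d(j_t,\cdot)$ among those with $f_i \le 2^q$, which is at least as good for the purposes below) was a candidate for $\mathsf{try\mhyphen open'}$. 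Since $\mathsf{try\mhyphen open'}(i^*)$ did not decrease the cost, opening $i^*$ and reconnecting the clients in $C'$ to $i^*$ is not cost-improving (this uses that $\mathsf{try\mhyphen open'}$ tests the best possible reconnection set, which dominates reconnecting just $C'$). This yields
\begin{align*}
	\sum_{j \in C'} d(j, S) \le f_{i^*} + \sum_{j \in C'} d(i^*, j) + O\big(\max\{f_{i^*}, \last_t/|F|\}\big),
\end{align*}
possibly with an extra additive slack term coming from the rounding of $2^q$ and the $\mathsf{try\mhyphen open'}$ threshold being $\le 2^q$ rather than exactly $f_{i^*}$. Combined with $d(j, S) \ge d(i^*, S) - d(i^*, j)$ for each $j \in C'$ and averaging over $C'$ — exactly the averaging trick behind Lemma \ref{lemma:helper-star} — I get $d(i^*, S) \le \frac{2}{|C'|}\big(\max\{f_{i^*}, \last_t/|F|\} + \sum_{j \in C'} d(i^*, j)\big) + O(1)\cdot(\text{lower order})$, and plugging into $\Delta_t \le d(i^*,S) + d(i^*, j_t)$ gives the claimed bound (the factor $5$ on $d(i^*, j_t)$ absorbing the slack, and one copy of $d(i^*,j_t)$ coming directly from the triangle inequality, the rest from comparing $i^*$ with the actually-chosen facility $\arg\min_{i: f_i \le 2^q} d(j_t, i)$).

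The main obstacle I anticipate is the bookkeeping around Step~\ref{step:fast-UFL-enumerate-q}: the loop does not call $\mathsf{try\mhyphen open'}(i^*)$ directly but rather $\mathsf{try\mhyphen open'}$ on $\arg\min_{i \in F \setminus S,\, f_i \le 2^q} d(j_t, i)$, and it uses the threshold $2^q$ rather than $f_{i^*}$ itself; also $i^*$ might already be in $S$, in which case $d(i^*,S)=0$ trivially and we are done, or $i^*$ might have $f_{i^*}$ smaller than $\last_t/|F|$, handled by the $\max$. So the real work is checking that for the right choice of $q$ in the enumerated range, the facility that actually gets passed to $\mathsf{try\mhyphen open'}$ is at least as good as $i^*$ for serving $C' \cup \{j_t\}$ up to the stated constants, and that the non-improvement of that call translates into the displayed inequality on $\sum_{j\in C'} d(j,S)$. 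Everything else is the same averaging-and-triangle-inequality computation as in Lemma \ref{lemma:helper-star} and Lemma \ref{lmm:ofl-delta-cost-bound}.
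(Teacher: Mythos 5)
Your plan is correct and follows essentially the same route as the paper's proof: handle $i^*\in S$ trivially, take $q=\ceil{\log\max\{f_{i^*},\last_t/|F|\}}$ (which lies in the enumerated range precisely because of the hypothesis $f_{i^*}\le\last_t/\epsilon'$ and the max with $\last_t/|F|$), invoke the non-improvement of $\mathsf{try\mhyphen open'}$ on the facility $i$ actually tried at that iteration via Lemma~\ref{lemma:helper-star} with $\phi=0$, and convert from $i$ to $i^*$ using $d(i,i^*)\le 2d(i^*,j_t)$ and $f_i\le 2^q\le 2\max\{f_{i^*},\last_t/|F|\}$, which is exactly where the constants $2$ and $5$ come from. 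The one imprecision is your displayed intermediate inequality, where the slack from replacing $d(i,j)$ by $d(i^*,j)$ is an additive $2|C'|\,d(i^*,j_t)$ rather than an $O(\max\{f_{i^*},\last_t/|F|\})$ term, but you correctly route that slack into the $5d(i^*,j_t)$ term, so the argument closes as in the paper.
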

\begin{proof}
	In this proof, we focus on the time $t$ of the algorithm. If $i^* \in S$ before Step~\ref{step:fast-UFL-handle-j}, then we have $\Delta_t \leq d(i^*, j_t)$ and thus we can assume $i^* \notin S$ before Step~\ref{step:fast-UFL-handle-j}.  Since Loop~\ref{step:fast-UFL-enumerate-q} only adds facilities to $S$, we have that $i^* \notin S$ at any moment in Loop~\ref{step:fast-UFL-enumerate-q}.
	
	Let $q = \ceil{\log \max\set{f_{i^*}, \last_t/|F|}}$; notice this $q$ is considered in Loop~\ref{step:fast-UFL-enumerate-q}.  Let $i \in F\setminus S$ be the facility with $f_i \leq 2^q$ nearest to $j_t$ at the beginning of the iteration $q$; this is the facility we try to open in Step~\ref{step:fast-UFL-try-open} in the iteration for $q$.  Notice that $d(j_t, i) \leq d(j_t, i^*)$ since $i^*$ is a candidate facility.
	
	 Since we called $\mathsf{try\mhyphen open}(i)$ in Step~\ref{step:fast-UFL-try-open}, there is no $0$-efficient opening operation that opens $i$ after the step. Then, we can apply Lemma~\ref{lemma:helper-star} on  this facility $i$, the set $C'$ and $\phi = 0$. So, after Step~\ref{step:fast-UFL-try-open} of the iteration for $q$, we have
	\begin{align*}
		d(j_t, S) \leq \frac{1}{|C'|}\left(f_i + 2\sum_{j \in C'}d(i, j)\right)  + d(i, j_t).
	\end{align*}
	
	Notice that  $d(i, i^*) \leq d(i, j_t) + d(j_t, i^*) \leq 2d(j_t, i^*)$,  $f_i \leq 2\max\set{f_{i^*}, \epsilon'\last_t/|F|} $ and $S$ can only grow before the end of Step~\ref{step:fast-UFL-handle-j}. We have 
	\begin{align*}
		\Delta_t &\leq \frac{1}{|C'|}\left(2\max\set{f_{i^*},\last_t/|F|} + 2\sum_{j \in C'}(d(i^*, j) + d(i^*, i))\right) + d(i^*, j_t) \\
		&\leq \frac{2}{|C'|}\left(\max\set{f_{i^*},\last_t/|F|} + \sum_{j \in C'}d(i^*, j)\right) + 5d(i^*, j_t). \qedhere
	\end{align*}
\end{proof}
		
With the lemma, we can then prove the following lemma:
\begin{lemma}
	\label{lemma:dfl-Delta}
	For every $T' \in [T-1]$, we have
	\begin{align*}
		\sum_{t = 1}^{T'} \Delta_t \leq O(\log T') \cdot \opt_{T'}
	\end{align*}
\end{lemma}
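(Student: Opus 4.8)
The plan is to follow the proof of Lemma~\ref{lmm:ofl-delta-cost-bound} almost verbatim, with Observation~\ref{obs:dfl-delta-t} and Lemma~\ref{lemma:dfl-delta-t} playing the roles that $\mathsf{initial\mhyphen connect}$ and Lemma~\ref{lemma:helper-star} played there. Fix $T'\in[T-1]$, fix an optimum solution for $C_{T'}$, and consider a single star $(i^*,C')$ of it (so $i^*$ is open and $C'$ is the set of clients routed to $i^*$). Write $C'\setminus C_0=\{j_{t_1},\dots,j_{t_s}\}$ with $t_1<\cdots<t_s\le T'$, where $C_0$ is the initial client set of the stage. I will bound $\sum_{\ell=1}^s\Delta_{t_\ell}$ by $O(\log s)$ times the cost of the sub-star $(i^*,\{j_{t_1},\dots,j_{t_s}\})$ plus a lower-order term, and then sum over all stars.

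For the first client, Observation~\ref{obs:dfl-delta-t} gives $\Delta_{t_1}\le f_{i^*}+d(i^*,j_{t_1})$. For $\ell\ge 2$, I apply Lemma~\ref{lemma:dfl-delta-t} at time $t_\ell$ with the optimum facility $i^*$ and the nonempty set $C':=\{j_{t_1},\dots,j_{t_{\ell-1}}\}\subseteq C_{t_\ell-1}$. Its hypothesis $f_{i^*}\le\last_{t_\ell}/\epsilon'$ is exactly where $T'\le T-1$ is used: since the stage did not terminate at time $T'$, we have $\last_{T'+1}\le\textsf{init}/\epsilon'$, and then the invariant $\cost(S,\sigma)\le(1+\epsilon')\last$ together with $\opt_{T'}\le\cost(S,\sigma)$ at the end of time $T'$ gives $\opt_{T'}\le O(\textsf{init}/\epsilon')$; combined with $f_{i^*}\le\opt_{T'}$ and $\last_{t_\ell}\ge\textsf{init}$ (monotonicity of $\last$) the hypothesis holds, after shrinking $\epsilon'$ by a constant if necessary. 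Using $\max\{f_{i^*},\last_{t_\ell}/|F|\}\le f_{i^*}+\last_{t_\ell}/|F|$, Lemma~\ref{lemma:dfl-delta-t} yields
$$\Delta_{t_\ell}\ \le\ \frac{2}{\ell-1}\Big(f_{i^*}+\textstyle\sum_{\ell'=1}^{\ell-1}d(i^*,j_{t_{\ell'}})\Big)+\frac{2\,\last_{t_\ell}}{(\ell-1)|F|}+5\,d(i^*,j_{t_\ell}).$$

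Summing over $\ell=1,\dots,s$, the coefficient of $f_{i^*}$ is $1+\sum_{\ell=2}^s\frac{2}{\ell-1}=O(\log s)$ and the coefficient of each $d(i^*,j_{t_{\ell'}})$ is again a harmonic-type sum $O(\log s)$, so these terms contribute $O(\log s)\big(f_{i^*}+\sum_{\ell'}d(i^*,j_{t_{\ell'}})\big)$, i.e. $O(\log s)$ times the cost of the sub-star. The residual term is $\sum_{\ell=2}^s\frac{2\,\last_{t_\ell}}{(\ell-1)|F|}\le\frac{O(\log s)}{|F|}\max_\ell\last_{t_\ell}$, and since every value $\last_{t_\ell}$ with $t_\ell\le T'$ is either $\textsf{init}$ or the cost of an $(\alpha_\FL+\epsilon')$-approximate solution for a subset of $C_{T'}$, we have $\max_\ell\last_{t_\ell}\le O(\opt_{T'})$, so this is $\frac{O(\log s)}{|F|}\opt_{T'}\le\frac{O(\log T')}{|F|}\opt_{T'}$. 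Finally I sum over all stars of the optimum for $C_{T'}$: there are at most $|F|$ stars, so the residuals add up to at most $|F|\cdot\frac{O(\log T')}{|F|}\opt_{T'}=O(\log T')\opt_{T'}$; the sub-stars are disjoint and (as the $i^*$'s are distinct) their costs sum to at most $f(S^*)+\cc(\sigma^*)=\opt_{T'}$, so the main terms add up to $O(\log T')\opt_{T'}$. Since each $j_t$ with $t\le T'$ lies in exactly one star, $\sum_{t=1}^{T'}\Delta_t=\sum_{\mathrm{stars}}\sum_\ell\Delta_{t_\ell}\le O(\log T')\opt_{T'}$.

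The arithmetic (harmonic sums, disjointness of stars) is routine and identical to Lemma~\ref{lmm:ofl-delta-cost-bound}; the main obstacle is the bookkeeping around $\last_t$: one must simultaneously show it is large enough that $f_{i^*}\le\last_t/\epsilon'$ for every optimum facility $i^*$ (which forces the restriction $T'\le T-1$ via the stage-termination rule) and small enough, $O(\opt_{T'})$, that the accumulated $\last_t/|F|$ error over all $|F|$ stars is only $O(\log T')\opt_{T'}$. Both facts rest on the termination condition $\last\le\textsf{init}/\epsilon'$ and the invariant $\cost(S,\sigma)\le(1+\epsilon')\last$ maintained by Algorithm~\ref{alg:fast-UFL-one-stage}.
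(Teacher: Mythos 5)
Your proof is correct and follows essentially the same route as the paper's: decompose the optimum for $C_{T'}$ into stars, bound the first arrival in each star via Observation~\ref{obs:dfl-delta-t} and the rest via Lemma~\ref{lemma:dfl-delta-t}, evaluate the harmonic-sum coefficients, and absorb the $\last_t/|F|$ residual over at most $|F|$ stars. The only (harmless) deviations are bookkeeping: you split the $\max$ into a sum and verify the hypothesis $f_{i^*}\le\last_{t_\ell}/\epsilon'$ via the invariant $\cost(S,\sigma)\le(1+\epsilon')\last$ rather than the paper's direct termination argument, which costs only a constant rescaling of $\epsilon'$.
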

\begin{proof}
	The proof is similar to that of Lemma~\ref{lmm:ofl-delta-cost-bound}. Let $(S^*, \sigma^*)$ be the optimum solution for clients $C_{T'}$.  Focus on some $i^* \in S^*$ and assume $(C_{T'} \setminus C_0) \cap \sigma^{*-1}(i^*) = \{j_{t_1}, j_{t_2}, \cdots, j_{t_s}\}$  with $1 \leq t_1 < t_2 < \cdots < t_s \leq T'$. 
	
	We have $\Delta_{t_1} \leq f_{i^*} + d(i^*, j_{t_1})$ by Observation~\ref{obs:dfl-delta-t}. Then focus on any $k \in [2, s]$. If $f_{i^*} > \last_{t_k}/\epsilon$, then we must have $\opt_{t_k} \geq \last_{t_k}/\epsilon$ and the stage will terminate at time ${t_k}$. Thus ${t_k} = T$, contradicting the assumption that ${t_k} \leq T' \leq T-1$.  So we assume $f_{i^*} \leq \last_{t_k}/\epsilon$. We can apply Lemma~\ref{lemma:dfl-delta-t} with $i^*$ and $C' = \{j_{t_1}, j_{t_2}, \cdots, j_{t_{k-1}} \}$ to obtain that $\Delta_{t_k} \leq \frac{2}{k-1}\left(\max\set{f_{i^*},\last_{t_k}/|F|} + \sum_{k'=1}^{k-1}d(i^*, j_{t_{k'}})\right) + 5d(i^*, j_{t_k})$.  We can replace $\last_{t_k}$ with $\last_{T'}$ since $\last_{t_k} \leq \last_{T'}$.  
	
	The sum of upper bounds over all $k \in [s]$ is a linear combinations of $\max\set{f_{i^*},\last_{T'}/|F|}$ and $d(i^*, j_{t_{k'}})$'s.  
	In the linear combination, the coefficient for $\max\set{f_{i^*},\last_{T'}/|F|}$ is at most $1 + \frac21 + \frac22 + \frac23 +   \cdots + \frac2{s-1} = O(\log s)  = O(\log T')$. The coefficient for $d(i^*, j_{t_{k'}})$ is at most $5 + \frac2{k'} + \frac2{k'+1} + \cdots \frac2{s-1} = O(\log s) = O(\log T')$. Thus, overall, we have $\sum_{k = 1}^{s}\Delta_{t_k} \leq O(\log T') \big(\max\set{f_{i^*},\last_{T'}/|F|} + \sum_{k'=1}^s d(i^*, j_{t_{k'}})\big)$.
	
	Therefore $\sum_{t = 1}^{T'}  \Delta_t \leq O(\log T') \left( \cost(S^*, \sigma^*) + |S^*|\last_{T'}/|F|\right)$, by taking the sum of the above inequality over all $i^* \in S^*$. The bound is at most $O(\log T')(\opt_{T'} + \last_{T'}) = O(\log T') \cdot \opt_{T'}$,  since $|S^*| \leq |F|$ and $\last_{T'} \leq O(1)\opt_{T'-1} \leq O(1) \opt_{T'}$.
\end{proof}


Between two consecutive calls of $\mathsf{FL\mhyphen iterate}$ in Step~\ref{step:fast-UFL-call-iterate} at time $t_1$ and $t_2 > t_1$, $\cost(S, \sigma)$ should have increased by at least $\epsilon'\last_{t_2}$: At the end of time $t_1$, we have $\cost(S, \sigma) \leq \last_{t_1+1} = \last_{t_2}$ since otherwise $\last$ should have been updated in time $t_1$.  We need to have $\cost(S, \sigma) > (1+\epsilon')\last_{t_2}$ after Step~\ref{step:fast-UFL-handle-j} at time $t_2$ in order to call $\mathsf{FL\mhyphen iterate}$. Thus, the increase of the cost during this period is at least $\epsilon' \last_{t_2}$.   Thus, we have $\sum_{t=t_1+1}^{t_2}\frac{\Delta_t}{\epsilon'\cdot\last_t} \geq 1$ since $\last_t = \last_{t_2}$ for every $t \in (t_1, t_2]$.  The argument also holds when $t_1 = 0$ and $t_2 > t_1$ is the first time in which we call $\mathsf{FL\mhyphen iterate}$.  Counting the call of $\mathsf{FL\mhyphen iterate}$ in Step~\ref{step:fast-UFL-init}, we can bound the total number of times we call the procedure by $1 + \frac{1}{\epsilon'}\sum_{t=1}^T\frac{\Delta_t}{\last_t}$. 

Again let $\Phi_{T'}= \sum_{t = 1}^{T'} \Delta_t$ for every $T' \in [0, T]$. Lemma~\ref{lemma:dfl-Delta} says $\Phi_{t} \leq O(\log t) \opt_{t}$ for every $t \in [0, T-1]$. For every $t \in [T]$, since $\Delta_t \leq \opt_t$, thus we have $\Phi_t  = \Phi_{t-1} + \Delta_t \leq O(\log t) \opt_{t-1} \leq O(\log T) \last_t$ since $\last_t$ will be at least the cost of some solution for $C_{t-1}$.  Applying Lemma~\ref{lemma:helper-sum-b/a} with $a_t = \last_t, b_t = \Delta_t$ and $B_t = \Phi_t$ for every $t$, the number of times we call $\mathsf{FL\mhyphen iterate}$ can be bounded by 
\begin{align*}
	1+\frac{1}{\epsilon'}\sum_{t=1}^T\frac{\Delta_t}{\last_t} \leq \frac{1}{\epsilon'} O(\log T) \left(\ln\frac{\last_T}{\last_1}  + 1\right) = O\left(\frac{\log T}{\epsilon}\log\frac{1}{\epsilon}\right).
\end{align*}

We can then analyze the running time and the success probability of our algorithm. Focus on each stage of the algorithm. By Observation~\ref{obs:time-iterate}, each call to $\mathsf{FL\mhyphen iterate}(M)$ takes time $O(M|C|\log |F|) = O\left(\frac{ |F|}{\epsilon'}(\log \Gamma) |C|\log n \right) = O\left(\frac{ n\cdot|C_T|}{\epsilon}\log^2 n\right)$, where $C$ is the set of clients in the algorithm at the time we call the procedure, $C_T \supseteq C$ is the number set of clients at the end of time $T$, and $M = O\left(\frac{|F|}{\epsilon'}\log \Gamma\right)$ is as defined in Step~\ref{step:fast-UFL-M}.  The total number of times we call the procedure is at most $O\left(\frac{\log T}{\epsilon}\log\frac1\epsilon\right) \leq O\left(\frac{\log n}{\epsilon}\log\frac1\epsilon\right)$.  Thus, the running time we spent on $\mathsf{FL\mhyphen iterate}$ is $O\left(\frac{ n\cdot|C_T|}{\epsilon^2}\log^3 n\log\frac{1}{\epsilon}\right)$.  The running time for Steps~\ref{step:fast-UFL-enumerate-q} to \ref{step:fast-UFL-handle-j} is at most $T \cdot O\big(\log \frac{|F|}{\epsilon'}\big) \cdot O\big(|C_T|\log |F|\big) = O(|C_T|T\log^2 \frac{|F|}{\epsilon}) \leq O(n|C_T|\log^2\frac{n}{\epsilon})$. Thus, the total running time of a stage is at most $O\left(\frac{ n\cdot|C_T|}{\epsilon^2}\log^3 n\log\frac{1}{\epsilon}\right)$. Now consider all the stages together. The sum of $|C_T|$ values over all stages is at most $2n$ since every client appears in at most 2 stages. So, the total running time of our algorithm is $O\left(\frac{n^2}{\epsilon^2}\log^3 n\log\frac1\epsilon\right)$.  

For the success probability, the total number of times we call $\mathsf{FL\mhyphen iterate}(M)$ is at most $O\left(\log_{1/\epsilon} (nD)\frac{\log n}{\epsilon}\log \frac1\epsilon\right) = \poly(\log n, \log D, \frac1\epsilon)$. If we have  $\Lambda$ is at least $n^2$ times this number, which is still $\poly(n, \log D, \frac{1}{\epsilon})$, then the success probability of our algorithm is at least $1-1/n^2$.   

Finally, we remark that the success of the algorithm only depends on the success of all executions of $\mathsf{FL\mhyphen iterate}$.  Each execution has success probability $1-1/\Gamma$ even if the adversary is adaptive.  This finishes the proof of Theorem~\ref{UFL-dynamicIncremental}.

\paragraph{Remark} We can indeed obtain an algorithm that has both $O(\log T)$ amortized client recourse and $\tilde O(n^2)$ total running time, by defining $\phi = \frac{\cost(S, \sigma)}{\alpha_\FL\epsilon'}$ and only performing $\phi$-efficient local operations.  
However, this will require us to put $\phi$ everywhere in our analysis and deteriorate the cleanness of the analysis.  Thus, we choose to separate the two features in two algorithms: small recourse and $\tilde O(n^2)$ total running time. 

We also remark that the total running time for all calls of $\mathsf{FL\mhyphen iterate}$ is only $\tilde O(n|F|)$, and the $\tilde O(n^2)$ time comes from Steps~\ref{step:fast-UFL-enumerate-q} to \ref{step:fast-UFL-handle-j}.  By losing a multiplicative factor of $2$ and additive factor of $1$ in the approximation ratio, we can assume every client is collocated with its nearest facility (See Appendix~\ref{appendix:moving-clients}). Then at any time we only have $O(|F|)$ different positions for clients, and the running time of the algorithm can be improved to $O(\frac{n|F|}{\epsilon^2}\log^3n\log\frac1{\epsilon})$. 
\section{Fully Dynamic Algorithm for Facility Location on Hierarchically Well Separated Tree Metrics}
\label{sec:dfl-fully}

%
In this section, we give our fully dynamic algorithm for facility location on hierarchically-well-separated-tree (HST) metrics. Our algorithm achieves $O(1)$-approximation and $O(\log^2D)$ amortized update time.   As we mentioned early, we assume each client is collocated with a facility.  From now on, we fix the HST $T$ and assume the leaves of $T$ is $X = F$; let $V$ be the set of all nodes in $T$. Let $d_T$ be the metric induced by $T$ over the set $V$ of vertices.

\noindent{\bf Notations.} Recall that $\level(v)$ is the level of $v$ in $T$. For every vertex $v \in V$, define $\Lambda_v$ to be the set of children of $v$, $X_v$ to be the set of leaf descendants  of $v$, and $T_v$ be the maximal sub-tree of $T$ rooted at $v$. We extend the facility cost from $X$ to all vertices in $V$: for every $v \in V \setminus X$, we define $f_v = \min_{i \in X_v}f_i$.  We can assume that each internal vertex $v$ is a facility; by opening $v$ we mean opening a copy of the $i \in X_v$ with $f_i = f_v$. This assumption only loses a factor of $2$ in the competitive ratio: On one hand, having more facilities can only make our problem easier; on the other hand, the cost of connecting a client to any $i \in X_v$ is at most twice that of connecting it to $v$. By the definition, the facility costs along a root-to-leaf path are non-decreasing.

\subsection{Offline Algorithm for Facility Location on HST Metrics}
In this section, we first give an offline $O(1)$-approximation algorithm for facility location on the HST metric $d_T$ as a baseline.  Notice that facility location on trees can be solved exactly using dynamic programming. However the algorithm is hard to analyze in the dynamic algorithm model since the solution is sensitive to client arrivals and departures.  Our algorithm generalizes the algorithm in \cite{EsencayiGLW19} for facility location with uniform facility cost,  that was used  to achieve the differential privacy requirement. 

For every vertex $v \in V$, we let $N_v$ be the number of clients at locations in $X_v$. Although according to the definition $N_v$'s are integers, in most part of the analysis we assume there are non-negative \emph{real numbers}. This will be useful when we design the dynamic algorithm.  Let $\alpha \in \{1, 2\}^V$ and $\beta \in \{1, 2\}^{V \setminus X}$ be vectors given to our algorithm.  They are introduced solely for the purpose of extending the algorithm to the dynamic setting; for the offline algorithm we can set $\alpha$ and $\beta$ to be all-1 vectors. 

\paragraph{Marked and Open Facilities} For every vertex $v \in V$, we say $v$ is \emph{marked} w.r.t the vectors $N$ and $\alpha$ if $$N_v \cdot 2^{\level(v)} > f_v/\alpha_v $$ and \emph{unmarked} otherwise. 
The following observation can be made:
\begin{obs}
	Let $u$ be the parent of $v$. If $v$ is marked w.r.t $N$ and $\alpha$, so is $u$.
\end{obs}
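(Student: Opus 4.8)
The plan is to establish the marking inequality for $u$ by a short chain of inequalities, exploiting the factor of $2$ that is gained by moving one level up in the tree. I would first record the three monotonicity facts that make everything work: (i) since $u$ is the parent of $v$, we have $\level(u) = \level(v) + 1$, hence $2^{\level(u)} = 2 \cdot 2^{\level(v)}$; (ii) since $X_v \subseteq X_u$, every client counted in $N_v$ is also counted in $N_u$, so $N_u \geq N_v$; and (iii) again because $X_v \subseteq X_u$, the extended facility cost satisfies $f_u = \min_{i \in X_u} f_i \leq \min_{i \in X_v} f_i = f_v$ (for the case $v \in X$ one reads $f_v$ as the leaf cost $f_v$, and the inequality $f_u \le f_v$ is exactly the statement that facility costs are non-decreasing along a root-to-leaf path).

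Next I would combine these with the coordinate bounds on the auxiliary vectors, namely $\alpha_u \geq 1$ and $\alpha_v \leq 2$, which hold since $\alpha \in \{1,2\}^V$. Assuming $v$ is marked, i.e.\ $N_v \cdot 2^{\level(v)} > f_v/\alpha_v$, the chain runs
\begin{align*}
	N_u \cdot 2^{\level(u)} \;\geq\; 2\, N_v \cdot 2^{\level(v)} \;>\; \frac{2 f_v}{\alpha_v} \;\geq\; f_v \;\geq\; f_u \;\geq\; \frac{f_u}{\alpha_u},
\end{align*}
where the first step uses (i) and (ii), the second uses the assumption that $v$ is marked, the third uses $\alpha_v \le 2$, the fourth uses (iii), and the last uses $\alpha_u \ge 1$. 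This shows $N_u \cdot 2^{\level(u)} > f_u/\alpha_u$, i.e.\ $u$ is marked.

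There is essentially no obstacle here; the only point that requires a moment's care is the bookkeeping of the $\alpha$-factors, since $f_v/\alpha_v$ can be as small as $f_v/2$ — but this loss is precisely compensated by the factor of $2$ coming from $\level(u) = \level(v)+1$, and the factor $1/\alpha_u$ on the right only helps since $\alpha_u \ge 1$. I would also remark in passing that this monotonicity is exactly what makes the notion of a "lowest marked vertex" well-defined and is the structural fact underlying the offline algorithm's correctness.
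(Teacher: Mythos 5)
Your proof is correct and follows essentially the same chain of inequalities as the paper: $N_u 2^{\level(u)} \geq 2N_v 2^{\level(v)} > 2f_v/\alpha_v \geq f_u/\alpha_u$, using $N_u \ge N_v$, the level shift, and $f_u \le f_v$. The only cosmetic difference is that the paper bundles the $\alpha$-bookkeeping into the single inequality $\alpha_v \le 2\alpha_u$, whereas you use $\alpha_v \le 2$ and $\alpha_u \ge 1$ separately — these are equivalent since $\alpha \in \{1,2\}^V$.
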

\begin{proof}
	$v$ is marked w.r.t $N$ and $\alpha$ implies $N_v 2^{\level(v)} > f_v/\alpha_v $.  Notice that $N_u \geq N_v, \level(u) = \level(v) + 1,  \alpha_v \leq 2\alpha_u$ and $f_u \leq f_v$. So, $N_u 2^{\level(u)} \geq 2N_v2^{\level(v)} > 2 f_v/\alpha_v \geq f_u/\alpha_u $.
\end{proof}
Thus there is a monotonicity property on the marking status of vertices in $T$. 
We say a vertex $v$ is highest unmarked (w.r.t $N$ and $\alpha$) if it is unmarked and its parent is marked; we say a vertex $v$ is lowest marked if it is marked but all its children are unmarked.  However, sometimes we say a vertex $u$ is the lowest marked ancestor of a leaf $v \in X$ if either $u=v$ is marked, or $u\neq v$ is marked and the child of $u$ in the $u$-$v$ path is unmarked; notice that in this case, $u$ might not be a lowest marked vertex since it may have some other marked children. If we need to distinguish between the two cases, we shall use that $u$ is lowest marked \emph{globally} to mean $u$ is a lowest marked vertex.

If a leaf vertex $v \in X$ is marked, then we open $v$. For every marked vertex $v \in V\setminus X$, we open $v$ if and only if  $$\left(\sum_{u \in \Lambda_v: u \text{ unmarked}} N_u  \right)2^{\level(v)}> f_v/(\alpha_v\beta_v).$$ 
Notice that all unmarked vertices are closed.

\begin{obs}
	\label{obs:departure-lowest-open}
	If $v$ is lowest marked, then $v$ is open.
\end{obs}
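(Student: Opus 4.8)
The plan is to unwind the definitions and use the monotonicity observation just proved. Recall that ``$v$ is lowest marked'' means $v$ is marked but all its children $u \in \Lambda_v$ are unmarked. I need to show $v$ is open. There are two cases depending on whether $v$ is a leaf or an internal vertex.

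First, if $v \in X$ is a leaf, then the definition simply says: a marked leaf is opened. So there is nothing to prove in this case.

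Second, suppose $v \in V \setminus X$ is an internal vertex. Then $v$ is open if and only if $\left(\sum_{u \in \Lambda_v : u \text{ unmarked}} N_u\right) 2^{\level(v)} > f_v/(\alpha_v \beta_v)$. Since $v$ is lowest marked, \emph{every} child $u \in \Lambda_v$ is unmarked, so the sum on the left is exactly $\sum_{u \in \Lambda_v} N_u = N_v$ (every client in $X_v$ lies in $X_u$ for exactly one child $u$). Hence the opening condition reduces to $N_v \cdot 2^{\level(v)} > f_v/(\alpha_v \beta_v)$. On the other hand, $v$ being marked gives $N_v \cdot 2^{\level(v)} > f_v/\alpha_v$. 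Since $\beta_v \in \{1,2\}$, we have $f_v/(\alpha_v\beta_v) \le f_v/\alpha_v < N_v \cdot 2^{\level(v)}$, so the opening condition holds and $v$ is open.

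This is entirely routine; I do not anticipate any real obstacle. The only thing to be careful about is the bookkeeping in the internal-vertex case — making sure that ``$v$ lowest marked'' forces the unmarked-children sum to be the full count $N_v$, and that the extra factor $\beta_v \ge 1$ only helps (i.e. makes the threshold smaller, not larger). I would write the proof in four or five lines: dispatch the leaf case in one sentence, then for the internal case observe that all children being unmarked makes $\sum_{u\in\Lambda_v,\, u\text{ unmarked}} N_u = N_v$, and conclude from the marking inequality together with $\beta_v \ge 1$.
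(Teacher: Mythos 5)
Your proof is correct and follows exactly the same route as the paper's: dispatch the leaf case by definition, note that all children of a lowest marked internal vertex being unmarked makes the unmarked-children sum equal to $N_v$, and conclude from the marking inequality together with $\beta_v \ge 1$.
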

\begin{proof}
	 We can assume $v \notin X$ since otherwise $v$ is open. So, $N_v 2^{\level(v)} > f_v/\alpha_v$ and all children of $v$ are unmarked. Thus, $\sum_{u \in \Lambda_v: {u\text{ unmarked}}}N_u = \sum_{u \in \Lambda_v}N_u = N_v$. Therefore, $\left(\sum_{u \in \Lambda_v: {u\text{ unmarked}}}N_u\right) 2^{\level(v)} = N_v 2^{\level(v)} > f_v/\alpha_v \geq f_v/(\alpha_v\beta_v)$. Thus $v$ will be open. 
\end{proof}

With the set of open facilities defined, every client is connected to its nearest open facility according to $d_T$, using a consistent tie-breaking rule (e.g, the nearest open facility with the smallest index).  We assume the root $r$ of $T$ has $\frac{f_v}{2^{\level(v)}} \leq 1$ by increasing the number of levels. So $r$ will be marked whenever $N_r \geq 1$.   This finishes the description of the offline algorithm. 

\paragraph{Analysis of $O(1)$-Approximation Ratio.} We show the algorithm achieves an $O(1)$-approximation.  First we give a lower bound on the optimum cost. For every $v \in V$,  let $$\LB(v) = \min\set{N_v2^{\level(v)}, f_v}.$$ Then we have
\begin{lemma} \label{lemma:departure-LB}
	Let $U$ be a set of vertices in $T$ without an ancestor-descendant pair; i.e, for every two distinct vertex $u$ and $v$ in $U$, $u$ is not an ancestor of $v$. Then the cost of the optimum solution is at least $\sum_{v \in U}\LB(v)$. 
\end{lemma}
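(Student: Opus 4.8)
The plan is to prove a lower bound on $\OPT$ by charging, separately for each $v\in U$, an amount of at least $\LB(v)$ to a disjoint portion of the optimum's cost, namely the facility cost and connection cost incurred ``inside $T_v$''. Since $U$ has no ancestor-descendant pair, the sub-trees $\{T_v : v\in U\}$ are pairwise disjoint, so these charges do not overlap and summing them gives the claim. Fix an optimum solution $(S^*,\sigma^*)$. For a single $v\in U$, consider the $N_v$ clients located in $X_v$. Either the optimum opens some facility in $X_v$, or it does not.

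If the optimum opens no facility inside $T_v$ (equivalently inside $X_v$), then every one of the $N_v$ clients in $X_v$ must be connected to an open facility outside $T_v$; any such connection uses the parent edge of $v$ (of length $2^{\level(v)}$, recalling our convention that the edge from $v$ to its parent has weight $2^{\level(v)}$), so each of these clients pays at least $2^{\level(v)}$. Hence the connection cost of these clients is at least $N_v\cdot 2^{\level(v)}\ge \LB(v)$, and all of this cost is ``charged to $T_v$'' in the sense that it is paid by clients in $X_v$, which are disjoint across $v\in U$. If on the other hand the optimum opens at least one facility $i\in X_v$, then it pays $f_i\ge f_v$ for that facility (using $f_v=\min_{i\in X_v}f_i$), and this facility-cost term is located inside $T_v$; again $f_v\ge\LB(v)$, and facilities inside distinct $T_v$'s are disjoint. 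In both cases the optimum pays at least $\LB(v)=\min\{N_v2^{\level(v)}, f_v\}$ from a budget (connection cost of clients in $X_v$, or facility cost of facilities in $X_v$) that is disjoint from the corresponding budget for any other $v'\in U$.

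Summing over $v\in U$ and using the disjointness of $\{X_v\}_{v\in U}$ (both as client sets and as facility-location sets), we conclude $\cost(S^*,\sigma^*)\ge \sum_{v\in U}\LB(v)$, which is the statement.

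The only subtlety, and the part to be careful with, is making the disjointness of the charges airtight: a client in $X_v$ could in principle also lie in $X_{v'}$ for a different $v'\in U$, but this is exactly ruled out by the no-ancestor-descendant hypothesis, since $X_v\cap X_{v'}\neq\emptyset$ would force $T_v$ and $T_{v'}$ to share a leaf and hence one of $v,v'$ to be an ancestor of the other (or $v=v'$). Likewise any opened facility $i\in X_v$ lies in a unique $T_v$ among $U$. A second minor point is that the two cases per vertex charge to different cost components, so even within a single $v$ there is no double counting; and across vertices, a client in $X_v$ that the optimum happens to connect outside $T_v$ contributes only to $v$'s connection budget, while the facility case contributes only to facility budgets — these are globally consistent because we never charge the same client's connection cost or the same facility's opening cost twice. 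I expect this bookkeeping to be routine; no real obstacle is anticipated.
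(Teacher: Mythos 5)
Your proof is correct and follows essentially the same argument as the paper: for each $v\in U$, case on whether the optimum opens a facility in $T_v$, charge either $f_v$ or $N_v2^{\level(v)}$ to the cost incurred inside $T_v$, and use the no-ancestor-descendant hypothesis to ensure the subtrees (hence the charges) are disjoint. Your extra bookkeeping about disjointness is just a more explicit version of the paper's one-line remark that the trees $T_v$ are disjoint.
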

\begin{proof}
	Fix an optimum solution. Consider any $v \in U$. We consider the cost inside $T_v$ in the optimum solution: the connection cost of clients, plus the cost of open facilities in $T_v$. Then this cost is at least $\LB(v)= \min\set{N_v2^{\level(v)}, f_v}$: If we open a facility in $T_v$ then the facility cost is at least $f_v$; otherwise, all the $N_v$ clients in $T_v$ have to be connected to outside $T_v$, incurring a cost of at least $N_v2^{\level(v)}$. The lemma follows from that the trees $T_v$ over all $v \in U$ are disjoint and thus we are not over-counting the costs in the optimum solution.
\end{proof}

Then let $U$ be the set of highest unmarked vertices and marked leaves; clearly $U$ does not have an ancestor-descendant pair. By Lemma~\ref{lemma:departure-LB}, the optimum cost is at least $\sum_{v \in U}\LB(v)$. We prove the following lemma.
\begin{lemma}
	\label{lemma:departure-UB}
	 The solution produced by our algorithm has cost at most $O(1)\sum_{u \in U}\LB(u)$.
\end{lemma}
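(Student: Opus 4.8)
The plan is to charge both the connection cost of the algorithm's solution and its facility-opening cost against $\sum_{u \in U}\LB(u)$, where $U$ is the set of highest unmarked vertices together with marked leaves. First I would handle the \emph{facility cost}. Every open facility is a marked vertex $v$, and it satisfies either $N_v 2^{\level(v)} > f_v/\alpha_v$ (the marking condition, giving $f_v < \alpha_v N_v 2^{\level(v)} \le 2 N_v 2^{\level(v)}$) or, for internal $v$, the opening condition $\big(\sum_{u \in \Lambda_v: u\text{ unmarked}} N_u\big) 2^{\level(v)} > f_v/(\alpha_v\beta_v)$. In either case, to bound $\sum_{v \text{ open}} f_v$, I want to map each open $v$ to a distinct amount of ``witness'' clients living at or just below highest-unmarked vertices. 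Concretely, for an open internal vertex $v$, the unmarked children $u \in \Lambda_v$ are highest unmarked vertices (their parent $v$ is marked), so they lie in $U$, and $f_v < 2\cdot 2^{\level(v)} \sum_{u\in\Lambda_v, u\text{ unmarked}} N_u \le 2\sum_{u\in\Lambda_v, u\text{ unmarked}} N_u 2^{\level(u)+1}$; since each such $u$ is unmarked, $N_u 2^{\level(u)} \le f_u/\alpha_u \le f_u$, hence $N_u 2^{\level(u)} \le \LB(u)$, and we get $f_v \le 4\sum_{u \in \Lambda_v\cap U}\LB(u)$. Summing over open internal $v$: each highest-unmarked $u$ has exactly one parent, so the $\LB(u)$ terms are used $O(1)$ times, giving $\sum_{v\text{ open internal}} f_v \le O(1)\sum_{u\in U}\LB(u)$. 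For an open \emph{leaf} $v$ (which is marked), $v \in U$ and $f_v = \LB(v)$ only if $f_v \le N_v 2^{\level(v)}$; but marked means $N_v 2^{\level(v)} > f_v/\alpha_v \ge f_v/2$, so $\LB(v) = \min\{N_v 2^{\level(v)}, f_v\} \ge f_v/2$, i.e.\ $f_v \le 2\LB(v)$. Hence the total facility cost is $O(1)\sum_{u\in U}\LB(u)$.

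Next I would bound the \emph{connection cost}. Fix a client at leaf $j \in X$ and let $u^*$ be its lowest marked ancestor. The key structural claim is that $u^*$ is \emph{open}: if $u^*$ is a leaf it is open by definition; if $u^*$ is internal, then the child of $u^*$ on the $u^*$–$j$ path is unmarked, so $N_{u^*}$ counts at least the clients under that unmarked child, but more carefully — actually the cleanest route is: either $u^*$ is lowest marked \emph{globally} (all children unmarked), in which case Observation~\ref{obs:departure-lowest-open} gives that $u^*$ is open; or $u^*$ has some marked child $w$, but then $j$ is not under $w$, and $j$ sits under an unmarked child of $u^*$. In the latter case I need a small argument that $u^*$ is still open or, failing that, that $j$ can be routed to a nearby open facility. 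I would argue: since $j$ is under an unmarked child of $u^*$, the distance from $j$ to $u^*$ is at most $2\cdot 2^{\level(u^*)}$ (geometric sum of edge weights $2^{\level(u^*)-1} + 2^{\level(u^*)-2}+\cdots$, all doubled for going to an actual leaf facility at $u^*$), and the distance from $j$ to \emph{any} facility inside the subtree of its unmarked ancestors is comparable. The standard trick here is: walk up from $j$ through unmarked vertices $j = v_0, v_1, \ldots, v_{\ell}$ until reaching the first marked vertex $v_{\ell+1} = u^*$; the $v_1,\ldots,v_\ell$ are all unmarked, and $j$'s nearest open facility is at distance $O(2^{\level(u^*)})$ because $u^*$'s subtree on the $j$-side, being rooted at an unmarked child, contains total client weight small relative to $f$, which is exactly the regime where NOT opening is cheap. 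I would then charge $d_T(j, S)$ to the $\LB$ of the unique highest-unmarked ancestor $v_\ell$ of $j$: that ancestor is in $U$, it is unmarked so $N_{v_\ell} 2^{\level(v_\ell)} \le f_{v_\ell}$, hence $\LB(v_\ell) = N_{v_\ell} 2^{\level(v_\ell)}$, and $d_T(j,S) \le O(2^{\level(v_\ell)}\cdot 2) = O(2^{\level(v_\ell)})$ per client while there are $N_{v_\ell}$ clients under $v_\ell$, so the total connection cost charged to $v_\ell$ is $O(N_{v_\ell} 2^{\level(v_\ell)}) = O(\LB(v_\ell))$. Summing over the (disjoint) highest-unmarked vertices gives connection cost $\le O(1)\sum_{u\in U}\LB(u)$. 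Adding the two bounds proves the lemma.

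The main obstacle I expect is the connection-cost step, specifically pinning down \emph{why} a client $j$ under a highest-unmarked vertex $v_\ell$ always has an open facility within distance $O(2^{\level(v_\ell)})$. The parent $v_{\ell+1}=u^*$ of $v_\ell$ is marked, but $u^*$ is only guaranteed to be open when it is lowest marked \emph{globally}; if $u^*$ has another marked child, $u^*$ might still fail the opening inequality (because $\sum_{u\in\Lambda_{u^*}, u\text{ unmarked}}N_u$ could be small even though $N_{u^*}$ is large, the bulk being under marked children). In that case $j$ must be routed to an open facility deeper in a \emph{sibling} marked subtree, and the distance is $\Theta(2^{\level(u^*)}) = \Theta(2^{\level(v_\ell)+1})$, still $O(2^{\level(v_\ell)})$ — so the bound survives, but the argument needs care: I must show that whenever $u^*$ is marked, \emph{some} open facility lies within $T_{u^*}$ at distance $O(2^{\level(u^*)})$ from any leaf of $T_{u^*}$. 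I would prove this by induction down the tree: a marked vertex $u^*$ either opens itself, or all its unmarked children contribute little, meaning the ``missing'' clients are concentrated under marked children, and by induction each marked child contains an open facility near its leaves; chaining gives an open facility within $O(2^{\level(u^*)})$ of every leaf of $T_{u^*}$ (the key being that the per-level edge weights form a geometric series dominated by the top level). Getting the constants and the induction hypothesis stated cleanly is the delicate part; everything else is bookkeeping with the marking/opening inequalities and the disjointness of the subtrees $T_u$ for $u \in U$.
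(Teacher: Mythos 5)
Your proposal is correct and follows essentially the same route as the paper: the facility cost is charged to the $\LB$ values of the open vertex itself (leaf case) or of its unmarked children (internal case), and the connection cost of all clients under a highest-unmarked vertex $u$ is charged to $\LB(u)=N_u2^{\level(u)}$. The only place you overcomplicate is the step you flag as delicate: no induction is needed to find a nearby open facility. Letting $w$ be the (marked) parent of the highest unmarked ancestor $u$ of the client, simply descend from $w$ through marked children until you reach a vertex all of whose children are unmarked (or a marked leaf); this is a \emph{globally} lowest marked vertex, hence open by Observation~\ref{obs:departure-lowest-open}, and since the edge weights are geometric, every leaf of $T_w$ is within distance $O(2^{\level(w)})=O(2^{\level(u)})$ of the client. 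That one observation replaces your inductive argument and the constants fall out immediately.
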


\begin{proof}
First consider the facility cost of our solution. If a leaf $v$ is marked and open, we have $N_v > f_v/\alpha_v$ (as $\level(v) = 0$) and thus $\LB(v) = \min\set{N_v,f_v} \geq f_v/\alpha_v$. Then $f_v$ can be bounded by $\alpha_v\LB(v) \leq 2\LB(v)$.   If $v \in V \setminus X$ is marked and open, then by our algorithm we have $\sum_{u \in \Lambda_v: u \text{ unmarked}}N_u 2^{\level(v)} \geq f_v/(\alpha_v\beta_v )$. Since each $u$ in the summation is unmarked, we have $\LB(u) = N_u 2^{\level(u)}$. Thus, we have $\sum_{u \in \Lambda_v: u\text{ unmarked}}\LB(u)  = \frac12\sum_{u}N_u 2^{\level(v)} \geq \frac12 f_v/(\alpha_v\beta_v) \geq \frac18 f_v$.  That is $f_v$ can be bounded by $8\sum_{u \in \Lambda_v:u \text{ unmarked}}\LB(u)$. Notice that each $u$ in the summation has $u \in U$ since it is highest unmarked.  So, summing the bounds over all open facilities $v$ gives us that the facility cost of our solution is at most $8\sum_{u \in U}\LB(u)$. 

Now consider the connection cost. For every $v \in X$, let $u$ be the highest unmarked ancestor of $v$ (if $v$ itself is open, then its connection cost is $0$ and we do not need to consider this case). Let $w$ be the parent of $u$; so $w$ is marked. Then there must be an open facility in the maximal tree rooted at $w$: consider any lowest marked vertex in the sub-tree rooted at $w$; it must be open by Lemma~\ref{obs:departure-lowest-open}.  Thus, any client at $v$ has connection cost at most $2 \times 2^{\level(w)} = 4 \times 2^{\level(u)}$. Thus, the total connection cost in our solution is at most $4\sum_{u \in U \setminus X}N_u2^{\level(u)} = 4\sum_{u \in U \setminus X}\LB(u)$. This finishes the proof of the lemma. 
\end{proof}
Combining Lemmas~\ref{lemma:departure-LB} and \ref{lemma:departure-UB} gives that our algorithm is an $O(1)$-approximation.  One lemma that will be useful in the analysis of dynamic algorithm is the following:
\begin{lemma}
	\label{lemma:departure-ub-outside}
	For any open facility $v$ in our solution, the number of clients connected to $v$ that are outside $T_v$ is at most $O(\log D)\frac{f_v}{2^{\level(v)}}$.
\end{lemma}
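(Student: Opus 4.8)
The plan is to bound $|J|$, where $J$ denotes the set of clients connected to $v$ that lie at leaves outside $X_v$, by grouping these clients according to where the path from their leaf to $v$ meets the path from $v$ to the root. Since $v$ is open it is marked, and by the upward monotonicity of the marking status (the observation stated right after the definition of marked vertices) every ancestor of $v$ is marked. For a client at leaf $x$ with $x\notin X_v$, the lowest common ancestor $w$ of $x$ and $v$ is a strict — hence marked — ancestor of $v$; so I can write $J=\biguplus_w J_w$ over the strict ancestors $w$ of $v$, and it suffices to bound each $|J_w|$ and sum.

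Next I would establish two structural facts about any $w$ with $J_w\neq\emptyset$. Fix $j\in J_w$ at leaf $x$, let $u$ be the child of $w$ with $x\in X_u$, and let $u'$ be the child of $w$ with $v\in X_{u'}$; note $u\neq u'$ and $\level(v)\le\level(u')=\level(u)$. (i) $w$ is not open: if it were, then since $\level(w)>\level(v)$ we would have $d_T(x,v)=d_T(x,w)+\bigl(2^{\level(w)}-2^{\level(v)}\bigr)>d_T(x,w)$, so $x$ would connect to the open vertex $w$ rather than to $v$. (ii) $u$ is unmarked: if $u$ were marked, then $T_u$ would contain a globally lowest-marked vertex $o$, which is open by Observation~\ref{obs:departure-lowest-open}; using the HST edge weights $2^{\level(\cdot)}$ one gets $d_T(x,o)\le d_T(x,u)+d_T(u,o)\le 2\bigl(2^{\level(u)}-1\bigr)<2^{\level(u)+1}$, whereas $d_T(x,v)\ge d_T(u,v)=2^{\level(u)}+\bigl(2^{\level(u)+1}-2^{\level(v)}\bigr)\ge 2^{\level(u)+1}$, so again $x$ would not connect to $v$. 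Verifying these two distance comparisons carefully — in particular obtaining the strict inequality in (ii) while correctly summing the doubling edge weights along the up-and-down paths between $x$, $u$, $w$ and $v$ — is the one place that needs care and is the main obstacle.

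Finally, with (i) and (ii) in hand the counting is routine: every client of $J_w$ sits in $X_u$ for some unmarked child $u\neq u'$ of $w$, so $|J_w|\le\sum_{u\in\Lambda_w:\,u\text{ unmarked}}N_u$. Since $w$ is a marked internal vertex that is not open, the opening rule for marked internal vertices gives $\bigl(\sum_{u\in\Lambda_w:\,u\text{ unmarked}}N_u\bigr)2^{\level(w)}\le f_w/(\alpha_w\beta_w)\le f_w$, and $f_w=\min_{i\in X_w}f_i\le f_v$ because $X_v\subseteq X_w$. Hence $|J_w|\le f_w/2^{\level(w)}\le f_v/2^{\level(w)}$. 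Summing over the strict ancestors $w$ of $v$ — there are $\level(r)-\level(v)=O(\log D)$ of them, at levels $\level(v)+1,\dots,\level(r)$ — gives $|J|\le f_v\sum_{\ell=\level(v)+1}^{\level(r)}2^{-\ell}<f_v/2^{\level(v)}$, which certainly implies the claimed bound $O(\log D)\cdot f_v/2^{\level(v)}$; alternatively one may simply bound each of the $O(\log D)$ terms crudely by $f_v/2^{\level(v)}$.
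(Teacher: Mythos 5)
Your proof is correct and follows essentially the same route as the paper's: group the clients by the ancestor of $v$ at which their path to $v$ branches off, observe that the branching child must be unmarked (else an open lowest-marked facility inside it would be strictly closer) and that the ancestor itself must be closed (else the clients would connect there), and then invoke the non-opening threshold $\sum_{u\text{ unmarked}}N_u\,2^{\level(w)}\le f_w/(\alpha_w\beta_w)\le f_w\le f_v$ to bound each group. Your per-ancestor bound $f_v/2^{\level(w)}$ is in fact slightly sharper than the paper's $f_v/2^{\level(v)}$, giving a geometric sum rather than an $O(\log D)$ factor, but the argument is the same.
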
 
\begin{proof}
	We consider each ancestor $u$ of $v$ and count the number clients connected to $v$ with lowest common ancestor with $v$ being $u$. Focus on a child $w$ of $u$ that is not $v$ or an ancestor of $v$. If $w$ is marked, then no clients in $T_w$ will be connected to $v$ since some facility in $T_w$ will be open.  Thus, let $U'$ be the unmarked children of $u$ that is not $v$ or an ancestor of $v$.  Then if we have $\sum_{w \in U'}N_w2^{\level(u)} \geq f_u/(\alpha_u\beta_u)$, then $u$ will be marked and open and clients in $T_w, w \in U'$ will not be connected to $v$. Otherwise we have $\sum_{w \in U'}N_w< f_u/(\alpha_u\beta_u\cdot 2^{\level(u)} ) \leq f_u/2^{\level(u)} \leq f_v/2^{\level(v)}$ as $f_u \leq f_v$ and $\level(u) \geq \level(v)$.  The lemma follows since we have at most $O(\log D)$ ancestors of $v$.
\end{proof}


\paragraph{Remark} The algorithm so far gives a \emph{data structure} that supports the following operations in $O(\log D)$ time: i) updating $N_v$ for some $v \in X$ and ii) returning the nearest open facility of a leaf $v \in X$.    Indeed the algorithm can be made simpler: We set $\alpha$ to be the all-1 vector, and we open the set of lowest marked facilities (so both $\alpha$ and $\beta$ are not needed).  For every vertex $u \in V$, we maintain the nearest open facility $\psi_u$ to $u$ in $T_u$.  Whenever a client at $v$ arrives or departs, we only need change $N_u$, $\psi_u$, marking and opening status of $u$ for ancestors $u$ of $v$.  To return the closest open facility to a leaf $v \in X$, we travel up the tree from $v$ until we find an ancestor $u$ with $\psi_u$ defined, and return $\psi_u$.  Both operations take $O(\log D)$ time.  However, our goal is to maintain the solution $(S, \sigma)$ \emph{explicitly} in memory. Thus we also have to bound the the number of reconnections during the algorithm, since that will be a lower bound on the total running time.

\subsection{Dynamic Algorithm for Facility Location on HST Metrics}
In this section, we extend the offline algorithm to a dynamic algorithm with $O(\log^3 D)$-amortized update time; recall that $D$ is the aspect ratio of the metric. We maintain $\alpha, \beta$ and $N$-vectors, and at any moment of the algorithm, the marking and opening status of vertices are exactly the same as that obtained from the offline algorithm for $\alpha, \beta$ and $N$.

Initially, let $\alpha$ and $\beta$ be all-$1$ vectors, and $N$ be the all-0 vector.  So all the vertices are unmarked. 
Whenever a client at some $v \in X$ arrives or departs, the $\alpha, \beta$ values, the marking and opening status of ancestors of $v$ may change and we show how to handle the changes. The vertices that are not ancestors of $v$ are not affected during the process.

When a client at $v$ arrives or departs, we increase or decrease the $N_u$ values for all ancestors $u$ of $v$ by 1 \emph{continuously} at the same rate (we can think of that the number of clients at $v$ increases or decreases by 1 continuously).  During the process, the marking and opening status of these vertices may change. If such an event happens, we change $\alpha$ and/or $\beta$ values of the vertex so that it becomes harder for the status to change back  in the future. Specifically, we use the following rules:
\begin{itemize}
	\item If a vertex $u$ changes to marked (from being unmarked), then we change $\alpha_u$ to $2$ (notice that $u$ remains marked w.r.t the new $\alpha$), and $\beta_u$ to $1$. In this case, we do not consider the opening status change of $u$ as an event.
	\item If a vertex $u$ changes to unmarked (from being marked), we change $\alpha_u$ to $1$ (notice that $u$ remains unmarked w.r.t the new $\alpha$). The $\beta_u$ value becomes useless. In this case, we also do not consider the opening status change of $u$ as an event.
	\item If a marked vertex $u$ becomes open (from being closed), then we change $\beta_u$ to $2$ (notice that $u$ remains open w.r.t the new $\beta$).
	\item If a marked vertex $u$ becomes closed (from being open), then we change $\beta_u$ to $1$ (notice that $u$ remains closed w.r.t the new $\beta$).
\end{itemize}
We call the 4 types of events above as marking, unmarking, opening and closing events.  

Now we talk about the order the events happen. When we increase $N_u$ values of ancestors of $v$ continuously, one of the following two events may happen:
\begin{itemize}
	\item The highest unmarked ancestor $u$ of $v$ may become globally lowest marked, and this may \emph{induce} a closing event for the parent $w$ of $u$.
	\item The lowest marked ancestor $u$ of $v$ may become open.
\end{itemize}
Similarly, when we decrease $N_u$ values of ancestors of $v$ continuously, one of the following two events may happen:
\begin{itemize}
	\item The lowest marked ancestor $u$ of $v$ may become unmarked (we must that $u$ was lowest marked globally), and this may \emph{induce} an opening event for the parent $w$ of $u$.
	\item The lowest marked ancestor $u$ of $v$ may become closed. 
\end{itemize}
Above, if two events happen at the same time, we handle an arbitrary event. Notice that after we handle the event, the conditions for the other event might not hold any more, in which case we do not handle it.

Once we have finished the process of increasing or decreasing $N_u$ values by 1, the clients will be connected to their respective nearest open facilities, breaking ties using the consistent rule.  A reconnection happens if a client is connected to a different facility. 

\paragraph{Bounding Number of Reconnections} Now we analyze the reconnections made in the algorithm.  When a client at $v \in X$ arrives or departs, at most $O(\log D)$ vertices $u$ will have their $N_u$ values changed by $1$. We distribute 4 tokens to each ancestor $u$ of $v$, that are of type-A, type-B, type-C and type-D respectively.\footnote{The types are only defined for convenience.}  We are going to use these tokens to charge the events happened.


First focus on the  sequence of marking/unmarking events happened at a vertex $u$.  Right before $u$ becomes unmarked we have $N_u \leq f_u/(2 \times 2^{\level(u)})$ since at the moment we have $\alpha_u = 2$.  Immediate after that $\alpha_u$ is changed to $1$. For $u$ to become marked again, we need $N_u \leq f_u/2^{\level(u)}$. So during the period $N_u$ must have been increased by at least $f_u/(2 \times 2^{\level(u)})$.  Similarly, right before $u$ becomes marked we have $N_u \geq f_u/2^{\level(u)}$ since at the moment  we have $\alpha_u = 1$. Then we change $\alpha_u$ to $2$ immediately. For $u$ to become unmarked again, $N_u$ should be decreased by at least $ f_u/(2\times2^{\level(u)})$.  So, when a marking/unmarking event happens at $u$, we can spend $\Omega(f_u/2^{\level(u)})$ type-A tokens owned by $u$.

Then we focus on the sequence $\calS$ of opening/closing events at $u$ between two adjacent marking/unmarking events at $u$. At these moments, $u$ is marked and $\alpha_u = 2$.  For the first event in $\calS$, we can spend $\Omega(f_u/2^{\level(u)})$ type-B tokens owned by $u$.  If some opening/closing event $e$ in $\calS$ is induced by an unmarking/marking event of some child $u'$ of $u$, then we can spend $\Omega(f_{u'}/2^{\level(u')}) \geq \Omega(f_u/2^{\level(u)})$ type-C tokens owned by $u'$ for $e$, and the event $e'$ after $e$ in $\calS$ if it exists. Notice that we already argued that $u'$ has collected enough number of type-C tokens.  

Then we focus on an event $e'$ in $\calS$ such that both $e$ and the event $e$ before $e'$ in $\calS$ are not induced. First, assume $e$ is an opening event and $e'$ is a closing event. Then, after $e$ we have $\sum_{u' \in \Lambda_u: u' \text{ unmarked}} N_{u'} = f_u/(2 \times 2^{\level(u)})$ and before $e'$ we have $\sum_{u' \in \Lambda_u: u' \text{ unmarked}} N_{u'} = f_u/(4 \times 2^{\level(u)})$.   Notice that the set of unmarked children of $u$ may change, and let $U'$ and $U''$ be the sets of unmarked children of $u$ at the moments after $e$ and before $e'$ respectively. Again if there is some $u' \in (U' \setminus U'') \cup (U'' \setminus U')$, we spend $\Omega(\frac{f_{u'}}{2^{\level(u')}}) \geq \Omega(\frac{f_u}{2^{\level(u)}})$ type-C tokens owned by $u'$.  Otherwise, $U = U'$ and $f_u/(4\times 2^{(\level(u))})$ clients in $T_u$ must have departed between $e$ and $e'$ and we can then spend $\Omega(f_u/2^{\level(u)})$ type-D tokens for $e'$.  The case when $e$ is an closing event and $e'$ is an opening event can be argued in the same way. 

Thus, whenever an event happens at $u$, we can spend $\Omega(f_u/2^{\level(u)})$ tokens; moreover if an opening/closing event at $u$ was induced by an unmarking/marking event at some child $u'$ of $u$, then we can spend $\Omega(f_{u'}/2^{\level(u')})$ tokens for the event at $u$. A facility $u$ changes its opening status when an event happens at $u$.    Notice that, we reconnect a client  only if it was connected to a ready-to-close facility, or it needs to be connected  to newly open facility.  By Lemma~\ref{lemma:departure-ub-outside}, at any moment the number of clients connected to $u$ from outside $T_u$ is at most $O(\log D)\cdot \frac{f_u}{2^{\level(u)}}$. At the moment $u$ changes its opening status because of an non-induced event, then before and after the event the number of clients connected to $u$ from $T_u$ is of order $O\left(\frac{f_u}{2^{\level(u)}}\right)$.   $u$ changes its opening status due to a marking/unmarking event happened at some child $u'$ of $u$, then before and after the event the number of clients connected to $u$ from $T_u$ is of order $\Theta\left(\frac{f_{u'}}{2^{\level(u')}}\right)$.  Thus, on average, for each token we spent we connect at most $O(\log D)$ clients. Since each client arrival or departure distributes at most $O(\log D)$ tokens, we have that the amortized number of reconnections (per client arrival/departure) is at most $O(\log^2D)$.

\paragraph{Analyzing Update Time} Then with the bound on the number of reconnections (recourse), we can bound the update time easily. Indeed, we can maintain a $\psi_u$ for every $u \in V$, which indicates the nearest open facility to $u$ in $T_u \setminus u$ ($\psi_u$ could be undefined). We also maintain a value $N'_u$ for marked vertices $u$ where $N'_u = \sum_{v \in \Lambda_v, v\text{ unmarked}} N_v$.  Whenever a client at $v$ arrives or departs, we need to change $\alpha_u, \beta_u, N_u, N'_u, \psi_u$, marking and opening status of $u$ only for ancestors $u$ of $v$. The update can be made in $O(\log D)$ time for every client arrival or departure using the information on the vertices. The bottleneck of the algorithm comes from reconnecting clients. We already argued that the amortized number of reconnections per client arrival/departure is $O(\log^2D)$ and thus it suffices to give an algorithm that can find the clients to be connected efficiently.  

For every vertex $u$, we maintain a double-linked-list of unmarked children $u'$ of $u$ with $N_{u'} \geq 1$.  With this structure it is easy to see that for every client that needs to be reconnected, we need $O(\log D)$ time to locate it.  If $u$ becomes open, we need to consider each unmarked children $u'$ of $u$ and reconnect clients in $T_{u'}$ to $u$. The time needed to locate these clients can be made $O(\log D)$ times the number of clients.   For every strict ancestor $w$ of $u$ for which there are no open facilities in between we can use the $\psi_w$ information to see if we need to reconnect clients in $T_w$. If yes, then for every unmarked child $w'$ of $w$ with $N_{w'} \geq 1$ that is not an ancestor of $u$, we need to connect the clients in $T_{w'}$ to $u$. Again enumerating these clients takes time $O(\log D)$ times the number of clients.  Similarly, if $u$ becomes closed, we then need to connect all clients connected to $u$ to the nearest open facility to $u$, which can be computed using $\psi$ values of $u$ and its ancestors.  Enumerating the clients takes time $O(\log D)$ times the number of clients.  Overall, the amortized running time per client arrival/departure is $O(\log ^3D)$.
\section{Open Problems and Discussions} \label{sec:discussions}

We initiated the study of facility location problem in general metric spaces in recourse and dynamic models.
Several interesting problems remain open: The most obvious one is can we get $O(1)$-competitive online/dynamic algorithms
with polylog amortized recourse or fast update times in the fully dynamic setting.  Another interesting direction is can we extend our results to the capacitated facility location and capacitated $k$-median, where there is an upper bound on the number of clients that can be assigned to a single open facility.
From technical point of view, it would be interesting to find more applications of local search and probabilistic tree embedding techniques in the dynamic algorithms model.
Finally, as alluded in the introduction, an exciting research direction is to understand the power of recourse in the online model.
	\bibliographystyle{plain}
\bibliography{reflist} 
	\appendix
	\section{Analysis of Offline Local Search Algorithms for Facility Location}
\label{appendix:local-search}

In this section, we prove theorems related to the local search algorithms for facility location.
\subsection{Local Search for facility location}
\uflapprox*
\begin{proof}
	This proof is almost identical to the analysis of the $\alpha_FL$-approximation local search algorithm for facility location, except we take $\phi$ into consideration in all the inequalities. Eventually we shall have an $\alpha_FL|C|\phi$ term on the right side of the inequality.
		
	Formally, we let $(S^*, \sigma^*)$ be the optimum solution to facility location instance. Focus on an $i^* \in S^*$. Since there is no $\phi$-efficient operation that opens $i^*$ (recall that we can open $i^*$ even if we have $i\in S^*$), we have
	   \begin{align*}
	     \sum_{j \in \sigma^{*-1}(i^*) }d(j,\sigma_j) \le  \lambda f_{i^* } \cdot 1_{i^*\notin S}
	     + \sum_{j \in \sigma^{*-1}(i^*)} (d(j,i^*) + \phi)
	   .\end{align*}
	   This implies
	   \begin{align}
	     \sum_{j \in \sigma^{*-1}(i^*) } d(j,\sigma_j) \le  \lambda f_{i^*} 
	     + \sum_{j \in \sigma^{*-1}(i^*)} d(j,i^*) + |\sigma^{*-1}(i^*)|\phi. \label{inequ:ufl-open}
	   \end{align}
	Summing the inequalities over all $i^* \in S^*$ gives us
	\begin{align}
		\cc(\sigma) \leq  \lambda f(S^*) + \cc(\sigma^*) + |C|\phi. \label{inequ:ufl-C}
	\end{align}
	
	For every $i \in S$, let $\psi(i)$ be the nearest facility in $S^*$ to $i$.  For every $i^* \in S^*$ with $\psi^{-1}(i^*) \neq \emptyset$, let $\psi^*(i^*)$ be the nearest facility in $\psi^{-1}(i^*)$ to $i^*$.   
	
	Focus on some $i \in S, i^* = \psi(i)$ such that $\psi^*(i^*) = i$. The operation that swaps in $i^*$, swaps out $i$ and connects $\sigma^{*-1}(i^*) \cup \sigma^{-1}(i)$ to $i^*$ is not $\phi$-efficient. This implies
	  \begin{align*}
	     &\quad  \lambda f_i + \sum_{j \in \sigma^{*-1}(i^*) \cup \sigma^{-1}(i)} d(j,\sigma_j) \\
	     & \le   \lambda f_{i^*} + \sum_{j \in \sigma^{*-1}(i^*)}d(j, i^*) 
	     + \sum_{j \in \sigma^{-1}(i) \setminus \sigma^{*-1}(i^*)}d(j,  i^*)
	     + \big|\sigma^{*-1}(i^*) \cup \sigma^{-1}(i)\big|\phi \\
	     & \le  \lambda f_{i^*} + \sum_{j \in \sigma^{*-1}(i^*)} d(j,i^*) +
	     \sum_{j \in \sigma^{-1}(i) \setminus \sigma^{*-1}(i^*)}
	     [d(j,\sigma^*(j)) + 2d(j,i)]
	     + \big|\sigma^{*-1}(i^*) \cup \sigma^{-1}(i)\big|\phi.
	   \end{align*}
	   To see the second inequality, notice that $d(j, i^*) \leq d(j, i) + d(i, i^*) \leq d(j, i) + d(i, \sigma^*(j)) \leq 2d(j, i) + d(j, \sigma^*(j))$. Canceling  $\sum_{j \in \sigma^{-1}(i) \setminus \sigma^{*-1}(i^*)} d(j, i)$ on both sides and relaxing the right side a bit gives us
	   \begin{align}
			\quad  \lambda f_i + \sum_{j \in \sigma^{*-1}(i^*)} d(j,\sigma_j)&\leq  \lambda f_{i^*} + \sum_{j \in \sigma^{*-1}(i^*)}d(j, i^*)  + \big|\sigma^{*-1}(i^*) \cup \sigma^{-1}(i)\big|\phi \nonumber\\
			 &+ \sum_{j \in \sigma^{-1}(i)} \left(d(j,i) + d(j, \sigma^*(j)))\right). \label{inequ:ufl-swap}
	   \end{align}
	   Notice that it could happen that $i = i^*$ in the above setting; the inequality was implied by the operation that opens $i = i^*$ and connects $\sigma^{*-1}(i^* = i)$ to $i$. 
	   
	   Now, focus on a $i \in S$ with $\psi^*(\psi(i)) \neq i$.  Then closing $i$ and connecting each client in $j \in \sigma^{-1}(i)$ to $\psi^*(\sigma^*(j)) \neq i$  is not $\phi$-efficient.  So, we have 
	   \begin{align*}
	      \lambda f_i + \sum_{j \in \sigma^{-1}(i)} d(j,i) &\le  \lambda f_i+ \sum_{j \in \sigma^{-1}(i)}d(j, \psi^*(\sigma^*(j)) )
	     + \big|\sigma^{-1}(i)\big|\phi \\
	     & \le 
	     \sum_{j \in \sigma^{-1}(i)}
	     [2d(j,\sigma^*(j)) + d(j,i)]
	     + \big|\sigma^{-1}(i)\big|\phi. 
	   \end{align*}
	   To see the inequality, we have $d(j, \psi^*(\sigma^*(j))) \leq d(j, \sigma^*(j)) + d(\sigma^*(j), \psi(\sigma^*(j))) \leq d(j, \sigma^*(j)) + d( \sigma^*(j), i) \leq 2d(j, \sigma^*(j)) + d(j, i)$.
	   This implies 
	   \begin{align}
	   	 \lambda f_i \leq 2\sum_{j \in \sigma^{-1}(i)}
	   	     d(j,\sigma^*(j)) 
	   	     + \big|\sigma^{-1}(i)\big|\phi.  \label{inequ:ufl-close}
	   \end{align}
	   
	   Now, consider the inequality obtained by summing up \eqref{inequ:ufl-swap} for all pairs $(i, i^*)$ with $i^* = \psi(i)$ and $\psi^*(i^*) = i$, \eqref{inequ:ufl-close} for all $i$ with $\psi^*(\psi(i)) \neq i$, and \eqref{inequ:ufl-open} for all $i^*$ with $\psi^{-1}(i^*) = \emptyset$. This inequality will be $ \lambda f(S) + \cc(\sigma) \leq  \lambda f(S^*) + 2\cc(\sigma^*) + \cc(\sigma) + 2|C|\phi$, which is 
	   \begin{align}
	   		 \lambda f(S) \leq  \lambda f(S^*) + 2\cc(\sigma^*) + 2|C|\phi. \label{inequ:ufl-F}
	   \end{align}
	   Summing up Inequalities~\eqref{inequ:ufl-C} and $1/\lambda$ times \eqref{inequ:ufl-F} gives $f(S) + \cc(\sigma) \leq (1+\lambda) f(S^*) + (1+2/\lambda)\left(\cc(\sigma^*) + |C|\phi\right) = \alpha_FL\left(\opt + |C|\phi\right)$, since $1 + \lambda = 1+2/\lambda = 1+\sqrt{2}=\alpha_FL$. This finishes the proof of Theorem~\ref{thm:FL-offline-apx-ratio}.
  \end{proof}
  
\ufloperations*

The theorem follows from the proof of Theorem~\ref{thm:FL-offline-apx-ratio}. Let $\phi = 0$ in the theorem statement and the proof.  \eqref{inequ:ufl-C} and \eqref{inequ:ufl-F} were obtained by adding many of the inequalities of the form \eqref{inequ:ufl-open}, \eqref{inequ:ufl-swap} and \eqref{inequ:ufl-close}. 
Notice that each inequality corresponds to a local operation. In the setting for Theorem~\ref{thm:FL-offline-operations}, the inequalities do not hold anymore since we do not have the condition that $0$-efficient operations do not exist. However for an inequality correspondent to an operation $\textrm{op}$,  we can add $\nabla_\text{op}$ to the right side so that the inequality becomes satisfied.  Then adding all the inequalities that were used to obtain \eqref{inequ:ufl-C},  we obtain
\begin{align*}
	\cc(\sigma) \leq \lambda f(S^*) + \cc(\sigma^*) + \sum_{\textrm{op} \in \calP_\rmC} \nabla_{\textrm{op}}
\end{align*}
where $\calP_\rmC$ is the set of operations correspondent to the inequalities. Similarly we can obtain a set $\calP_\rmF$ of operations, such that
\begin{align*}
	\lambda f(S) \leq \lambda f(S^*) + 2\cc(\sigma^*) + \sum_{\textrm{op} \in \calP_\rmF} \nabla_{\textrm{op}}.
\end{align*}

It is easy to check that each of $\calP_\rmC$ and $\calP_\rmF$ contains at most 1 operation opens or swaps in $i^*$, for every $i^* \in S^* \subseteq f$ and does not contain operations that open or swap in facilities outside $S^*$. $\calP_\rmC \uplus \calP_\rmF$ contains at most $|S| \leq |F|$ close operations. Rewriting the two inequalities almost gives us Theorem~\ref{thm:FL-offline-operations}, except for the requirement that each $\textrm{op} \in \calP_\rmC \cup \calP_\rmF$ has $\nabla_{\mathrm{op}} > 0$; this can be ensured by removing $\textrm{op}$'s with $\nabla_{\textrm{op}} \leq 0$ from $\calP_\rmC$ and $\calP_\rmF$.

\section{Proofs of Useful Lemmas} \label{appendix:helper-proofs}
\helpersumba*
\begin{proof} Define $a_{T+1} = +\infty$.
	\begin{align*}
		\sum_{t = 1}^T \frac{b_t}{a_t} &= \sum_{t = 1}^T \frac{B_t - B_{t-1}}{ a_{t}}=\sum_{t = 1}^{T} B_t \left(\frac{1}{a_t} - \frac{1}{a_{t+1}}\right) = \sum_{t = 1}^{T}\frac{B_t}{a_t} \left(1 - \frac{a_t}{a_{t+1}}\right) \leq \alpha \sum_{t = 1}^{T}\left(1 - \frac{a_{t}}{a_{t+1}}\right)\\
		&=\alpha T- \alpha\sum_{t = 1}^{T-1}\frac{a_t}{a_{t+1}} \leq \alpha T- \alpha(T-1)\Big(\frac{a_1}{a_T}\Big)^{1/(T-1)} \\
		&=  \alpha(T-1)\left(1-e^{-\ln\frac{a_T}{a_1}/(T-1)}\right) + \alpha \leq \alpha(T-1)\ln\frac{a_T}{a_1}/(T-1) + \alpha = \alpha\left(\ln \frac{a_T}{a_1}+1\right).
	\end{align*}
	The inequality in the second line used the following fact: if the product of $T-1$ positive numbers is $\frac{a_1}{a_T}$, then their sum is minimized when they are equal.  The inequality in the third line used that $1-e^{-x} \leq x$ for every $x$.
\end{proof}

\helperstar*
\begin{proof}
	By the conditions in the lemma, opening facility $i$ and reconnecting  $\tilde C$ to $i $ is not $\phi$-efficient. This gives that at the moment, we have
	\[
	    \sum_{\tilde j \in \tilde C}d(\tilde j, S) \leq \sum_{\tilde j \in \tilde C}d(\tilde j, \sigma_{\tilde j})
	    \leq  f_i + \sum_{\tilde j \in \tilde C} d(i , \tilde j)
	    + |\tilde C|\cdot \phi
	      \]
	By triangle inequalities we have $ d(\tilde j, S) \ge d(i, S) - d(i, \tilde j )$ for every $\tilde j \in \tilde C$. Combining with the previous inequality yields:
	\begin{align*}
		d(i, S) \le  \frac{1}{|\tilde C|}\sum_{\tilde j \in \tilde C}\left(d(\tilde j, S) + d(i, \tilde j)\right) \leq \frac{ f_i + 2\sum_{\tilde j \in \tilde C} d(i, \tilde j) }{|\tilde C|}+ \phi. \hspace*{80pt} \qedhere
	\end{align*}
\end{proof}

\section{Moving Clients to Facility Locations}
\label{appendix:moving-clients}
	In this section we show that by moving clients to their nearest facilities, we lose a multiplicative factor of $2$ and an additive factor of $1$ in the approximation. That is, an $\alpha$ approximate solution for the new instance, is $2\alpha+1$ approximate for the original instance. Throughout this section, we simply use the set of open facilities to define a solution and all clients are connected to their respective nearest open facilities.
	
	Let a facility location instance be given by $F, (f_j)_{j \in C}, C$ and $d$. Let $\psi_j$ be the nearest facility in $F$ to $j$ for every $j \in C$.  By moving all clients $j$ to $\psi_j$, we obtain a new instance. Let $S^*$ be the optimum solution to the original instance. Suppose we have an solution $S$ for the new instance that is $\alpha$-approximate solution. Thus $f(S) + \sum_{j \in C}d(\psi_j, S) \leq \alpha\left(f(S^*) + \sum_{j \in C}d(\psi_j, S^*)\right)$. We show that $S$ is $2\alpha+1$ approximate for the original instance.
	
	Notice that for every $j \in C$, we have $d(j, S) - d(j, \psi_j ) \leq  d(\psi_j, S) \leq d(j, S) + d(j, \psi_j)$ by triangle inequalities.
	\begin{align*}
		f(S) + \sum_{ j \in C} d(j, S)  &\leq f(S) + \sum_{ j \in C} \left(d(\psi_j, S) + d(j, \psi_j)\right) \\
		&\leq \alpha\left(f(S^*) + \sum_{j \in C}d(\psi_j, S^*)\right) + \sum_{j \in C} d(j, \psi_j)
	\end{align*}
	For every $j \in C$, since $\psi_j$ is the nearest facility in $F$ to $j$, we have $d(\psi_j, S^*) \leq d(j, \psi_j) + d(j, S^*) \leq 2d(j, S^*)$. Thus, we have 
	\begin{align*}
		f(S) + \sum_{ j \in C} d(j, S) &\leq \alpha f(S^*) + 2\alpha\sum_{j \in C}d(j, S^*) + \sum_{j \in C} d(j, \psi_j)\\
		&\leq \alpha f(S^*) + (2\alpha + 1)\sum_{j \in C}d(j, S^*).
	\end{align*}
	Thus, we have that $S$ is a $(2\alpha+1)$-approximate solution for the original instance.

\section{Missing Proofs from Section~\ref{sec:fast-UFL}}
\samplelocalsearch*
\begin{proof}
		 We are going to lower bound the expected value of $\cost_\lambda(S^0, \sigma^0) - \cost_\lambda(S^1, \sigma^1)$.  By Theorem~\ref{thm:FL-offline-operations}, there are two sets $\calP_\rmC$ and $\calP_\rmF$ of local operations satisfying the properties.  Below, we let $\calQ$ be one of the following three sets: $\calP_\rmC$, or $\calP_\rmF$, or $\calP_\rmC \biguplus \calP_\rmF$.
		 
		 For every $i \in F$, let $\calQ_{i}$ be the set of operations in $\calQ$ that open or swap in $i$. Let $\calQ_{\emptyset}$ be the set of $\close$ operations in $\calQ$.  Let $\Phi_i$ be maximum of $\nabla_{\mathsf{op}}$ over all $\mathsf{op} \in \calQ_{i}$ (define $\Phi_i = 0$ if $\calQ_{i} = \emptyset$); define $\Phi_\emptyset$ similarly. Notice that if $i \in S$ then open $i$ will not decrease the cost since we maintain that all the clients are connected to their nearest open facilities. Thus, $\calQ_{i} = \emptyset$ for $i \in S$.  Then, conditioned on that we consider $\close$ operations in $\mathsf{sampled\mhyphen local\mhyphen search}$, the cost decrement of the iteration is at least $\Phi_\emptyset$. Conditioned on that we consider opening or swapping in $i$ in the iteration, the decrement is at least $\Phi_i$. Thus, 
		  $\cost_\lambda(S^0, \sigma^0) - \E[\cost_\lambda(S^1, \sigma^1)] \geq \frac{\Phi_\emptyset}{3} + \sum_{i \in F\setminus S}\frac{2\Phi_i}{3|F \setminus S|}$.  Therefore,
		 \begin{align*}
		 		\sum_{\mathsf{op} \in \calQ}\nabla_{\mathsf{op}} &\leq |\calQ_\emptyset|\Phi_\emptyset + \sum_{i \in F \setminus S}|\calQ_i|\Phi_i \leq  |F|\Phi_\emptyset + 2\sum_{i \in F \setminus S}\Phi_i \\
		 		&\leq 3 |F|(\cost_\lambda(S^0, \sigma^0) - \E[\cost_\lambda(S^1, \sigma^1)]),
		 \end{align*}
		 since the third and fourth properties in the theorem imply $|\calQ_\emptyset| \leq  |F|$ and $|\calQ_i| \leq 2$ for every $i \in F \setminus S$.  Replacing $\calQ$ with each of $\calP_\rmC$, $\calP_\rmF$ and $\calP_\rmC \biguplus \calP_\rmF$, we obtain 
		\begin{align*}
			\cost_\lambda(S^0, \sigma^0) - \E[\cost_\lambda(S^1, \sigma^1)] \geq \frac1{3 |F|}\max\left\{
				\begin{array}{c}
					\cc(\sigma^0) - (\lambda f(S^*) + \cc(\sigma^*))\\
					\lambda f(S) - (\lambda f(S^*) + 2\cc(\sigma^*))\\
					\cost_\lambda(S^0, \sigma^0) - (2\lambda f(S^*) + 3\cc(\sigma^*))
				\end{array}
			\right\}.
		\end{align*}
		This finishes the proof of the lemma. 
\end{proof}

\fliterate*
\begin{proof}
	We break the procedure in two stages. The first stage contains $M_1 = O\left( |F|\log\frac{\Gamma}{\epsilon'}\right)$ iterations of the for-loop in $\mathsf{FL\mhyphen iterate}(M)$, where $M_1$ is sufficiently large.  Applying Lemma~\ref{lemma:sample-local-search} and using the third term in the $\max$ operator, for any execution of $\mathsf{sampled\mhyphen local\mhyphen search}$, we have	 
	\begin{align*}
		&\quad \E\big[\big(\cost_\lambda(S^1, \sigma^1) - (2\lambda f(S^*) + 3\cc(\sigma^*))\big)_+\big] \\
		&\leq \left(1- \frac{1}{3 |F|}\right)\big(\cost_\lambda(S^0, \sigma^0) - (2\lambda f(S^*) + 3\cc(\sigma^*))\big)_+,
	\end{align*}
	where $(S^0, \sigma^0)$ and $(S^1, \sigma^1)$ are as defined w.r.t the execution, and $x_+$ is defined as $\max\{x ,0\}$ for every real number $x$.  Notice that when $\cost_\lambda(S^0, \sigma^0) \leq 2\lambda f(S^*) + 3\cc(\sigma^*)$, the inequality holds trivially.  Truncating at $0$ is needed later when we apply the Markov inequality. 
	
	So, after $M_1$ iterations, we have 
	\begin{align*}
		&\quad \E\big[\big(\cost_\lambda(S, \sigma) - (2\lambda f(S^*) + 3\cc(\sigma^*))\big)_+\big] \\
		&\leq \left(1- \frac{1}{3 |F|}\right)^{M_1}\big(\cost_\lambda(S^\circ, \sigma^\circ) - (2\lambda f(S^*) + 3\cc(\sigma^*))\big)_+\leq \frac{\epsilon'}{2\Gamma}\opt.
	\end{align*}
	The second inequality holds since $\cost_\lambda(S^\circ, \sigma^\circ) \leq \lambda\cost(S^\circ, \sigma^\circ) \leq O(1)\opt$ and $M = O\left(\frac{|F|}{\epsilon'}\log \Gamma\right)$ is sufficiently large. Using Markov's inequality, with probability at least $1-\frac{1}{2\Gamma}$, we have at the end of the first stage, $$(\cost_\lambda(S, \sigma) - (2\lambda f(S^*) + 3\cc(\sigma^*)))_+ \leq \epsilon'\cdot\opt.$$ If the event happens, we say the first stage is successful. 
	
	We assume the first stage is successful and analyze the second stage. The second stage contains $\log_2(2\Gamma)$ phases, and each phase contains $\frac{48 |F|}{\epsilon'}$ iterations.  We focus on one phase in the stage.
	Assume that at the beginning of an iteration in the phase, we have 
	\begin{align*}
		\cc(\sigma) \leq \big(\lambda + \frac{\epsilon'}2\big) f(S^*) + \big(1+\frac{\epsilon'}2\big)\cc(\sigma^*) \text{ and } \lambda f(S) \leq \big(\lambda + \frac{\lambda\epsilon'}2\big) f(S^*) + \big(2+\frac{\lambda\epsilon'}2\big)\cc(\sigma^*).
	\end{align*}
	Then at the moment, we have $\cost(S, \sigma) \leq (1 + \lambda + \epsilon')f(S^*) + (1+2/\lambda  + \epsilon')\cc(\sigma^*) = (\alpha_\FL + \epsilon')\opt$ (obtained by adding the first inequality and $1/\lambda$ times the second inequality). Then we must have $\cost(S^{\mathsf{best}}, \sigma^{\mathsf{best}}) \leq (\alpha_\FL + \epsilon')\opt$ in the end of this execution of $\mathsf{FL\mhyphen iterate}$  since $(S^{\mathsf{best}}, \sigma^{\mathsf{best}})$ is the best solution according to the original (i.e, non-scaled) cost. 
	
	Thus, we say a phase in the second stage is successful if both inequalities hold at the end of some iteration in the phase; then we can pretend that the phase ends at the moment it is successful. If one of the two inequalities does not hold at the end of an iteration, then by Lemma~\ref{lemma:sample-local-search}, for the execution of $\mathsf{sampled\mhyphen local\mhyphen search}$ in the next iteration, we have $\cost_\lambda(S^0, \sigma^0) - \E[\cost_{\lambda}(S^1, \sigma^1)] \geq \frac{\epsilon'}{6 |F|}(f(S^*) + \cc(\sigma^*)) = \frac{\epsilon'}{6 |F|}\opt$. Then, by stopping times of martingales, in expectation, the phase stops in at most $\frac{24 |F|}{\epsilon'}$ iterations since at the beginning of the phase we have $\cost_\lambda(S, \sigma) \leq \max\{3+\epsilon', 2\lambda+\epsilon'\}(f(S^*) + \cc(\sigma^*)) \leq 4\cdot\opt$ and $\cost_\lambda(S, \sigma)$ is always positive.  By Markov's inequality, the probability that the phase does not stop early (i.e, is not successful) is at most $1/2$.  The probability that the second stage succeeds, i.e, at least one of its phases succeeds is at least $1-1/(2\Gamma)$.  Thus with probability at least $1-1/\Gamma$, both stages succeed and we have $\cost(S^\best, \sigma^\best) \leq (\alpha_\FL + \epsilon')\opt$. The number of iterations we need in the two stages is $O\left(\frac{ |F|}{\epsilon'}\log \Gamma\right)$.
\end{proof}

\end{document}